\documentclass[11pt]{article}
\usepackage{amsmath}
\usepackage{amssymb}
\usepackage{amsthm}
\usepackage{authblk}

\usepackage[final]{hyperref}
\usepackage[x11names]{xcolor}
\usepackage[modulo]{lineno}

\usepackage{fullpage}
\usepackage{framed}
\usepackage{amsfonts, latexsym, color, paralist, cancel,ifdraft}
\usepackage[final]{graphicx}
\ifdraft{\usepackage{showlabels}}{}
\ifdraft{\linenumbers}{}
\usepackage[refpage]{nomencl}

\hypersetup{
  colorlinks   = true, 
  urlcolor     = blue, 
  linkcolor    = {blue!55!black}, 
  citecolor   = red 
}

\topmargin 0pt
\advance \topmargin by -\headheight
\advance \topmargin by -\headsep
\textheight 8.9in
\oddsidemargin -1cm
\evensidemargin \oddsidemargin
\marginparwidth 0.5in
\textwidth 19cm

\newcommand\numberthis{\addtocounter{equation}{1}\tag{\theequation}}

\newcommand{\abs}[1]{\left\vert #1 \right\vert}
\newcommand{\alg}{\mathcal{A}}
\newcommand{\bra}[1]{\left\langle #1 \right\vert}

\newcommand{\floor}[1]{\lfloor #1 \rfloor}
\newcommand{\func}{f}
\newcommand{\ham}{\mathcal{H}}

\newcommand{\ket}[1]{\left\vert #1 \right\rangle}

\newcommand{\lang}{\mathcal{L}}
\newcommand{\norm}[1]{\left\| #1 \right\|}
\newcommand{\poly}[1]{\mathrm{poly(} #1 \mathrm{)}}
\newcommand{\prob}[1]{\Pr[ #1 ]}
\newcommand{\tr}[1]{tr\left( #1 \right)}

\newcommand{\refeq}[1]{(\ref{#1})}
\newcommand{\pBQP}{\mathrm{PostBQP}}
\newcommand{\BQP}{\mathrm{BQP}}
\newcommand{\BPP}{\mathrm{BPP}}
\newcommand{\IP}{\mathrm{IP}}
\newcommand{\PP}{\mathrm{PP}}
\newcommand{\MIP}{\mathrm{MIP}}
\newcommand{\NEXP}{\mathrm{NEXP}}
\newcommand{\PSPACE}{\mathrm{PSPACE}}
\newcommand{\p}{\mathrm{P}}
\newcommand{\QMA}{\mathrm{QMA}}
\newcommand{\PsP}{\p^{\#\p}}
\newcommand{\sP}{\#\p}
\newcommand{\PPP}{\p^{\PP}}
\newcommand{\PpBQP}{\p^{\pBQP}}

\parindent 0in
\parskip 1.5ex

\theoremstyle{plain}
\newtheorem{theorem}{Theorem}[section]
\newtheorem{theorem*}{Theorem}
\newtheorem{lemma}{Lemma}
\newtheorem{claim}{Claim}
\newtheorem{subclaim}{Claim}[claim]

\theoremstyle{definition}
\newtheorem{defn}{Definition}[section]

\theoremstyle{remark}

\begin{document}

\title{A Quantum inspired proof of $\PsP \subseteq \IP$}

\author{Dorit Aharonov, Ayal Green} 
\affil{School of Computer Science and Engineering, The Hebrew University, Jerusalem 91904,  Israel}
\maketitle

\abstract{We provide a new, quantum inspired, proof for the celebrated 
claim of \cite{LFKN92} that $\PsP\subseteq \IP$. The protocol 
is fundamentally different from the original sum-check 
protocol of \cite{LFKN92, Sha92}, as well as from variants of this proof 
\cite{GKR08, Mei09}, 
though it still possesses the overall structure of inductively checking 
consistency between subsequent step.   
The protocol is inspired by \cite{AAV13}. 
Hopefully, this protocol will be helpful in 
making progress towards two major open problems: resolving the 
quantum PCP question, 
and verification of quantum computations using a classical $\BPP$ 
verifier. Given the historical importance of
the sum-check protocol in classical computational complexity, 
we hope that the new protocol will also find applications in 
classical complexity theory. 

\section{Introduction}
The \textit{sum-check} protocol, 
introduced in the celebrated $\PsP \subseteq \IP$ and the $\IP=\PSPACE$ 
results \cite{LFKN92,Sha92},  
played a key role 
in computational complexity, most notably 
leading to the proof that $\MIP=\NEXP$ \cite{BFL91}
and then to the (original, algebraic) proof of the celebrated PCP 
Theorem \cite{ALM+98, AS98}.  
Another important example application 
comes from cryptography: secure delegation of 
computations (initiated in \cite{GKR08}), in which the honest server is a 
$\BPP$ machine, but the user (verifier) is much weaker computationally. 

Deriving quantum versions of both of these major results remains open. 
The quantum PCP conjecture \cite{AALV09, AAV13}
remains unresolved despite many recent 
advancements (in both directions), see  
e.g. \cite{BH13, FH13, EH15, FV15, Ji16a, Ji16b, NV16}. 
It is considered a major problem in the field of Hamiltoniam 
Complexity. Interestingly, yet another holy grail in quantum complexity 
has to do with the delegation of quantum computation.   
Here the goal is slightly different than its 
classical counterpart. The question is whether a $\BPP$ verifier 
can delegate polynomial time quantum computations (namely 
computations in $\BQP$) to an honest $\BQP$ prover (like in the classical case,
security is required against dishonest provers with unbounded 
computational power). 
The initial protocols for the problem \cite{ABOE10, FK12}
allowed the $\BPP$ verifier to process constantly many quantum bits; 
However despite much follow-up work on related questions (see, e.g., 
\cite{RUV13, FM16, ABOEM17} it is still open whether verification 
can be done with a single prover, when the verifier is entirely classical, 
namely a $\BPP$ machine.  
 
The fact that these two central questions remain open 
despite increasing interest and research, and although the 
respective classical results are well known, is no coincidence. 
There seem to be some inherent obstacles towards 
quantum generalizations of classical interactive proof techniques (see, e,g, 
\cite{AALV09, AAV13}). In particular, these tools 
rely heavily on {\it local} tests, whereas the correlations in  
quantum states are highly non-local and cannot be addressed locally.   
We are thus motivated to develop alternative tools for 
interactive protocols, which are more natural to 
the quantum setting. In this work we revisit the fundamental 
sum-check protocol of \cite{LFKN92, Sha92}, 
and provide an alternative protocol for the
 task; we believe the new protocol is better adapted to the quantum world, and in particular possesses 
an inherent tensor product structure.   
We stress that the resulting theorem itself, namely 
a rederivation of \cite{LFKN92}, is entirely classical, and in particular   
neither the verifier nor the prover in the protocol need be quantum.

\subsection{Our Results} 
For two complexity classes  $\mathcal{V}, \mathcal{P}$, 
denote the complexity class $\IP[\mathcal{V}, \mathcal{P}]$ to be (informally)
the set of all languages for which there exists an interactive proof
protocol between a verifier which has computational power $\mathcal{V}$ and a prover, such that the protocol is complete even if the prover is 
computationally limited to $\mathcal{P}$, but where the soundness holds regardless of any computational assumption on the prover. 

Our basic protocol works in the quantum setting; it 
works for problems in $\BQP$ (in fact, in $\pBQP$. Recall that 
roughly, this class is the class of problems which can be 
solved by quantum computers which are allowed to project 
some of the qubits on desired results; see Section \ref{sec:PostBQP}
for exact definition). The verifier is a $\BPP$ machine, 
and the honest prover is required to be $\pBQP$. 
We derive an interactive protocol which proves: 
\begin{theorem}\label{thm:maintheorem1} $\pBQP\in \IP[\BPP, \pBQP]$ \label{thm:thm1a}. Furthermore, the $\IP[\BPP, \pBQP]$ protocol has completeness 1.
\end{theorem}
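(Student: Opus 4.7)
The plan is to construct an interactive protocol for a $\pBQP$ language by exploiting the layered structure of a quantum circuit in the spirit of the Feynman--Kitaev history state construction, via an inductive consistency check between consecutive time steps.

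First I would reduce the $\pBQP$ decision problem to verifying a specific amplitude. By the definition of $\pBQP$, any such decision problem reduces to comparing two expressions of the form $p=\|\Pi U|0^n\rangle\|^{2}=\langle 0^n|U^\dagger \Pi U|0^n\rangle$, where $U=U_T\cdots U_1$ is a polynomial-size circuit of local unitaries and $\Pi$ is a standard-basis projector. So the core task becomes: given a claim of the form ``$\alpha=\langle x|U_T\cdots U_1|0^n\rangle$'' with $x$ a basis (or, later, succinctly described) state, decide whether the claim is correct.

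Second, I would design an inductive reduction from a claim at time $t$ to a claim at time $t-1$. At round $t$, the prover holds an $\alpha$ claimed to equal $\langle \phi_t|U_t\cdots U_1|0^n\rangle$, for a succinct classical description of a state $\phi_t$ whose support (relative to $x$) is concentrated on a small number of qubits. Using locality, the expansion $\langle\phi_t|U_t=\sum_y \langle\phi_t|U_t|y\rangle\langle y|$ contains only constantly many nonzero terms. The prover sends the sub-amplitudes $\{\beta_y=\langle y|U_{t-1}\cdots U_1|0^n\rangle\}_y$; the verifier checks the linear consistency $\alpha=\sum_y\langle\phi_t|U_t|y\rangle\beta_y$, and then uses public randomness to collapse these sub-claims into a single new claim $\alpha'=\langle \phi_{t-1}|U_{t-1}\cdots U_1|0^n\rangle$ for some random linear combination $\phi_{t-1}$ of the $y$'s, to be passed to the next round. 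After $T$ rounds the residual claim is about $\langle \phi_0|0^n\rangle$, which the $\BPP$ verifier can evaluate directly in polynomial time.

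Completeness $1$ should follow from the fact that an honest $\pBQP$ prover can compute any amplitude of a polynomial-size quantum circuit exactly: since $\pBQP=\PP$ contains $\sP$, the prover can certify each intermediate amplitude truthfully, so an honest execution always passes every linear consistency check.

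The main obstacle is \emph{soundness}. I would need the random-combination step to guarantee that, if any of the sub-amplitudes $\beta_y$ is wrong, the next-round claim is wrong except with small probability over the verifier's randomness; and the cumulative soundness error, after $T$ iterations, must remain bounded away from $1$ (with standard repetition boosting it to anything one wants). This plays the role that the evaluation-of-a-low-degree-polynomial step plays in classical sum-check, but here the objects are amplitudes of a quantum circuit rather than polynomial values, and the tensor-product/local structure of the $U_t$'s must be leveraged so that the combined state $\phi_{t-1}$ stays succinctly describable and the random-combination is genuinely ``error-detecting''. Getting this step right --- in particular, choosing the right algebraic form of the random combination so that soundness does not degrade as $t$ decreases --- is the crux of the proof.
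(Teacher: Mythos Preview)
Your proposal has the right inductive skeleton---peel off one local gate per round, inject verifier randomness, recurse---but it leaves open exactly the step you yourself flag as ``the crux,'' and the paper's resolution of that step is the substance of the proof.

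The paper does not track scalar amplitudes $\langle\phi_t|U_t\cdots U_1|0^n\rangle$; it works with the trace of the top-row matrix $A=\ket{0^n}\bra{0^n}G_T\cdots G_1$. Each round the prover sends the $8\times 8$ \emph{reduced matrix} $M_i=A_i|_{g_{i+1}}$ of $A_i=U_i\cdots U_1\ket{0^n}\bra{0^n}G_T\cdots G_{i+1}$, and the verifier checks $\tr{M'_i}=\tr{M'_{i-1}\cdot g_i^{-1}\cdot u_i}$. The decisive design choice is that the random challenge is a \emph{tensor product} $u_i=u_i^1\otimes u_i^2\otimes u_i^3$ of independent single-qubit unitaries. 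This is precisely the ``algebraic form of the random combination'' you leave unspecified: because products of such $U_i$ remain tensor products of single-qubit operators, the verifier can compute $\tr{U_T\cdots U_1\ket{0^n}\bra{0^n}}$ unaided at the end. In your amplitude picture the analogous requirement is that $\phi_{t-1}$ be a product of random single-qubit kets; without that restriction your assertion that the expansion of $\langle\phi_t|U_t$ has ``only constantly many nonzero terms'' fails as soon as $\phi_t$ stops being a product state, and the description of $\phi_t$ blows up.

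The soundness side is equally where the content lies. One must show that tensor-product randomness on only three qubits still suffices to detect any nonzero error $\Delta=M_i-M'_i$. This is Lemma~\ref{lm:basic}, which proves that $\Pr_u\big[\tr{\Delta\cdot u}=0\big]=0$ over $u=u^1\otimes\cdots\otimes u^{n'}$ with each $u^j$ drawn from a suitable single-qubit measure, together with a quantitative version (Lemma~\ref{lm:bounded}) controlling how fast the error can shrink in the finite-precision protocol $\widehat{W}$. Your proposal notes that error-detection is needed but supplies no mechanism; an arbitrary random linear combination would either fail to be error-detecting or would destroy the succinctness that lets the verifier finish. So the plan is not wrong in spirit, but the two missing ingredients---tensor-product randomness and the lemma that it detects nonzero $\Delta$---are exactly what the paper provides, and without them the argument does not close. (A minor point: the Feynman--Kitaev history state plays no role in either your plan or the paper's proof.)
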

However, as Aaronson showed, $\pBQP=\PP$ \cite{Aar05}, 
hence the prover is a $\PP$ machine.  
It is not difficult to apply the fact that $\PsP=\PPP$ to arrive at
the well known result of \cite{LFKN92}\footnote{While the original result might be more closely regarded as $\PsP\in \IP[\BPP, \sP]$, using $\PsP=\PPP$ we get an equivalence, as can be seen in Section \ref{sec:results}}: 
 \begin{theorem}\label{thm:maintheorem2}
$\PsP\in \IP[\BPP, \PP]$. 
Furthermore, the $\IP[\BPP, \PP]$ protocol has completeness 1.
\end{theorem}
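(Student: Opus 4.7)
The plan is to derive Theorem \ref{thm:maintheorem2} as a corollary of Theorem \ref{thm:maintheorem1} using two known identifications of complexity classes: Aaronson's theorem $\pBQP = \PP$ \cite{Aar05}, and the classical equality $\PsP = \PPP$. The first upgrades the honest prover of Theorem \ref{thm:maintheorem1} from $\pBQP$ to $\PP$, while the second reduces the task of placing $\PsP$ in $\IP[\BPP, \PP]$ to doing so for $\PPP$.

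Hence it suffices to exhibit, for each $L \in \PPP$, an $\IP[\BPP, \PP]$ protocol with completeness $1$. Fix a polynomial-time deterministic machine $M$ and a language $L' \in \PP$ such that $M^{L'}$ decides $L$ using at most $q(n)$ oracle queries on inputs of length $n$. The combined protocol on input $x$ has the $\BPP$ verifier simulate $M$ step by step. Each time $M$ is about to query $L'$ on some string $y$, the prover declares a bit $b\in\{0,1\}$, and the two parties invoke the protocol of Theorem \ref{thm:maintheorem1} on $y$ for the language $L'$ (if $b=1$) or for $\bar{L'}$ (if $b=0$); note that $\bar{L'}\in \PP = \pBQP$ as well, since $\PP$ is closed under complement. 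Each invocation is boosted by sequential repetition to soundness error at most $1/(3q(n))$, preserving completeness $1$. The verifier feeds the declared bit $b$ into its simulation and continues, accepting iff $M$ accepts and every sub-protocol accepts.

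For completeness, the honest prover answers each query truthfully (possible since $L',\bar{L'}\in\PP$) and plays the honest Theorem \ref{thm:maintheorem1} strategy on the correct side of the query; both tasks are $\pBQP = \PP$ computations, so the combined honest prover sits in $\PP$, and the overall completeness is $1$ because each sub-protocol has completeness $1$. For soundness, observe that when $x\notin L$, any strategy making the verifier accept must declare an incorrect bit $b$ for at least one query $y$; this places $y$ in the complement of the language used in the corresponding sub-protocol, which therefore rejects except with probability at most $1/(3q(n))$. A union bound over the at most $q(n)$ queries bounds the total soundness error by $1/3$.

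There is essentially no obstacle beyond this assembly: the real content is packaged into Theorem \ref{thm:maintheorem1} and the two identifications $\pBQP = \PP$ and $\PsP = \PPP$. The only items needing verification are that the prover stays inside $\PP$ across all oracle answers and sub-protocol messages (each individual message is a $\PP$-computable function of the transcript so far), and that the composed soundness error remains bounded away from $1/2$ (handled by the boosting and union bound above).
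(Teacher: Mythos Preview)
Your proposal is correct and follows essentially the same route as the paper: reduce to $\PPP$ via $\PsP=\PPP$, have the verifier simulate the base $\p$ machine, and at each oracle call invoke the (amplified) Theorem~\ref{thm:maintheorem1} protocol on the query or its complement, using closure of $\PP=\pBQP$ under complement and a union bound for soundness. The only cosmetic difference is that you have the prover declare the answer bit $b$ and then run a single sub-protocol for $L'$ or $\bar{L'}$, whereas the paper has the verifier first try the $L'$ protocol and, if it fails, then try the $\bar{L'}$ protocol; both mechanisms achieve the same completeness-$1$/bounded-soundness behavior.
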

The derivation of Theorem \ref{thm:maintheorem2} from
Theorem \ref{thm:maintheorem1} first uses $\PPP=\PsP$ \cite{ZOO}, 
and works for a language in $\PPP$. It then relies on the verifier 
simulating the $\p$ machine 
for any language in $\PPP$ and invoking the interactive protocol 
of Theorem \ref{thm:maintheorem1}
whenever $\p$ would have called the $\PP$ oracle. The detailed proof can be found in appendix \ref{app:T1.2}.

Thus, we arrive at a new proof of the celebrated theorem of \cite{LFKN92}, 
and a new sum-check protocol. The protocol is very natural to use when 
considering quantum related problems, since it applies directly to 
evaluating products of local projections and local unitary 
matrices which appear naturally in quantum complexity. 
Considering the classical setting, our protocol can be used for a $\PsP$ problem, as in \cite{LFKN92}, in the following way: given a $\PsP$ problem, one 
would need to map the problem first into a $\PPP$ problem, 
using $\PsP=\PPP$ \cite{ZOO}, then map each of the $\PP$ problems into a $\pBQP$ circuit, (this
can be done using an easy combinatorial reduction, by \cite{Aar05})
and then invoke the interactive protocol for each of the $\pBQP$ problems.

\subsection{An overview of the protocol} \label{sec:protocol overview}
Our protocol differs from that of \cite{LFKN92, Sha92} in a few crucial points, 
but still exhibits the same overall structure of the original 
sum-check protocol \cite{LFKN92, Sha92}, of inductive consistency checking along 
a path from the root of a tree to a leaf. Let us review this overall 
structure since it is useful to have in mind when explaining the new protocol. 
 
At the root of the tree we have a complicated calculation, 
and the children of the root correspond to slightly simpler calculations; 
as we go down the tree, the calculations become simpler and simpler so that 
at a leaf, we reach a calculation that the verifier can 
compute on his own. The  prover first {\it commits} (namely sends to the 
verifier) the values of the calculations 
at the root, and at all its children. The verifier first performs 
some {\it consistency test} between the values of children and that
 of the root, 
and then picks {\it randomly} one of these children, and
challenges the prover to prove that value (which is again a root of a tree, 
except one level shorter). They now continue starting from the new node 
as a root. As the protocol progresses, the verifier goes down along 
one path of this tree towards a leaf, whose value he can verify on his own.  
The main idea is that if the claimed value at a certain node $q$ was wrong 
then in order for the consistency check between that node and the values 
of its children to pass with high probability, 
many of the children's values must also be 
wrong. So with high probability, if the prover committed to a wrong claim, 
he is still committed to a {\it wrong}
claim also in the next round - so he is ``caught in his lies'' until the 
verifier and prover  
reach a simple claim which the verifier can calculate by
himself, at which point he will catch 
the lie.   

In the sum check protocol, the claimed value was the value of the exponential
sum of the evaluations of a polynomial in $n$ variables $b_1, \dots, b_n$: $\sum_{b_1\in\{0,1\}}\dots\sum_{b_n\in\{0,1\}} P(b_1,\dots,b_n)=K$.

Here, our claimed value at the root would be the trace of what we call a \textit{top row matrix} (see definition \ref{def:toprow}). This is an exponential size matrix 
$A=\ket{0^n}\bra{0^n}\cdot G_T\cdot G_{T-1}\cdots G_1$, where each 
of the $G_i$ is a quantum gate acting non-trivially on at most $3$ 
out of $n$ qubits. That value too can be viewed as an exponential sum. 

\paragraph{Protocol overview:} 
Imagine we are a $\BPP$ verifier, and we would like to verify 
that the value of $\tr{A}=\tr{\ket{0^n}\bra{0^n}\cdot G_T\cdots G_1}=C$, 
by interacting with a $\pBQP$ prover. 
To do this, the prover and verifier will keep updating the matrix $A=A_0$; 
at the $i$th round, the verifier will pick a random unitary $U_i$ acting 
non-trivially only on the three qubits that the gate $G_i$ works on, 
and the prover will be challenged 
to prove the trace of the updated matrix $A_i$, defined to be 
$A_i=U_i\cdot U_{i-1} \cdots U_1 \ket{0^n}\bra{0^n} G_T\cdots G_{i+1}$. 

The interaction goes as follows: 
each round, the verifier asks the prover to 
provide him with the reduced density 
matrix of $A_i$ on the set of qubits which the next gate, $G_{i+1}$,  
works non-trivially on (The first round is the $0$'th round, at which the prover is asked to provide $A_0$ reduced to the qubits of $G_1$). 
His consistency check at round $i>0$ 
is to compare between the {\it trace}
of this reduced matrix, denoted $M_i$, and the trace 
of a three qubit matrix derived by some simple 
manipulation involving a rotation by $U_i$, 
on the matrix $M_{i-1}$ he recieved from the prover in the previous round. 

Notice that it is critical for our protocol, that the 
randomness is provided by matrices $U_i$ which are not general 
$3$-qubits unitaries, but which have a very particular structure:  
they are {\it tensor products of random single 
qubit matrices}. Since the multiplication of such matrices ($U_T\cdots U_1$) 
is still in the form of tensor product of single qubit matrices, 
this is what allows the verifier at the end to compute 
the trace of $A_T=U_T\cdot U_{T-1}\cdots U_1 \ket{0^n}\bra{0^n}$
on his own. 

We must prove that this structure provides 
sufficient randomness to catch a cheating prover; namely, that the prover
could not give us a false original reduced matrix with a false trace, 
and then in the next round, switch to a reduced matrix which is true, but 
will pass the consistency test.  

The soundness relies on the following central basic fact. Consider round 
$i$. Suppose the difference between the correct reduced 
matrix $M_i$ and the one 
the prover actually sent (which we denote by $M'_i$),
is $\Delta_i=M_i-M'_i$.

The fact that the randomness in tensor product unitaries is sufficient to 
detect that this difference is non-zero, is captured by the following 
Lemma \ref{lm:basic}, central to this work  
(and proven in appendix \ref{app:L1}): 
\begin{lemma}\label{lm:basic}
Let $\Delta$ be an operator on $n$ qubits and let $u=u^1\otimes\cdots\otimes u^n$ be a unitary operator on $n$ qubits:
\[
\Delta\neq0 \Rightarrow Pr_{\substack{u^1,\cdots,u^n\backsim U(2)\\
u=u^1\otimes\cdots\otimes u^n}}\Big(tr(\Delta\cdot u)=0\Big)=0,
\]
\end{lemma}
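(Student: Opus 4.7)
The plan is to prove the lemma by induction on the number $n$ of qubits on which $\Delta$ acts. For the base case $n=1$, $\Delta$ is a nonzero $2\times 2$ matrix, which I expand in the Pauli basis as $\Delta = c_I I + c_X X + c_Y Y + c_Z Z$ with not all coefficients zero. Parametrizing a generic $u\in U(2)$ as $u = e^{i\phi}(\alpha_0 I + i\alpha_1 X + i\alpha_2 Y + i\alpha_3 Z)$ with $\alpha_0^2+\alpha_1^2+\alpha_2^2+\alpha_3^2=1$, a direct calculation gives
\[
\mathrm{tr}(\Delta\, u) \;=\; 2e^{i\phi}\bigl(c_I\alpha_0 + i c_X\alpha_1 + i c_Y\alpha_2 + i c_Z\alpha_3\bigr),
\]
which is a nontrivial linear form on $S^3$ (up to the overall phase $e^{i\phi}$) and therefore vanishes only on a set of Haar measure zero.

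For the inductive step I decompose $\Delta$ along the first qubit as $\Delta = \sum_{i,j\in\{0,1\}} E_{ij}\otimes\Delta_{ij}$, where each $\Delta_{ij}$ acts on the remaining $n-1$ qubits and at least one block is nonzero. Writing $v = u^2\otimes\cdots\otimes u^n$ and using multiplicativity of trace over tensor products,
\[
\mathrm{tr}\bigl(\Delta\cdot(u^1\otimes v)\bigr) \;=\; \sum_{i,j} u^1_{ji}\,\mathrm{tr}(\Delta_{ij}\, v) \;=\; \mathrm{tr}(C\cdot u^1),
\]
where $C$ is the $2\times 2$ matrix with entries $C_{ij}=\mathrm{tr}(\Delta_{ij}\,v)$. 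Fixing any block $\Delta_{i_0 j_0}\neq 0$, the inductive hypothesis applied to it gives $C_{i_0 j_0}\neq 0$ with probability $1$ over the draw of $u^2,\ldots,u^n$. Conditioning on that event, $C$ is a nonzero $2\times 2$ matrix, and the base case applied to $C$ (using the independent draw of $u^1$) shows $\mathrm{tr}(C\cdot u^1)\neq 0$ with probability $1$. Fubini on the product Haar measure then chains the two probability-one conclusions into the statement for $n$ qubits.

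The main subtlety is the probabilistic chaining: I need the measure-zero conclusion of the base case to apply to almost every realization of $(u^2,\ldots,u^n)$, not merely in some averaged sense. This is automatic because Haar measure on $U(2)^n$ is a product measure and the $u^i$ are independent, so Fubini legitimizes the two-stage argument. As a cleaner alternative that avoids the induction, one can expand $\Delta$ in the full product Pauli basis $\{I,X,Y,Z\}^{\otimes n}$ and observe that the four functions $u\mapsto \mathrm{tr}(Pu)$ for $P\in\{I,X,Y,Z\}$ are linearly independent on $U(2)$ (any $2\times 2$ matrix is a complex combination of unitaries, so $\mathrm{tr}(Mu)\equiv 0$ forces $M=0$), whence their $n$-fold tensor products are linearly independent on $U(2)^n$; hence if $\Delta\neq 0$ then $\mathrm{tr}(\Delta\cdot u)$ is a nontrivial real-analytic function on the connected real-analytic manifold $U(2)^n$, and its zero set has Haar measure zero.
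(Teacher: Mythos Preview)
Your inductive structure matches the paper's: decompose $\Delta$ along one qubit into four blocks, observe that $\mathrm{tr}(\Delta\cdot u)=\mathrm{tr}(C\cdot u^1)$ for the $2\times 2$ matrix $C$ with entries $C_{ij}=\mathrm{tr}(\Delta_{ij}\,v)$, apply the hypothesis to a nonzero block to get $C\neq 0$ almost surely, then apply the base case to $C$ and the independent $u^1$. The paper is somewhat informal about the conditioning/Fubini step that you spell out.

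The one real discrepancy is the measure. You argue throughout for Haar measure on $U(2)$, but in this paper the symbol $U(2)$ denotes the specific distribution of Definition~\ref{def:U(2)}: draw $\theta,\varphi_1,\varphi_2$ uniformly from $[0,2\pi]$ and form the associated $SU(2)$ matrix. This is \emph{not} Haar (the $\theta$-marginal differs, and there is no global phase $e^{i\phi}$). Your argument survives the change---$\mathrm{tr}(\Delta\cdot u)$ is a nonconstant real-analytic function of $(\theta,\varphi_1,\varphi_2)$ on the connected cube, so its zero set is Lebesgue-null, and the paper's measure \emph{is} Lebesgue on that cube---but you should say this rather than invoke Haar.

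The paper's base case, by contrast, works directly in the $(\theta,\varphi_1,\varphi_2)$ coordinates: it writes out $\mathrm{tr}(\Delta\cdot u)$ explicitly, assumes without loss of generality that one entry of $\Delta$ is nonzero, and bounds $\Pr\bigl(\mathrm{tr}(\Delta\cdot u)=0\bigr)$ by a sum of four elementary events ($\sin\theta=0$, $\cos\theta=0$, a certain $\varphi_1$-combination vanishing, and $|\cot\theta|$ hitting a prescribed value), each of probability zero. This hands-on computation is tailored to the chosen parametrization and is what gets refined into the quantitative Lemma~\ref{lm:bounded} needed for the finite-precision protocol. Your Pauli-basis/real-analyticity route (and the tensor-of-Pauli alternative you sketch) is cleaner and more conceptual for the exact statement, but it does not directly deliver those quantitative bounds.
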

Where $U(2)$ in the above is a probability distribution over 
single qubit matrices, which is closely related to the Haar measure
but not equal to it (see Definition \ref{def:U(2)}). The reason not to work with Haar measure is that later, when handling precision issues - see below - the distribution $U(2)$ seems simpler to work with.
This lemma allows us to prove that if the prover cheated in an 
earlier round, he must continue to cheat in the next round, or the consistecy 
check will detect an ``inconsistency''. 

{~}

\noindent{\bf Handling Precision errors}
Of course, all the above discussion neglected the fact that the 
verifier and prover cannot work with infinite precision, namely, with real 
numbers and coninuous probability measures.  
We modify the protocol by approximating all the above to include exponentially 
small errors (where also the consistency test is performed up to
exponentially small errors); 
thus, the communication between the verifier and prover 
is ploynomially bounded. The proof that the entire analysis goes through 
is much more tedious than the proof for the basic protocol 
which assumes infinite 
precision calculations; however it uses only simple algebraic manipulations. 
The main idea in the proof is given by Claim \ref{cl:soundness by induction}. In this claim, a very important observation is made. It turns out that the prover in the protocol can always reduce the trace of the error matrix $\Delta$ by 
some constant factor each round. Our main lemma for the approximation case, Claim \ref{cl:soundness by induction} states that if the prover is not caught in the protocol, it must mean that the error is not decreasing faster than some exponent in the number of rounds. Since the precision of the calculations of the verifier at the final stage is exponentially good, the error cannot 
decrease below that precision, and thus the cheating will be detected. 

It should be mentioned that this effect in which the prover can reduce its error exponentially in the number of rounds, is very important for the possible application to verification with a $\BQP$ prover. See Subsection \ref{sec: discussion} for further discussion on this point.

\subsection{Comparison to the Sum-Check protocol \cite{LFKN92, Sha92}, 
and other Related work} 
A crucial difference between the known classical protocol \cite{LFKN92, Sha92}
and ours is this.  
In the classical sum-check protocol, the final verification at a leaf is always of a {\it single value} out of exponentially many in the sum. 
In our protocol, on the other hand, 
the computation that the verifier needs to perform 
at the leaf is of the form  
\[tr(U_T\cdots U_2 \cdot U_1\ket{0^n}\bra{0^n})=
tr(w_n\otimes ... \otimes w_2 \otimes w_1 \cdot\ket{0^n}\bra{0^n})
 \]
for single qubit matrices $w_i$. Notice that naively, this value    
is the sum of {\it exponentially many} 
terms. It is only because of the tensor product structure of this 
expression, that this exponential sum can actually be calculated 
efficiently by the verifier. The fact that the verification 
at the leaf is of an exponential sum, and, relatedly,  the tensor product 
structure of the protocol, 
which is naturally quantum, is an inherent difference between 
this protocol and the classical ones. 

Our protocol has similarities in structure with
the interactive proof protocol for $\QMA$ in
\cite{AAV13}.  
The idea in \cite{AAV13} is to consider the local Hamiltonian $H$, 
which is the input to the $\QMA$ problem, and 
the result of the $\QMA$ computation is encoded in the trace of a highly 
non-local matrix $f(H)$ which involves polynomial powers of $H$. 
The randomness in the protocol is provided not by 
 random unitaries but by random 
{\it projections}, each time on a single qubit. 
We note that though implicit in \cite{AAV13}, 
we believe that the essence of the security of the \cite{AAV13} protocol
could also be highlighted by a similar
lemma to Lemma \ref{lm:basic}, except for random projections 
rather than random unitaries; 
in both protocols randomization is done by single qubit 
manipulations. The explanation of the protocol of \cite{AAV13} using 
a similar lemma to Lemma \ref{lm:basic} remains to be done.  

Our work improves on \cite{AAV13} in several aspects.
Most importantly, our result applies not only to $\QMA$ but to
any problem in $\PsP$, thus rederiving the celebrated 
classical result of \cite{LFKN92}, 
and connecting our (quantum-inspired though classical) protocol to the world of classical 
complexity classes. We do not know how to directly extend the protocol of 
\cite{AAV13} to apply for $\PsP$, due to it strongly relying on the 
structure of local Hamiltonians. Another difference is that the protocol presented here itself is considerably simpler than 
that of \cite{AAV13}, when applied to a problem in $\BQP$; Moreover, when applied to a quantum circuit, our protocol preserves its structure. 
In that context, the \cite{AAV13} protocol goes through a series of 
reductions, in which a $\BQP$ 
instance would be first reduced to a local Hamiltonian instance, which will 
in turn be reduced to a calculation of the trace of
a matrix which is not a local Hamiltonian.  
The result is that \cite{AAV13}
involves a more complicated abstract structure, in which essentially 
the tree is replaced (roughly) by a {\it tree of trees}, 
namely, after a sequence 
of rounds, the parties reach a situation in which the problem becomes slightly 
simpler, and then they need to start again, for the slightly simpler 
problem, and they repeat this for polynomially many times. 
 
\subsection{discussion} \label{sec: discussion}
We hope that our protocol may be helpful 
as a starting point towards showing $\BQP \in \IP[\BPP, \BQP]$, namely 
replacing the prover in our protocol by a $\BQP$ prover. 
This remains as a major open problem and was in fact a
major motivation for this paper. We explain below however why 
resolving the problem would require considerably new ideas. 

First, note that a $\BQP$ prover could crudely {\it approximate}
a $\pBQP$ prover by an additive approximation. 
This could naively suggest that when applying the protocol
to a problem in $\BQP$, 
the exponential precision in our protocol is not needed; 
it is sufficient to request that the prover only provides 
approximate answers, and the verifier accepts if the consistency 
checks are approximately satisfied.  
Hence, perhaps, if only additive approximation is required,  
the prover in our protocol can be replaced 
by a $\BQP$ prover? If possible, that would resolve the
above major problem of verifying $\BQP$ computations using a classical 
$\BPP$ verifier.  

Unfortunately, there is a serious problem with this suggestion. 
The problem has to do with exactly the exponential decay in the error 
mentioned earlier. Let us explain the issue very roughly below 
(compromising rigor to make the argument clearer). Recall that 
in our protocol, consistency checking is done by comparing only  
the {\it traces} of two matrices (at each round).  
Let us assume that the prover wants to prove a wrong claim, 
namely that the trace of the exponential matrix $A$ 
is in fact $C$ where $C$ is different from the correct value 
$tr(A)$ by $\delta_0$; 
In other words, the prover sends in the first round $M'_0$ such that 
$tr(M'_{0})=C$; recalling that the correct matrix is denoted $M_0$, we have 
$tr(M'_0)=tr(M_0)+\delta_0$. We refer to $\delta_0$
as the {\it initial error}. 
The idea is that the prover can always
decrease the error by a constant factor each round, 
and still pass the consistency tests. 
The way the prover will do this, is 
by ``spreading his error'' at each round as much as possible, as follows: 
at each round he will send $M'_{i}=M_i+\Delta_i$ such that 
$\Delta_i=\delta_i I/8$, where $\delta_i$ is his current error.
In other words, he spreads his error evenly on all diagonal elements in 
his matrix. The claim is that because 
of the randomization process, on average, his updated error 
$\delta_{i+1}=\tr{\Delta_i \cdot u_i}$ for the random unitary $u_i$ 
(chosen by the verifier in the following round) is smaller than $\delta_i$ 
by a constant factor. Hence, as we had mentioned earlier in Subsection \ref{sec:protocol overview} - this strategy will make the initial error decrease 
by an exponential factor in the number of rounds (making our analysis 
of the protocol in which this exponential decay of error appears, namely
Claim \ref{sec:soundness by induction proof}, essentially tight). 
Hence, the prover can get away with an 
exponentially large initial error; this is because the consistency tests of 
the verifier at the final round must allow {\it some} inaccuracies if the 
prover is only $\BQP$. Making the protocol sufficiently robust to such 
errors, would lead to a solution to the problem  
but this seems to require significantly 
new ideas. 
We note that the same exponential decay of error is possible to achieve
also in the original $\IP=\PSPACE$ protocol; 
it is a very interesting problem to try and formalize this property of 
interactive protocols more rigorously. 

Given the historical connection between interactive proofs and PCP, 
it is our hope that our protocol will be useful also for progress on the 
quantum PCP front. Another interesting open problem is to extend 
our protocol to all of $\PSPACE$, and derive an alternative proof 
for the full $\IP=\PSPACE$ result; we were not able to do this so far.
It is also our hope that this protocol will prove useful in classical contexts, 
due to its genuinely different structure than previous protocols for the 
same important task.   
 
\subsection{Paper Organization}
We begin with some notations and background 
in Sec. \ref{sec:bckgrnd}. In Sec. \ref{sec:results} we 
formally state our main results and show how they are derived given an interactive protocol $\widehat{W}$ for proving the trace of 
a \textit{top row matrix}. 
In Sec. \ref{sec:W},\ref{sec:infinite} we present our idealized 
interactive protocol $W$ (which assumes infinite precision) and prove its soundness and completeness. In Sec. \ref{sec:bounded} we provide the modified version 
of the protocol, $\widehat{W}$ and analyze it to show that even when
precision limitations are addressed, completeness and soundness hold. 
\section{Background} \label{sec:bckgrnd} 

\subsection{Notations}
We use the following notations throughout the paper:

\begin{itemize}

	\item Given a gate $g_i$ on a set $s_i$ of less than $n$ qubits, we define $G_i=g_i\otimes I$ to be it's $n$ qubits expansion (where $I$ is over the qubits $\overline{s_i}$).
		
	\item Similarly, given a general unitary $u_i$ or $h_i$ on less than $n$ qubits, we define $U_i=u_i\otimes I$ and $H_i=h_i\otimes I$ to be their $n$ qubits expansions.

	\item Given an operator $\rho$ on hilbert space $\ham = \ham_A \otimes \ham_B$, and given an orthonormal basis $\{\ket{i}\}_i$ for $\ham_A$ and 
$\{\ket{l}\}_l$ for $\ham_B$, we can write 
$\rho =\sum_{i,k,j,l}\rho_{i,j,k,l}\ket{i}\bra{j}\otimes \ket{k}\bra{l}$.  
\textit{Tracing out} $\ham_B$, or equivalently reducing $\rho$ 
to $\ham_A$ is defined as: 
\[\rho|_{\ham_A}=tr_{\ham_B}(\rho)=
tr_{\ham_B}\left(
\sum_{i,j,k,l}{\rho_{i,k,j,l}\ket{i}\bra{j}\otimes\ket{k}\bra{l}}\right)=\sum_{i,j}
{\sum_k\rho_{i,j,k,k}}\ket{i}\bra{j}\]
	\item Given a matrix $B_{2^n\times 2^n}$, and a matrix $z$ on a set $s$ of qubits, we denote $B$'s reduction to the qubits $s$ (on which $z$ operates) by $B|_z$. Namely: $B|_z=tr_{\overline{s}}(B)$.
			
	\item $U(2)$ - The measure on single qubit unitaries,
from which we randomly toss our matrices. See Section \ref{sec:W}.  
	\item $\p$ - The complexity class $Polynomial-Time$\footnote{While we use $\p$ for the complexity class, it should not be confused with $P$, which we use to call a Prover throught the paper}. 
	\item In definition \ref{def:truncated unitary} we use the notion $\floor{f(\theta, \phi_1, \phi_2)}_{\xi}$. By this we mean any deterministic approximation of $f(\theta, \phi_1, \phi_2)$, which can be computed by a $\p$ verifier using $\poly{n}$ bits, s.t $\abs{\floor{f(\theta, \phi_1, \phi_2)}_{\xi}-f(\theta, \phi_1, \phi_2)}\leq\xi$
\end{itemize}

\subsection{Top Row Matrix}
A \textit{top row matrix} is defined as follows:
\begin{defn} \label{def:toprow}
Given a $2^n\times2^n$ matrix $B$, we define it's \textit{top row matrix} to be: $\vert0^n\rangle \langle 0^n \vert B$, where $\vert0^n\rangle$ is the computational basis state on $n$ qubits $\vert00\cdots0\rangle$. By referring to the \textit{top row matrix} of a quantum circuit on $n$ qubits which is given as a sequence of local gates 
$g_T\ldots g_1$ we mean the \textit{top row matrix} of 
$B=G_T\cdot G_{T-1}\cdots G_1$.
\end{defn}

\subsection{IP[$\mathcal{V}$, $\mathcal{P}$]}
The complexity class $\IP[\mathcal{V}, \mathcal{P}]$ is defined as such:
\begin{defn} \label{def:ipvp}
Given complexity classes $\mathcal{V}$ and $\mathcal{P}$, $\IP[\mathcal{V}, \mathcal{P}]$ is the class of languages $\lang \subseteq \{0, 1\}^*$ for which there exists a 2 party protocol $\mathcal{W}$ between a verifier of computational power $\mathcal{V}$ and a prover, such that for all inputs $x$,
\begin{enumerate}
\item{Completeness: }If $x \in \lang$, then there exists a prover of computational power $\mathcal{P}$ for which\\ $\prob{verifier-accepts}\geq\frac{2}{3}$.
\item{Soundness: }If $x \notin \lang$, then for any prover, $\prob{verifier-accepts}\leq\frac{1}{3}$.
\end{enumerate}
\end{defn}
\subsection{BQP and Q-CIRCUIT}

\begin{defn}[due to \cite{BV93}] The complexity class $\BQP$ is the set of languages which are accepted with probability 2/3 by some polynomial time Quantum Turing Machine.
\end{defn}

\begin{defn} 
The promise problem Q-Circuit is defined as follows. 
The input is a description of a sequence of gates $g_L,... g_1$, taken out of a finite universal set of local quantum gates, acting non-trivially on at most 3 out of $n$ qubits.  $L$ is polynomial in $n$.  Yes instances and No instances are defined as follows:
\[ Q-CIRCUIT_{YES}: \|(\vert0\rangle \langle0\vert\otimes I_{n-1})\cdot G_L\cdot G_{L-1}\cdots G_1 \vert0^n\rangle \|^2\geqslant\frac{2}{3}
\]
\[ Q-CIRCUIT_{NO}: \|(\vert0\rangle \langle0\vert\otimes I_{n-1})\cdot G_L\cdot G_{L-1}\cdots G_1 \vert0^n\rangle \|^2\leqslant\frac{1}{3}
\]
\end{defn}
Q-CIRCUIT is known to be complete for $\BQP$. We can use for example the universal set of gates consisting only of Toffoli and Hadamard gates \cite{Shi02}.
\begin{claim}\label{cl:top row}
Calculating the trace of the \textbf{top row matrix} 
for an arbitrary sequence of local
gates $g_T\ldots g_1$, acting non-trivially on $3$ out of $n$ input qubits 
such that $T=poly(n)$, to within $\pm\frac{1}{6}$ is hard for $\BQP$.
\end{claim}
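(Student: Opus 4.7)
The plan is to reduce from Q-CIRCUIT, which is $\BQP$-complete. Given an instance with gates $g_L,\dots,g_1$, the acceptance probability to be estimated is
\[
p = \bra{0^n}G_1^{\dagger}\cdots G_L^{\dagger}(\ket{0}\bra{0}\otimes I_{n-1})G_L\cdots G_1\ket{0^n}.
\]
The obstacle is that $\ket{0}\bra{0}\otimes I_{n-1}$ is not a unitary local gate, so it cannot be inserted directly into the product $G_T\cdots G_1$ forming a top row matrix. I would circumvent this by writing $\ket{0}\bra{0}=\tfrac{1}{2}(I+Z)$, which gives
\[
p=\tfrac{1}{2}+\tfrac{1}{2}\bra{0^n}G_1^{\dagger}\cdots G_L^{\dagger}Z_1 G_L\cdots G_1\ket{0^n},
\]
where $Z_1$ acts on the first qubit alone.

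Now define the sequence of local gates
\[
B\;=\;G_1^{\dagger}\cdots G_L^{\dagger}\cdot Z_1\cdot G_L\cdots G_1.
\]
Each factor acts on at most $3$ of the $n$ qubits: the $G_i$ do by assumption, their adjoints do since the adjoint of a $3$-local unitary is $3$-local, and $Z_1$ is $1$-local. The total number of factors is $2L+1=\poly(n)$, so $B$ is a legitimate product of $\poly(n)$ local gates. Using the identity $\bra{0^n}B\ket{0^n}=\tr(\ket{0^n}\bra{0^n}B)$, the top row matrix of $B$ satisfies
\[
\tr\bigl(\ket{0^n}\bra{0^n}B\bigr)=2p-1.
\]

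To finish, suppose an oracle provides an estimate $\widetilde{\tau}$ of $\tr(\ket{0^n}\bra{0^n}B)$ with $|\widetilde{\tau}-(2p-1)|\le 1/6$. Then $\widetilde{p}\defeq(1+\widetilde{\tau})/2$ satisfies $|\widetilde{p}-p|\le 1/12$, so on a YES instance $\widetilde{p}\ge 2/3-1/12=7/12>1/2$ and on a NO instance $\widetilde{p}\le 1/3+1/12=5/12<1/2$. Hence a single oracle call, preceded by the polynomial-time construction of $B$ and followed by a threshold comparison, decides Q-CIRCUIT, establishing the claimed $\BQP$-hardness. The whole argument is elementary once the $\tfrac{1}{2}(I+Z)$ trick is used to promote the measurement projector into the local-unitary framework demanded by the definition of a top row matrix.
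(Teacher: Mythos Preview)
Your proof is correct, but it takes a different route from the paper's. The paper handles the non-unitary projector $\ket{0}\bra{0}\otimes I_{n-1}$ by adjoining an ancilla qubit and replacing the projector with a $\CNOT$ from the output qubit to the ancilla: since $\bra{0}_{\text{anc}}\,\CNOT_{1\to\text{anc}}\,\ket{0}_{\text{anc}}=\ket{0}\bra{0}_1$, one obtains
\[
p=\tr\bigl(\ket{0^{n+1}}\bra{0^{n+1}}\,G_1^{\dagger}\cdots G_L^{\dagger}\,\CNOT_{1\to n+1}\,G_L\cdots G_1\bigr),
\]
so the trace of the top row matrix equals $p$ itself. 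You instead stay on $n$ qubits and linearize the projector via $\ket{0}\bra{0}=\tfrac{1}{2}(I+Z)$, obtaining a top row trace equal to $2p-1$. Both tricks convert the measurement into a product of local unitaries of length $2L+1$; yours avoids the extra qubit at the cost of an affine rescaling, which you correctly note still separates YES from NO after a $\pm\tfrac{1}{6}$ estimate. Either argument proves the claim; the paper's has the minor aesthetic advantage that the trace is the acceptance probability on the nose, while yours is a shade more economical in space.
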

We include this simple proof here since its idea will be used several times later on. 
\begin{proof}
Given a Q-Circuit instance, $\prob{0}$ can be written as follows: 
\begin{align*}
&\|(\vert0\rangle \langle0\vert\otimes I_{n-1})\cdot G_L\cdot G_{L-1}\cdots G_1 \vert0^n\rangle \|^2\\ 
&=\langle0^n\vert G^{\dagger}_L\cdot G^{\dagger}_{L-1}\cdots G^{\dagger}_1\cdot (\vert0\rangle \langle0\vert\otimes I_{n-1})\cdot (\vert0\rangle \langle0\vert\otimes I_{n-1})\cdot G_L\cdot G_{L-1}\cdots G_1 \vert0^n\rangle \\
&=\langle0^n\vert G^{\dagger}_L\cdot G^{\dagger}_{L-1}\cdots G^{\dagger}_1\cdot (\vert0\rangle \langle0\vert\otimes I_{n-1})\cdot G_L\cdot G_{L-1}\cdots G_1 \vert0^n\rangle \\ 
&=\langle0\vert \otimes \langle0^n\vert G^{\dagger}_L\cdot G^{\dagger}_{L-1}\cdots G^{\dagger}_1\cdot CNOT_{1\mapsto n+1}\cdot G_L\cdot G_{L-1}\cdots G_1 \vert0^n\rangle\otimes \vert0\rangle \\
&=\tr{\langle0\vert \otimes \langle0^n\vert G^{\dagger}_L\cdot G^{\dagger}_{L-1}\cdots G^{\dagger}_1\cdot CNOT_{1\mapsto n+1}\cdot G_L\cdot G_{L-1}\cdots G_1 \vert0^n\rangle\otimes \vert0\rangle} \\
&=\tr{\vert0^{n+1}\rangle\langle0^{n+1}\vert\cdot G^{\dagger}_L\cdot G^{\dagger}_{L-1}\cdots G^{\dagger}_1\cdot CNOT_{1\mapsto n+1}\cdot G_L\cdot G_{L-1}\cdots G_1 }
\end{align*}

Q-Circuit thus reduces to calculating (to within accuracy $\frac{1}{6})$ the trace of the following \textit{top row matrix}, defined for a sequence of gates consisting of $T=2L+1=\poly{n}$ unitary local gates:
\[ A_{2^{n+1}\times2^{n+1}}= \vert0^{n+1}\rangle\langle0^{n+1}\vert\cdot G^{\dagger}_L\cdot G^{\dagger}_{L-1}\cdots G^{\dagger}_1\cdot CNOT_{1\mapsto n+1}\cdot G_L\cdot G_{L-1}\cdots G_1
\] 
\end{proof}

In Section \ref{sec:bounded} 
we present our interactive protocol $\widehat{W}$, which verifies any such trace to within inverse-exponential accuracy, thus showing an explicit IP for $\BQP$. Furthermore, we show that the protocol can be carried out by a $\pBQP$ prover.

\subsection{PostBQP} \label{sec:PostBQP}
Introduced by Aaronson in \cite{Aar05}, the complexity class $\pBQP$ consists of all of the computational problems solvable in polynomial time on a quantum Turing machine with postselection and bounded error. Formally, the class can be defined as such:
\begin{defn}\label{def:postbqp}$\pBQP$ is the class of languages $\lang \subseteq \{0, 1\}^*$ for which there exists a uniform family of polynomial-size quantum circuits $\{C_n\}_{n\geq 1}$
such that for all inputs $x$,
\begin{enumerate}
\item{}After $C_n$ is applied to the state $\ket{0\cdots 0}\otimes\ket{x}$, the first qubit has probability $\geq \frac{1}{2^n}$ of being measured $\ket{0}$
\item{}If $x \in \lang$, then conditioned on the first qubit being $\ket{0}$, the second qubit is $\ket{0}$ with probability at least 2/3.
\item{}If $x \notin \lang$, then conditioned on the first qubit being $\ket{0}$, the second qubit is $\ket{0}$ with probability at most 1/3.
\end{enumerate}
\end{defn}
We remark that this definition is slightly different than the 
the original definition of \cite{Aar05}, where the probability in item $1$ was only required to be non-negative. As noted by Aaronson later in \cite{AarPost}
this corrected definition is in fact the one to be used, as only for this 
definition do we know that 
$\pBQP$ is equal to $\PP$. Moreover, this equality holds when the quantum circuit is assumed to consist of only Hadamard and Toffoli gates.

We also remark the simple fact that $\pBQP$ is closed under complement. 
This can easily be seen by flipping the second qubit prior to measuring.

A claim that will be important for us in order to show an interactive protocol for all languages $\lang$ in $\pBQP$ is the following.
\begin{claim} \label{cl:postBQPtop}
Any language $\lang \in \pBQP$ can be decided by calculating a division $y/z$ for two real values 
$y$ and $z$ which are both 
 traces of \textbf{top row matrices}, each  
for an arbitrary sequence of local unitary quantum gates $g_T\ldots g_1$, 
acting non-trivially on (at most)
three out of $n$ input qubits such that $T=poly(n)$,
and are each calculated to within accuracy $\epsilon=\frac{1}{10\cdot2^n}$.
\end{claim}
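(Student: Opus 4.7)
The plan is to decide $\mathcal{L}\in\pBQP$ by using Definition \ref{def:postbqp} to write its acceptance condition as the ratio $y/z=P[q_1=q_2=0]/P[q_1=0]$, which is $\geq 2/3$ on yes-instances and $\leq 1/3$ on no-instances with $z\geq 2^{-n}$. To create the slack needed for dividing by a potentially tiny $z$, I would first apply standard PostBQP parallel-repetition-and-majority amplification with a constant number $k$ of copies (easily enough to push the ratio beyond $7/9$ on yes-instances and below $2/9$ on no-instances), and then absorb the input $x$ into the circuit by prepending single-qubit $X$ gates. This produces a circuit $C'$ on $N=\poly(n)$ qubits acting on the initial state $\ket{0^N}$, still satisfying $z\geq 2^{-N}$, because the joint post-selection probability of $k$ independent copies of the original circuit is at least $(2^{-n})^k\geq 2^{-N}$.

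Next, I would use the same ancilla-and-CNOT trick as in the proof of Claim \ref{cl:top row} to rewrite $z$ and $y$ as top-row matrix traces. Introducing one fresh ancilla (with one CNOT) for the projector onto $q_1=0$ and two ancillas (with two CNOTs) for the projector onto $q_1=q_2=0$ gives
\[
z=\tr{\ket{0^{N+1}}\bra{0^{N+1}}\cdot C'^{\dagger}\cdot\CNOT_{1\to N+1}\cdot C'},
\]
\[
y=\tr{\ket{0^{N+2}}\bra{0^{N+2}}\cdot C'^{\dagger}\cdot\CNOT_{2\to N+2}\cdot\CNOT_{1\to N+1}\cdot C'}.
\]
Both right-hand sides are traces of top-row matrices of circuits on $\poly(n)$ qubits built from $\poly(n)$ gates, each of locality at most three, so they fall within the hypotheses of the claim.

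The final step is the error analysis for the division. For approximations $\hat y,\hat z$ satisfying $|\hat y-y|,|\hat z-z|\leq\epsilon=1/(10\cdot 2^{n'})$, where $n'\in\{N+1,N+2\}$ is the number of qubits of the respective top-row matrix, the bound $z\geq 2^{-N}\geq 2^{-n'}$ yields $\hat z\geq 9/(10\cdot 2^{n'})$. Using the obvious inequality $y\leq z$,
\[
\left|\frac{\hat y}{\hat z}-\frac{y}{z}\right|\leq\frac{|\hat y-y|\cdot z+y\cdot|\hat z-z|}{\hat z\cdot z}\leq\frac{2\epsilon}{\hat z}\leq\frac{2}{9},
\]
which is strictly smaller than half the amplified gap $5/9=7/9-2/9$, so thresholding $\hat y/\hat z$ at $1/2$ correctly decides $\mathcal{L}$. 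I expect this last inequality to be the delicate one: because $z$ is only guaranteed to be $\geq 2^{-N}$, relative errors in the division blow up by a factor of order $2^N$, which is precisely what forces the exponentially small additive accuracy $\epsilon$ in the claim and makes the constant-factor amplification in the first step genuinely necessary.
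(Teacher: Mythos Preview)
Your proof is correct and follows essentially the same route as the paper: express the conditional acceptance probability as $y/z$ with $y=\Pr[q_1=q_2=0]$ and $z=\Pr[q_1=0]$, absorb the input via $X$ gates, use the ancilla-and-CNOT trick from Claim~\ref{cl:top row} to write both as top-row-matrix traces, and bound the division error using $z\geq 2^{-(\text{qubits})}$.

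The one substantive difference is your amplification step, which you call ``genuinely necessary.'' It is not: the paper works directly with the raw $(1/3,2/3)$ gap. With $\epsilon\leq z/10$ the paper bounds $(y-\epsilon)/(z+\epsilon)\geq\tfrac{10}{11}(y/z-\tfrac{1}{10})>\tfrac{1}{2}+0.01$ on yes-instances and $(y+\epsilon)/(z-\epsilon)\leq\tfrac{10}{9}(y/z+\tfrac{1}{10})<\tfrac{1}{2}-0.01$ on no-instances. In fact your own estimate $|\hat y/\hat z-y/z|\leq 2\epsilon/\hat z\leq 2/9$ (obtained without using the amplification) is already below half of the unamplified gap $1/3$, so even your analysis does not actually need the amplification you introduced. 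The paper's direct argument is shorter; your version is correct but carries an extra step that buys nothing.
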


\begin{proof}
Given a language $\lang\in\pBQP$ with input $x$ of $n$ bits, consider the 
quantum circuit $C_n$ as in Definition \ref{def:postbqp}, 
where the number of its ancilla qubits 
is $n_{Anc}$ and it acts on $n'=n+n_{Anc}$ qubits.
Consider the final state of the circuit, $C_n\left(\ket{00\cdots 0}\otimes\ket{x}\right)$. Denote by $a$ the outcome of a measurement of the second qubit in that state in the computational basis, and by $b$ the measurement outcome when measuring the first qubit in that state in the computational basis.

\begin{itemize}
\item{if $x\in\lang$:} $\prob{a=\ket{0}|b=\ket{0}}\geq\frac{2}{3}$
\item{if $x\notin\lang$:} $\prob{a=\ket{0}|b=\ket{0}}\leq\frac{1}{3}$
\end{itemize}

Consider the sequence of quantum gates $g_1,....,g_L$ describing the  
simple quantum computation consiting of first mapping
$\ket{0^n}\mapsto \ket{x}$ by flipping the required bits one by one, and then applying the circuit $C_n$. 

It holds that:

\[
\prob{a=0|b=0}
=\frac{\prob{a=0\cap b=0}}{\prob{b=0}}
=\frac{\|(\ket{00}\bra{00}\otimes I_{n-2})\cdot G_L\cdot G_{L-1}\cdots G_1 \vert0^{n'}\rangle \|^2}
	  {\|(\vert0\rangle \langle0\vert\otimes I_{n-1})\cdot G_L\cdot G_{L-1}\cdots G_1 \vert0^{n'}\rangle \|^2}
\]

And a similar argument to that used in proof of Claim \ref{cl:top row} shows that both the nominator and denominator can be calculated by calculating the trace of a 
\textit{top row matrix} which comprises $\poly{n}$ local gates.
 We denote the value in the nominator $y$, and the value in the denominator $z$, we keep in mind that $z\geq 10\epsilon$ and go back to look at the two case:

\begin{itemize}
\item{if $x\in\lang$:} $\frac{y\pm\epsilon}{z\pm\epsilon}
			\geq\frac{y-\epsilon}{z+\epsilon}
			\geq\frac{y-\frac{1}{10}z}{z+\frac{1}{10}z}
			=\frac{10}{11}(\frac{y}{z}-\frac{1}{10})
			\geq\frac{10}{11}(\frac{2}{3}-\frac{1}{10})>\frac{1}{2}+0.01$
\item{if $x\notin\lang$:} $\frac{y\pm\epsilon}{z\pm\epsilon}
			\leq\frac{y+\epsilon}{z-\epsilon}
			\leq\frac{y+\frac{1}{10}z}{z-\frac{1}{10}z}
			=\frac{10}{9}(\frac{y}{z}+\frac{1}{10})
			\leq\frac{10}{9}(\frac{1}{3}+\frac{1}{10})<\frac{1}{2}-0.01$
\end{itemize}

And so, by calculating the two \textit{top row matrices} $y$ and $z$
with accuracy $\epsilon$, we can differentiate between the cases and the 
claim holds. 
\end{proof}

\section{Derivation of Main Results} \label{sec:results}
We now show how our main results, Theorems \ref{thm:maintheorem1} and \ref{thm:maintheorem2}, can be derived from an interactive proof for the value of a \textit{top row matrix}.
In Section \ref{sec:bounded} we provide an interactive protocol $\widehat{W}$, with a BPP verifier, in which an honest $\pBQP$ prover can
prove the correctness of a value $C$ claimed to be  
the trace of a \textit{top row matrix}, 
to within inverse exponential accuracy. 
We will prove in that section (section \ref{sec:bounded}) that
$\widehat{W}$ has completeness $1$ even if the prover $P$ is 
computationally limited to $\pBQP$. 
We prove that $\widehat{W}$ has soundness at most $\frac{1}{3}$ against any prover in the same section. 

We remark here that throughout the paper, in any place where the ($\pBQP$) prover $P$ is said to compute entries or traces of matrices to within exponential precision,
what we mean is that those values are recovered by the verifier $V$ by a sequence of (polynomially many) binary search queries, as the prover's computational power is assumed to be the decision class $\pBQP$.

\paragraph{Theorem \ref{thm:maintheorem1}}
$\pBQP\in \IP[\BPP, \pBQP]$\\
\textit{Furthermore, the $\IP[\BPP, \PP]$ protocol above has completeness 1.}
\begin{proof}
By Claim \ref{cl:postBQPtop} it suffices to verify the traces of two
\textit{top row matrices} to within $\pm\frac{1}{10\cdot 2^n}$ 
in order to decide on any instance of a language $\lang \in \pBQP$. 
By Claims \ref{cl:bounded completeness} and \ref{cl:bounded soundness} the protocol $\widehat{W}$ does just that for a single \textit{top row matrix}, with completeness 1 (even when the prover is restricted to be a $\pBQP$ machine) and soundness at most $\frac{1}{3}$. By repeating the protocol $\widehat{W}$ twice and accepting if and only if both runs of $\widehat{W}$ pass (for each of the two \textit{top row matrices}), the soundness of each of the \textit{top row matrices} is reduced to $\frac{1}{9}$, while maintaining completeness 1. By the union bound, this means that conditioned on both (repeated) protocols passing, the probability for one the values of the traces of not be within $\pm\frac{1}{10\cdot 2^n}$ of its actual value is at most $\frac{2}{9}<\frac{1}{3}$. This proves Theorem \ref{thm:maintheorem1}.
\end{proof}

\paragraph{Theorem \ref{thm:maintheorem2}}
$\PsP\in \IP[\BPP, \PP]$. 
\textit{Furthermore, the $\IP[\BPP, \PP]$ protocol above has completeness 1.}\\
{\bf Proof Sketch:}
We first note that the protocol of Theorem \ref{thm:maintheorem1} can be repeated so that the error (soundness) is exponentially reduced. Now, for any language in $\PsP=\PPP=\PpBQP$ we construct a protocol where each query to the $\pBQP$ oracle is simulated by using the repeated protocol of Theorem \ref{thm:maintheorem1}. The repeated Theorem \ref{thm:maintheorem1} protocol is used both for the original query and for its complement. The original query is used for verifying that the answer is 1 (if that is the case), and if it is not the case - the complement query is used to verify that the answer is 0. The combination gives the correct (oracular) answer with probability exponentially close to 1 (in the number of repetitions), so by using the union bound - we get that with high probability all the oracle call simulations give the correct results. This in turn means that the verifier, which simulates the $\PpBQP$ algorithm, will deduce the correct answer. The complete proof of Theorem \ref{thm:maintheorem2} can be found in Appendix \ref{app:T1.2}.

\section{The Interactive protocol (assuming unlimited precision)}
\label{sec:W}
We will now present our first protocol, which verifies the trace of a \textit{top row matrix} $A$ for a quantum computation given by a sequence of local 
quantum gates $g_1,\cdots ,g_T$.

That is, our protocol verifies 
$\tr{A}=C$ 
for $A=\vert0^n\rangle \langle 0^n \vert G_T\cdot G_{T-1}\cdots G_1$ and value $C$. This first protocol, denoted $W$, assumes infinite precision, and works with
completeness 1 and soundness 0. 

The protocol begins by the verifier asking the prover to 
send the three qubit matrix $M_0$, which is the matrix $A=A_0$ reduced to the qubits of the first 
gate $g_1$. Of course, the prover may not send the correct matrix; 
Denote by $M'_0$ the actual matrix the prover sends. 
The verifier then performs his first consistency check: 
he checks that the {\it trace $M'_0$}
is indeed $C$, as expected. 

Next, the verifier chooses a random
unitary $U_1$ acting on the three qubits of the gate, $G_{1}$, 
which will be used soon to ``replace'' the gate $G_1$ in the matrix. 
Importantly, 
$U_1$ is chosen to be a tensor product of random single qubit matrices. 
$U_1$ will be used to define a new target computation $A_1$, 
$A_1$ is essentially defined to ``peel off'' $G_1$ from $A=A_0$ 
and replace it by the random $U_1$, in a cyclic manner: 
$A_1=U_1 \cdot \vert0^n\rangle \langle 0^n \vert G_T\cdot G_{T-1}\cdots G_2$. 

Now, the prover is supposed to send $M_1$, which is 
$A_1$ reduced to the three qubits of the {\it next} gate, $G_2$; 
The reduced matrices that the verifier has, $M'_0$ and $M'_1$, 
don't even act on the same set of qubits, but if correct, 
their {\it traces} can be related, which is the consistency check that the 
verifier performs. 

If they pass, the verifier and prover pass to ``peeling off'' the next gate: 
the verifier now picks a randomization matrix $U_2$ for the qubits 
of $G_2$, asks for another matrix $M_2$, and so on. 

At the end of this process, because 
all $U_i$ are tensor product matrices, 
the verifier is left with a calculation composed strictly of tensor product operators, which he can compute by himself.

The soundness of the protocol relies on the fact 
that if $M'_i$ is different than $M_i$
the prover must be extremely lucky (in the choice of $U_i$) to be able to send
$M'_{i+1}= M_{i+1}$ 
and still pass the consistency check.  
Thus he is caught in his lies, and must continue cheating also in the 
next round (and send $M'_{i+1}\neq M_{i+1}$). If he keeps doing this, 
he will get caught in the final round which is a computation the verifier 
can perform.\\

Before moving to a the full description of the protocol, we will define $U(2)$ - the measure on single qubits unitaries from which unitaries in the protocol are drawn.

\begin{defn} We define the following measure \label{def:U(2)}
on unitary operators on a single qubit, denoted $U(2)$: 
choose angles $\theta, \varphi_1, \varphi_2$ uniformly at
random from $[0,2\pi]$, and define the resulting unitary as:

\[
	\begin{pmatrix}   
	\cos\theta\cdot e^{i\varphi_1} & \sin\theta\cdot e^{i\varphi_2} \\
	-\sin\theta\cdot e^{-i\varphi_2} & \cos\theta\cdot e^{-i\varphi_1}
	\end{pmatrix}
\]
\end{defn}
We note that this measure is reminiscent of the Haar measure on single qubit unitaries, Except that according to the Haar measure the distribution of the angle $\theta$ is different \cite{Ozo09}. We use this  
modified measure as it simplifies the calculations.

We now give a detailed description of the protocol $W$, where the verifier $V$ is given a quantum \textit{top row matrix} $A_{2^n\times 2^n}=\vert0^n\rangle \langle 0^n \vert\cdot G_T\cdot G_{T-1}\cdots G_1$, and interacts with a prover $P$ (which we assume here in this first protocol to be unbounded) to verify $\tr{A}=C$:

\begin{framed}
	\paragraph{The protocol W:}
	\begin{enumerate}
		\item {\em In the 0'th round} -- $V$ asks for $M_0=A|_{g_1}$, receives back a matrix $M'_0$, and verifies that $C=\tr{M'_0}$ (rejects otherwise).
 		\item {\em In the $i$'th round} -- $V$ chooses $u_i^1, u_i^2, u_i^3 \backsim U(2)$, sets $u_i = u_i^1\otimes u_i^2 \otimes u_i^3$ on the qubits on which $g_i$ operates, and asks for:
		 \[
		 M_i=(U_i\cdot U_{i-1}\cdots U_1 \cdot \vert0^n\rangle \langle 0^n \vert\cdot G_T\cdots G_{i+1})|_{g_{i+1}}
		\]
		$V$ receives back a matrix $M'_i$, and verifies that $\tr{M'_i}=\tr{M'_{i-1}\cdot g_i^{-1}\cdot u_i}$.
		\item {\em In the T'th round} -- $V$ chooses $u_T$ as before, and accpets if $\tr{U_T	\cdot U_{T-1}\cdots U_1 \cdot \vert0^n\rangle \langle 0^n \vert}=\tr{M_{T-1}\cdot g_T^{-1}\cdot u_T}$. (rejects otherwise).
	\end{enumerate}
\end{framed}
\section{Completeness and Soundness for W} \label{sec:infinite} 
\begin{theorem}\label{thm:baby}
Assuming $P$ and $V$ can represent, communicate, and perform calculations on values with infinite precision, the protocol $W$ has completeness 1 and soundness 0.
\end{theorem}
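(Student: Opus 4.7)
The plan is to analyze completeness and soundness separately, both via a round-by-round argument hinging on a single trace identity. The key identity is
\[
\tr{M_i} \;=\; \tr{M_{i-1}\cdot g_i^{-1}\cdot u_i},
\]
valid for the correct reduced matrices $M_i$ specified in the protocol; this is exactly what the verifier's round-$i$ consistency check compares against.

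First I would establish the identity. Starting from $M_i = (U_i U_{i-1}\cdots U_1\ket{0^n}\bra{0^n} G_T\cdots G_{i+1})|_{g_{i+1}}$, I would use the fact that a full trace commutes with a partial trace, so $\tr{M_i}$ equals the full $n$-qubit trace of the operator inside the reduction. For the right-hand side, the standard identity $\tr{(\tr_{\bar s_i} X)\cdot Y} = \tr{X\cdot (Y\otimes I)}$ (for $Y$ supported on the qubits $s_i$ of $g_i$) rewrites $\tr{M_{i-1}\cdot g_i^{-1}\cdot u_i}$ as $\tr{U_{i-1}\cdots U_1\ket{0^n}\bra{0^n} G_T\cdots G_i\cdot G_i^{-1}\cdot U_i}$; after cancelling $G_i\cdot G_i^{-1}$ and cyclically rotating $U_i$ this matches $\tr{M_i}$ exactly. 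Completeness is then immediate: the base check $\tr{M_0}=\tr{A}=C$ and the final-round check both arise from the same algebra, so an honest prover passes every round with certainty.

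For soundness, I would assume $C\neq \tr{A}$ and track the error $\Delta_i := M'_i - M_i$, where $M'_i$ is what the prover actually sends. The base case follows from the round-$0$ check: if the prover passes it then $\tr{\Delta_0}=C-\tr{A}\neq 0$, hence $\Delta_0\neq 0$ deterministically. For the inductive step I would subtract the completeness identity from the (passed) round-$i$ check to obtain
\[
\tr{\Delta_i} \;=\; \tr{\Delta_{i-1}\cdot g_i^{-1}\cdot u_i}.
\]
Since $g_i$ is unitary, $\Delta_{i-1}\neq 0$ implies $\Delta_{i-1}\cdot g_i^{-1}$ is a nonzero three-qubit operator; Lemma \ref{lm:basic} then forces the right-hand side to be nonzero with probability $1$ over the tensor-product unitary $u_i = u_i^1\otimes u_i^2\otimes u_i^3$, so $\Delta_i\neq 0$ as well. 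Iterating, $\Delta_{T-1}\neq 0$ almost surely. At the final round the verifier computes $\tr{U_T\cdots U_1\ket{0^n}\bra{0^n}}$ on his own; acceptance requires $\tr{\Delta_{T-1}\cdot g_T^{-1}\cdot u_T}=0$, which by Lemma \ref{lm:basic} has probability $0$. A union bound over the polynomially many rounds yields soundness $0$.

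The hard part will be bookkeeping rather than conceptual. I have to be careful that the support conventions line up correctly in each round (each $M_i$ lives on the support of the next gate $g_{i+1}$, while $g_i^{-1}$ and $u_i$ live on the support of $g_i$) so the partial-trace identity is invoked on the right subsystem; and I must check that Lemma \ref{lm:basic} applies to arbitrary nonzero three-qubit operators, including the non-Hermitian operators $\Delta_{i-1}\cdot g_i^{-1}$ that arise in the induction. Since the lemma is stated for general operators and the random unitaries at distinct rounds are independent, these points are routine, and the entire proof reduces to a clean induction plus one invocation of Lemma \ref{lm:basic} per round.
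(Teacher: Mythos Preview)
Your proposal is correct and follows essentially the same approach as the paper: the completeness argument is identical (the paper packages your partial-trace identity as a separate Lemma~\ref{lm:reduced assosiativeness}, but the content is the same), and the soundness argument is the same induction on the error matrix $\Delta_i$ with one invocation of Lemma~\ref{lm:basic} per round. The only cosmetic differences are that the paper defines $\Delta_i = M_i - M'_i$ with the opposite sign, and phrases the soundness case analysis as ``either the prover escapes at some intermediate round or reaches round $T$ with $\Delta_{T-1}\neq 0$'' rather than as your cleaner direct induction, but these are equivalent formulations of the same argument.
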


The main idea in our protocol is the fact that the $u_i$ unitaries are simple enough to allow $V$ to accurately compute $\tr{U_T\cdot U_{T-1}\cdots U_1\cdot \vert0^n\rangle \langle 0^n \vert}$ - this is true as $\vert0^n\rangle \langle 0^n \vert=\vert 0\rangle \langle 0\vert \otimes \vert 0\rangle \langle 0\vert \otimes \cdots \otimes \vert 0\rangle \langle 0\vert$, and so the matrix $U_T \cdot U_{T-1} \cdots U_1 \ket{0^n}\bra{0^n}$ is in fact a product of matrices, where each of these matrices is a tensor product of single qubit matrices. But still, the $u_i$ unitaries provide enough randomness for our consistency checking to be effective. That is, if P did not provide us the $M_{i-1}$ matrix we asked for at round $i-1$, he will not pass the consistency checking $\tr{M_i}=\tr{M_{i-1}\cdot h_i}$ with the $M_i$ we asked for. 
The proof of completeness is given in subsection \ref{sec:w completeness}, and that of soundness in subsection \ref{sec: w soundness}.
\subsection{Completeness}
\label{sec:w completeness}
	
	To prove the protocol's completeness, we will first prove the following basic Lemma \ref{lm:reduced assosiativeness} (proven in appendix \ref{app:L3}):

	\begin{lemma}\label{lm:reduced assosiativeness}
	Let $U$ an operator on a hilbert space \(\mathcal{H}\), and $q$ an operator on a subsystem $Q$ of \(\mathcal{H}\):
	\[
	U|_Q\cdot q=(U\cdot(q\otimes I_{\overline{Q}}))|_Q
	\]
	\end{lemma}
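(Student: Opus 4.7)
The plan is to verify the identity by a direct expansion in a tensor product basis, since the statement is a purely linear-algebraic identity describing how the partial trace interacts with right-multiplication by an operator supported on the subsystem $Q$. No interesting quantum reasoning is involved; the work is entirely index bookkeeping using the explicit definition of the partial trace given in the Notations section.

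First I would fix orthonormal bases $\{\ket{i}\}$ for $\ham_Q$ and $\{\ket{k}\}$ for $\ham_{\overline{Q}}$, and expand both operators as $U = \sum_{i,j,k,l} U_{i,k,j,l}\,\ket{i}\bra{j}\otimes\ket{k}\bra{l}$ and $q = \sum_{a,b} q_{a,b}\,\ket{a}\bra{b}$. By the explicit formula for the partial trace given in the paper, $U|_Q = \sum_{i,j,k} U_{i,k,j,k}\,\ket{i}\bra{j}$, and hence the left-hand side expands to $U|_Q \cdot q = \sum_{i,j,k,b} U_{i,k,j,k}\,q_{j,b}\,\ket{i}\bra{b}$, using the orthonormality relation $\braket{j}{a}=\delta_{j,a}$ to collapse one index.

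For the right-hand side, I would first compute $U\cdot(q\otimes I_{\overline{Q}})$. Since $I_{\overline{Q}}$ acts trivially on the $\overline{Q}$ indices and $\braket{j}{a}=\delta_{j,a}$ contracts the inner $Q$-indices, this product equals $\sum_{i,j,k,l,b} U_{i,k,j,l}\,q_{j,b}\,\ket{i}\bra{b}\otimes\ket{k}\bra{l}$. Applying the partial trace definition again, tracing out $\overline{Q}$ sets $l=k$ and sums over $k$, which produces $\sum_{i,j,k,b} U_{i,k,j,k}\,q_{j,b}\,\ket{i}\bra{b}$; this is exactly the expression obtained for the LHS, completing the proof.

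An arguably slicker alternative is to reduce to elementary tensor operators by bilinearity: it suffices to check the claim for $U = A\otimes B$ with $A$ acting on $Q$ and $B$ on $\overline{Q}$, for which both sides collapse immediately to $\tr{B}\cdot (A\,q)$; the general statement then follows from linearity of everything in sight in $U$. Either way, the argument is mechanical; the only pitfall is keeping the four-index labeling consistent, and there is no conceptual obstacle.
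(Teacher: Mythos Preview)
Your proof is correct. Both the index-by-index expansion and the elementary-tensor reduction you sketch are valid, and each yields the identity in a few lines.

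The paper takes a genuinely different route: it argues by induction on the number $i$ of qubits in $Q$. For the base case $i=1$ it writes $U$ as a $2\times 2$ block matrix (blocks indexed by the single qubit in $Q$), multiplies out $U\cdot(q\otimes I_{\overline{Q}})$ in block form, applies the reduction to $Q$ as blockwise traces, and rearranges to recognise $U|_Q\cdot q$. The inductive step peels off one qubit of $Q$ at a time in the same block fashion, invoking the inductive hypothesis on each of the four $(n-1)$-qubit blocks. This is considerably more laborious than your approach and is also specific to the qubit tensor structure, whereas your argument works verbatim for arbitrary finite-dimensional subsystems $Q$ and $\overline{Q}$. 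The paper's method does have the minor expository virtue of matching the block-matrix style used elsewhere in the appendices, but in terms of content your direct computation (and especially the elementary-tensor version) is shorter, cleaner, and loses nothing.
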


	\begin{defn} We define $P$'s \textit{Truthful Strategy} to be:
		\begin{enumerate}

			\item {\em In the 0'th round} -- $P$ sends back $A|_{g_1}$.
  
	 		\item {\em In the $i$'th round} -- $P$ receives $u_i$ from $V$, and sends back $(U_i\cdot U_{i-1}\cdots U_1 \cdot \vert0^{n}\rangle\langle0^{n}\vert\cdot G_T\cdots G_{i+1})|_{g_{i+1}}$.

		\end{enumerate}
	\end{defn}

	\begin{claim}\label{cl:w completeness}
	if $\tr{A}=C$, $P$ passes $W$ with probability 1
	\end{claim}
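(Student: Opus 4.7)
The plan is to verify that under the truthful strategy every consistency check made by $V$ holds deterministically, so that the acceptance probability is exactly $1$ regardless of the random choices of the $u_i$'s.

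First I would handle round $0$. By the truthful strategy, $M'_0 = A|_{g_1} = \tr_{\overline{g_1}}(A)$. Taking the (remaining) trace over the qubits of $g_1$ amounts to tracing over all $n$ qubits of $A$, so $\tr{M'_0} = \tr{A} = C$. Hence the $0$'th round check passes.

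Next, for each intermediate round $i$ with $1 \le i \le T-1$, I would show that $\tr{M'_{i-1} \cdot g_i^{-1} \cdot u_i} = \tr{M'_i}$. The key manipulation is an application of Lemma \ref{lm:reduced assosiativeness}: since $M'_{i-1}$ is a reduction onto the qubits of $g_i$, and both $g_i^{-1}$ and $u_i$ act on exactly those qubits, the lemma gives
\[
M'_{i-1}\cdot g_i^{-1}\cdot u_i \;=\; \bigl(U_{i-1}\cdots U_1\cdot\ketbra{0^n}\cdot G_T\cdots G_i \cdot G_i^{-1}\cdot U_i\bigr)\big|_{g_i}.
\]
Cancelling $G_i\cdot G_i^{-1} = I$ and then using cyclicity of the trace (first to collapse the reduction to a full trace, and then to bring $U_i$ to the front) yields
\[
\tr{M'_{i-1}\cdot g_i^{-1}\cdot u_i} \;=\; \tr\!\bigl(U_i\cdot U_{i-1}\cdots U_1\cdot\ketbra{0^n}\cdot G_T\cdots G_{i+1}\bigr) \;=\; \tr{M'_i},
\]
where the last equality again uses that a reduction followed by a trace equals the full trace. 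Hence the $i$'th consistency check is satisfied.

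Finally, round $T$ is the same calculation with the difference that the full expression $U_T\cdots U_1\ketbra{0^n}$ is computed by $V$ himself rather than sent by $P$; the identity $\tr{U_T\cdots U_1\ketbra{0^n}} = \tr{M'_{T-1}\cdot g_T^{-1}\cdot u_T}$ follows by exactly the same cancellation $G_T\cdot G_T^{-1}=I$ and cyclicity. Since every check passes as an exact equality for every choice of $u_1,\dots,u_T$, the claim follows. The only delicate point is the bookkeeping around Lemma \ref{lm:reduced assosiativeness} (i.e.\ making sure the operators being multiplied really act only on the qubits over which $M'_{i-1}$ was reduced), but this is immediate from the definition of the protocol, so no real obstacle arises.
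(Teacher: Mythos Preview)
Your proposal is correct and follows essentially the same approach as the paper: you verify the truthful strategy passes each round by applying Lemma~\ref{lm:reduced assosiativeness} to pull $g_i^{-1}\cdot u_i$ inside the reduction, cancel $G_i\cdot G_i^{-1}$, and use cyclicity of the trace together with the fact that tracing the reduced matrix equals the full trace. The paper organizes this into three subclaims (rounds $0$, $1\le i\le T-1$, and $T$) with the identical chain of equalities.
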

	\begin{proof} We show that using it's \textit{Truthful Strategy}, $P$ passes each of $V$'s verification stages:

		\begin{subclaim}\label{cl:w round 0}		
		$V$'s verification passes the $0$'th round.
		\end{subclaim}
		\begin{proof} This is obvious, as 
			\[
			C=\tr{A}=\tr{A|_{g_1}}
			\] 

		\end{proof}

		\begin{subclaim}\label{cl:w round i}$V$'s verification passes the $i\in[T-1]$'th round.
		\end{subclaim}
		\begin{proof} This is true, as using Lemma \ref{lm:reduced assosiativeness}:
		
		\begin{align*}
			\tr{M_{i-1}\cdot g_i^{-1}\cdot u_i}
			&=\tr{(U_{i-1}\cdot U_{i-2}\cdots U_1 \cdot \vert0^{n}\rangle\langle0^{n}\vert\cdot \cdot G_T\cdots G_{i})|_{g_{i}}\cdot g_i^{-1} \cdot u_i}\\
			&=\tr{U_{i-1}\cdot U_{i-2}\cdots U_1 \cdot \vert0^{n}\rangle\langle0^{n}\vert \cdot G_T\cdots G_{i}\cdot G_i^{-1}\cdot U_i}\\
			&=\tr{U_{i-1}\cdot U_{i-2}\cdots U_1 \cdot \vert0^{n}\rangle\langle0^{n}\vert \cdot G_T\cdots G_{i-1}\cdot U_i}\\
			&=\tr{U_i\cdot U_{i-1}\cdot U_{i-2}\cdots U_1 \cdot \vert0^{n}\rangle\langle0^{n}\vert \cdot G_T\cdots G_{i-1} }\\
			&=\tr{M_i} \numberthis \label{tr(M g u)}
		\end{align*}

		\end{proof}	

		\begin{subclaim}\label{cl:w round T}$V$'s verification passes the $T$'th round.	
		\end{subclaim}
		\begin{proof}
		\begin{align*}
			\tr{M_{T-1}\cdot g_T^{-1}\cdot u_T}
			&=\tr{(U_{T-1}\cdot U_{T-2}\cdots U_1 \cdot \vert0^{n}\rangle\langle0^{n}\vert \cdot G_T)|_{g_{i}}\cdot g_T^{-1}\cdot u_T}\\
			&=\tr{U_T\cdot U_{T-1}\cdots U_1 \cdot \vert0^{n}\rangle\langle0^{n}\vert}
		\end{align*}
		\end{proof}
	
		And so the proof for Claim \ref{cl:w completeness} is established.

	\end{proof}

\subsection{Soundness} \label{sec: w soundness}
	\begin{claim}\label{cl:w soundess}
	if $\tr{A}\neq C$, $P$ passes $W$ with probability 0.
	\end{claim}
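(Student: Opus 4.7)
The plan is to track the \emph{error matrices} $\Delta_i = M_i - M'_i$ at each round and argue inductively, via Lemma \ref{lm:basic}, that a nonzero $\Delta_{i-1}$ forces $\Delta_i \neq 0$ with probability one, culminating in detection at round $T$. The base case is deterministic: the hypothesis $\tr{A} \neq C$, together with $\tr{M_0} = \tr{A}$ (Claim \ref{cl:w round 0}) and the $0$th-round check $\tr{M'_0} = C$, gives $\tr{\Delta_0} = \tr{A} - C \neq 0$, and in particular $\Delta_0 \neq 0$, before any randomness is drawn.

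For each round $i \in \{1, \ldots, T-1\}$, subtracting the truthful identity from Claim \ref{cl:w round i} from the verifier's consistency check $\tr{M'_i} = \tr{M'_{i-1} \cdot g_i^{-1} \cdot u_i}$ (which the prover must satisfy in order not to be rejected) yields
\[ \tr{\Delta_i} \;=\; \tr{\Delta_{i-1} \cdot g_i^{-1} \cdot u_i} \;=\; \tr{(\Delta_{i-1} g_i^{-1}) \cdot u_i}. \]
Here $\Delta_{i-1} g_i^{-1}$ acts on the three qubits on which $g_i$ acts, while $u_i = u_i^1 \otimes u_i^2 \otimes u_i^3$ is the random tensor-product unitary drawn in that round. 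Since $g_i^{-1}$ is invertible, $\Delta_{i-1} \neq 0$ implies $\Delta_{i-1} g_i^{-1} \neq 0$, so conditioning on the past randomness (which fixes $\Delta_{i-1}$) and applying Lemma \ref{lm:basic} with $n = 3$ gives $\tr{\Delta_i} \neq 0$, hence $\Delta_i \neq 0$, with probability one. Iterating, $\Delta_{T-1} \neq 0$ with probability one. At round $T$, subtracting the truthful identity of Claim \ref{cl:w round T} from the verifier's final check shows that passing requires $\tr{\Delta_{T-1} g_T^{-1} u_T} = 0$, an event of probability zero by one more application of Lemma \ref{lm:basic}. A union bound over the $T$ zero-probability events concludes that the prover is accepted with probability $0$.

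The only subtlety worth flagging is that $g_i^{-1} u_i$ is not itself a pure tensor product, so Lemma \ref{lm:basic} does not apply to it directly; this is handled by absorbing $g_i^{-1}$ into the error matrix and invoking invertibility of $g_i$. No deeper obstacle arises: the consistency checks are scalar (trace) identities, so the fact that $\Delta_{i-1}$ and $\Delta_i$ act on different triples of qubits is immaterial to the chain of implications. The entire argument is an honest scalar bookkeeping of the random experiment, with Lemma \ref{lm:basic} doing all the heavy lifting.
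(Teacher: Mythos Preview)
Your proof is correct and follows essentially the same approach as the paper's: both track the error matrices $\Delta_i = M_i - M'_i$, derive the identity $\tr{\Delta_i} = \tr{(\Delta_{i-1} g_i^{-1}) u_i}$ from the consistency check and the truthful relation, absorb $g_i^{-1}$ into the error matrix, and invoke Lemma~\ref{lm:basic}. The only cosmetic difference is that the paper phrases the argument as a two-case split (either $\Delta_i$ drops to zero at some intermediate round, or $\Delta_{T-1}\neq 0$ at the end), whereas you run an explicit induction maintaining the slightly stronger hypothesis $\tr{\Delta_i}\neq 0$ and finish with a union bound; these are the same proof in different clothes.
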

	\begin{proof}
		First, let us define for each round $i$ 
the three qubits matrix $\Delta_i$ which is the error matrix for that round. $\Delta_i$ is 
the difference between the correct matrix $M_i$ and
the matrix $M'_i$ that the prover actually sent: $\Delta_i=M_i-M'_i$.

Using this notation, and assuming $P$ passes the $0'th$ round, the assumption $\tr{A}\neq C$ translates into:
		\begin{equation}\label{eq:tr(delta0)=0}
		\tr{\Delta_0}=\tr{M_0-M'_0}=\tr{M_0}-\tr{M'_0}=\tr{A}-C\neq 0 
		\end{equation}
		
		So $\Delta_0\neq 0$. In order for $P$ to pass the tests in 
$W$, $P$ must pass all of the rounds $\{1,\cdots,T-1\}$, and either:
		\begin{enumerate}
			\item pass round $i$
 (for some $0<i<T$) with $\Delta_{i-1}\neq 0$ and $\Delta_{i} =0$; or: \label{enum1}
			\item pass the $T'th$ round with $\Delta_{T-1}\neq0$ \label{enum2}
		\end{enumerate}

Consider the first case: for some $0<i<T$ 
we have $\Delta_{i-1}\neq 0$ while $\Delta_{i}=0$. 
We claim that the probability (over the choice of $u_i$)
for $P$ to pass the consistency test 
$tr(M'_{i-1}\cdot g_i^{-1}\cdot u_i)=tr(M'_i)$ is $0$. 
This is because by the linearity of the trace, 
and using the fact that $tr(M_{i-1}\cdot g_i^{-1}\cdot u_i)=tr(M_i)$ 
(as we have seen in the completeness analysis)  
this entails:
		\begin{equation}\label{eq:tr(delta*h)=tr(delta)}
		tr(\Delta_{i-1}\cdot g_i^{-1}\cdot u_i)=tr(\Delta_i)=0 
\end{equation} 
	As $g_i^{-1}$ is unitary, we have 
$\Delta_{i-1} \neq 0 \Rightarrow \Delta'=\Delta_{i-1}\cdot g_i^{-1}\neq 0$, so we can use Lemma \ref{lm:basic} with the non-zero matrix 
$\Delta'$, to show that the probability for this over the choice of 
$u_i$ is $0$. 

Hence, we can deduce that if $P$ passed the tests at all rounds up to 
round $T$, it must be that $\Delta_{T-1}$ is nonzero. 
We now remember that $tr\left(M_{T-1}\cdot g_T^{-1}\cdot u_T\right)
			=tr\left(U_T\cdot U_{T-1}\cdots U_1 \cdot \vert0^{n}\rangle\langle0^{n}\vert\right)$, and the prover needs to pass the test
$tr\left(M'_{T-1}\cdot g_T^{-1}\cdot u_T\right)
		=tr\left(U_T\cdot U_{T-1}\cdots U_1 \cdot \vert0^{n}\rangle\langle0^{n}\vert\right)$. 
For this we must have 
$tr\left(\Delta_{T-1}\cdot g_T^{-1}\cdot u_T\right)=0$. Given that $\Delta_{T-1}\cdot g_T^{-1}$ is a non zero matrix, we have 
again that the probability for $u_T$ to satisfy this is zero, 
using Lemma \ref{lm:basic}. 
This means that if $P$ passes the first $T-1$ rounds, 
he has zero probability to pass the final round. 
This concludes the proof of Claim \ref{cl:w soundess}.

	\end{proof}	
	Having shown both the completeness and soundness parameters for $W$, we thus conclude the proof of Theorem \ref{thm:baby}. 

\section{Handling precision errors}
\label{sec:bounded}
The previous discussion of the protocol $W$ made the implicit assumption that all values can be computed and communicated over a classical channel with infinite accuracy. But of course, this assumption is false. The prover and verifier 
cannot communicate real numbers with infinite precision, 
the verifier cannot truly sample a unitary from the modified 
Haar measure with infinite precision.
Indeed, the protocol $W$ was introduced for didactic purposes; We now show 
how it can be slightly adjusted such that it uses only polynomially
many bits of classical communication, and can be carried out efficiently. 
We denote the adjusted protocol $\widehat{W}$. 

The modifications in the protocol are of several types, and we will 
use several {\it accuracy parameters} to describe them. 
First, since everything is going to work up to 
some accuracy, the consistency checking can no longer be done
using equality tests. Instead, each test for equality in $W$'s verification, 
is checked up to a small accuracy parameter $\mu=\frac{K}{4\chi^T}$ 
where $K=\frac{1}{10\cdot 2^n}$ and $\chi=60^{12}T^9$ (where $T$ is the number of gates). 
Another change is that the $u_i^j$ unitaries cannot be
chosen from the measure $U(2)$, 
as this requires infinite precision. Instead, we introduce another 
accuracy parameter $\xi=\frac{\mu}{2^{2n+11}T}=\Theta(\frac{1}{2^{n^{c}}})$, 
and the parameters defining the unitary matrix entries are 
defined up to this $\xi$: 
\begin{defn} We define choosing a unitary operator on a single qubit according to the \textit{truncated} measure as 
choosing angles $\theta, \varphi_1, \varphi_2$ with precision $\xi$ uniformly at random from $[0,2\pi]$ (By this, we mean that $\theta, \varphi_1, \varphi_2$ are chosen uniformly at random from $\{i\cdot \xi \ \big\vert \  i\in\mathbb{N},\ i\cdot \xi<2\pi\}$). These angles are the representation for the unitary

\[
	\begin{pmatrix}   
	\cos\theta\cdot e^{i\varphi_1} & \sin\theta\cdot e^{i\varphi_2} \\
	-\sin\theta\cdot e^{-i\varphi_2} & \cos\theta\cdot e^{-i\varphi_1}
	\end{pmatrix}
\]

We denote generating an operator $u$ on a single qubit according to this altered measure by: $u\backsim \widehat{U}(2)$
\end{defn}
One final issue we have to address, is the fact that given a unitary $u$ on a single qubit, we can not assume $V$ nor $P$ can perform $u$ accurately. We introduce the following approximation notion:
\begin{defn}
\label{def:truncated unitary}
Given a unitary operator 
\[ u=
	\begin{pmatrix}   
	\cos\theta\cdot e^{i\varphi_1} & \sin\theta\cdot e^{i\varphi_2} \\
	-\sin\theta\cdot e^{-i\varphi_2} & \cos\theta\cdot e^{-i\varphi_1}
	\end{pmatrix}
\] on a single qubit, we define the \textit{truncated unitary} $\widehat{u}$ to be a version of $u$ where each each of the entries is calculated up to an accuracy of $\xi$:
\[ \widehat{u}=
	\begin{pmatrix}   
	\floor{\cos\theta\cdot e^{i\varphi_1}}_{\xi} & \floor{\sin\theta\cdot e^{i\varphi_2}}_{\xi} \\
	\floor{-\sin\theta\cdot e^{-i\varphi_2}}_{\xi} & \floor{\cos\theta\cdot e^{-i\varphi_1}}_{\xi}
	\end{pmatrix}
\]

\end{defn}

Now, using the notation $A_i=U_i\cdot U_{i-1}\cdots U_1 \cdot \vert0^n\rangle \langle 0^n \vert\cdot G_T\cdots G_{i+1}$ we present the full protocol $\widehat{W}$, in which a $\BPP$ verifier $V$ is given a sequence of local quantum gates $g_T,\dots,g_1$, receives a value $C$ from the prover $P$, and for the \textit{top row matrix} $A_{2^n\times 2^n}=\vert0^n\rangle \langle 0^n \vert\cdot G_T\cdot G_{T-1}\cdots G_1$ verifies whether $tr(A)=C$ or $\abs{tr(A)-C} \geq\frac{1}{10\cdot 2^n}=K$ by interacting with $P$. By Claim \ref{cl:postBQPtop}, such a protocol is sufficient to derive an $\IP$ for $\PsP$.

\begin{framed}
	\paragraph{The protocol $\widehat{W}$:} \label{pr:boundedW}

	\begin{enumerate}

		\item {\em In the 0'th round} -- $V$ asks for $M_0=A|_{g_1}$, receives a matrix $M'_0$ from $P$, and verifies that $\left\vert C-tr(M'_0)\right\vert\;\leq\mu$ (rejects otherwise).
  
  		 \item {\em In the $i$'th round} -- $V$ chooses $u_i^1, u_i^2, u_i^3 \backsim \widehat{U}(2)$, sets $u_i = u_i^1\otimes u_i^2 \otimes u_i^3$ on the qubits on which $g_i$ operates, asks for $M_i=A_i|_{g_{i+1}}$,		 receives a matrix $M'_i$, and verifies that $\left\vert tr(M'_i)-tr(M'_{i-1}\cdot \widehat{g_i^{-1}}\cdot \widehat{u_i}\right\vert\leq\mu$ (rejects otherwise, or if $M'_i$ has an entry greater than $2^n$. This should never be the case as each entry is the sum of less that $2^n$ values which are at most 1 each)

		\item {\em In the T'th round} -- $V$ chooses $u_T$ as before, and accepts if: 
		
		$\left\vert tr(\widehat{U}_T\cdot \widehat{U}_{T-1}\cdots \widehat{U}_1 \cdot \vert0^n\rangle \langle 0^n \vert)-tr(M'_{T-1}\cdot \widehat{g_T^{-1}}\cdot \widehat{u}_T)\right\vert\leq\mu$. (rejects otherwise).
	\end{enumerate}
\end{framed}
	
\subsection{Completeness}
\label{ssec:bounded comp}
\begin{claim}[Completeness for the bounded case]\label{cl:bounded completeness}
	if $\vert tr(A)-C\vert \leq \xi=\frac{\mu}{2^{2n+11}T}$, a $\pBQP$ prover $P$ can pass $\widehat{W}$ with probability 1, using $Poly(n)$ bits of communication.
	\end{claim}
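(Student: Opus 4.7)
\medskip

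\noindent\textbf{Plan.} I would have the prover follow the natural truncated analog of the truthful strategy from Section \ref{sec:w completeness}: upon receiving from $V$ the angles defining $\widehat{u}_1,\dots,\widehat{u}_i$, the prover internally computes the reduced matrix
\[
M_i \;=\; \bigl(\widehat{U}_i\cdot \widehat{U}_{i-1}\cdots \widehat{U}_1 \cdot \ketbra{0^n}\cdot G_T\cdots G_{i+1}\bigr)\Big|_{g_{i+1}}
\]
\emph{using the same truncated unitaries $\widehat{u}_j$ the verifier will use later} in its consistency tests, and sends back $M'_i = \lfloor M_i\rfloor_{\xi}$, i.e.\ each of the at most $64$ complex entries truncated to precision $\xi$. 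Each such entry is an exponential sum over $n-3$ qubits of products of entries of $\poly{n}$ local matrices; by sandwiching with appropriate computational-basis projectors, it can be packaged as (a linear combination of) traces of \emph{top row matrices}, so the same reduction that underlies Claim \ref{cl:top row} together with Aaronson's $\pBQP=\PP$ equivalence shows each entry is computable in $\pBQP$. Hence $V$ recovers it via $\poly{n}$ binary-search queries, giving overall communication $T\cdot O(1)\cdot \poly{n}$.

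\medskip

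\noindent For each of the three types of consistency checks, I would start from the ideal identities of Subclaims \ref{cl:w round 0}--\ref{cl:w round T} and bound the perturbations. A key observation is that every $M_i$ is a partial trace over $n-3$ qubits of a matrix whose entries have absolute value at most $1$, so entries of $M_i$ and $M'_i$ are bounded by $2^n$. At round $0$,
\[
\bigl|C-\tr{M'_0}\bigr|\;\leq\; |C-\tr{A}| + \bigl|\tr{M_0-M'_0}\bigr| \;\leq\; \xi + 8\xi \;\ll\; \mu.
\]
At round $i\in\{1,\dots,T-1\}$, using $\tr{M_i}=\tr{M_{i-1}\cdot g_i^{-1}\cdot u_i}$, I would split
\[
\bigl|\tr{M'_i}-\tr{M'_{i-1}\cdot\widehat{g_i^{-1}}\cdot\widehat{u}_i}\bigr|
\;\leq\; \bigl|\tr{M_i-M'_i}\bigr| + \bigl|\tr{M_{i-1}\, g_i^{-1}\, u_i} - \tr{M'_{i-1}\,\widehat{g_i^{-1}}\,\widehat{u}_i}\bigr|,
\]
and telescope the second term over the three factors: perturbing any one factor by at most $\xi$ in each entry changes the trace of the $8\times 8$ product by at most $O(2^{2n}\xi)$, since one factor can have entries of magnitude $2^n$ while the other two are bounded by $1$. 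The choice $\xi=\mu/(2^{2n+11}T)$ makes both summands well below $\mu$. Round $T$ is analogous, except the verifier additionally evaluates $\tr{\widehat{U}_T\cdots\widehat{U}_1\ketbra{0^n}}$ itself by factoring through the tensor-product structure as a product of $T$ one-qubit traces; perturbation of the $T$ factors yields total error $O(T\cdot\xi)$, which together with the previous $O(2^{2n}\xi)$ contribution is still under $\mu$.

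\medskip

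\noindent\textbf{Main obstacle.} The delicate part is the perturbation-theoretic bookkeeping: I must justify the $2^{2n}$ entrywise bound on $M'_{i-1}$, show that multiplying $8\times 8$ matrices only adds constant-dimension overhead per perturbation, and finally account for the $T$-fold accumulation at round $T$. None of this is conceptually deep, but it must be carried out carefully enough that every contribution is visibly below $\mu/2$; this is why $\xi$ contains the specific denominator $2^{2n+11}T$. Once the bounds are in place, completeness with probability $1$ is immediate, because the honest prover's responses are a deterministic function of $V$'s randomness and the bounds hold pointwise for every choice of $\widehat{u}_1,\dots,\widehat{u}_T$.
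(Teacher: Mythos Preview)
Your perturbation analysis is essentially sound (your $O(2^{2n}\xi)$ is a harmless overestimate of what is really $O(2^n\xi)$, and the exact identity for your hatted $M_i$ is actually $\tr{M_i}=\tr{M_{i-1}\cdot g_i^{-1}\cdot \widehat{u}_i}$ rather than with the ideal $u_i$, but the discrepancy is again $O(2^n\xi)$ and absorbed). Where you diverge from the paper is in the prover's strategy: the paper does \emph{not} have the prover work with the truncated matrices $\widehat{u}_j$. Instead, it has the prover build genuine unitary approximants $\tilde{u}_j$ to $u_j$ via Solovay--Kitaev over the Hadamard/Toffoli gate set, so that $A'_i=\tilde{U}_i\cdots\tilde{U}_1\ketbra{0^n}G_T\cdots G_{i+1}$ is literally the matrix of a quantum circuit; each entry of its partial trace is then an integer multiple of $2^{-t/2}$ obtainable by two $\sP$ counts, hence recoverable in $\pBQP=\PP$ by binary search. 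The paper then compares $A'_i$ to the ideal $A_i$ via a telescoping bound $\|A'_i-A_i\|\le T\xi$, which is the source of the extra factor of $T$ in its Subclaims.

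The genuine gap in your route is precisely the $\pBQP$-computability step. You assert that each entry of your $M_i$ ``can be packaged as a linear combination of traces of top row matrices'' and then invoke Claim~\ref{cl:top row}. But a top row matrix, by Definition~\ref{def:toprow}, is $\ketbra{0^n}G_T\cdots G_1$ for \emph{unitary local gates} $g_j$; the truncated $\widehat{u}_j$ are not unitary, so your product is not of this form, and sandwiching with basis projectors does not repair this. Expanding each $\widehat{u}_j$ in a unitary (Pauli) basis produces $64^{i}$ terms, which is exponential. Your claim can be salvaged---since the entries of $\widehat{u}_j$ are explicit dyadic rationals, each entry of $M_i$ is a $\mathrm{GapP}$-type sum (after clearing denominators and separating rational and $\sqrt{2}$-rational parts), and $\PP$ can threshold such sums---but that is a different argument from the one you wrote, and it is exactly the point where the paper instead spends the Solovay--Kitaev step to stay inside the world of honest quantum circuits.
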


We give here the skeleton of the proof, the proofs of the technical 
lemmas can be found in Appendix \ref{app:C5}.

	\begin{proof}
	
	We first note that for completeness in the bounded precision case, we must
 require $P$ to have the power of a $\pBQP$ machine.	
	 This means that when $P$ is asked to perform any given unitary $u$ on a single qubit (or tensor product of such, of course), it computes an approximation $\tilde{u}$ using its universal set of Hadamard and Toffoli gates s.t: $\left\Vert\tilde{u}-u\right\Vert_{L_2}\leq\xi$, and operates with $\tilde{u}$ instead of $u$. $P$ can do this in $Poly(n)$ time as long as $\frac{1}{\xi}=O(2^{n^{c_1}})$ for some constant $c_1$ (By the Solovay-Kitaev Theorem \cite{NC02}). 
	
We use the convenient notation $A'_i=\tilde{U}_i\cdot \tilde{U}_{i-1}\cdots \tilde{U}_1 \cdot \vert0^n\rangle \langle 0^n \vert\cdot G_T\cdots G_{i+1}$, and note that a $\pBQP$ machine $P$ can be used to compute $A'_i\vert_s$ (for any set $s$ of qubits of constant size). This is true because the value of each entry in $A'_i\vert_s$ is the difference between the number of paths which have a positive and negative contribution to it, times $\frac{1}{\sqrt{2}^t}$ (where $t$ is the number of Hadamard gates), and as such can be computed by 2 $\sP$ calls. We now remember $\PP=\pBQP$ \cite{Aar05}, and that for any function $\func \in \sP$ the language $\lang=\left\{(x,k)\vert \func(x)\geq k\right\}$ is in 
$\PP$ \cite{ZOO}\footnote{By taking $f$ for each entry in $A'_i\vert_s$ to be the number of paths which contribute to that entry (once with a positive and once with a negative value), a $\pBQP$ machine can indeed be used to compute $A'_i\vert_s$}. 

We can now define $P$'s \textit{Truthful Strategy} under $\widehat{W}$.

		\begin{defn} $P$'s \textit{Truthful Strategy} under $\widehat{W}$:
		\label{def:truthful}
		\begin{enumerate}

			\item {\em In the 0'th round} -- $P$ sends back $M'_0=A|_{g_1}$.
  
	 		\item {\em In the $i$'th round} -- $P$ receives $u_i$ (represented by its parameters) from $V$, and sends back $M'_i=A'_i\vert_{g_{i+1}}$.

		\end{enumerate}
		
		(Since the prover in this protocol always needs to send an integer multiple of $\frac{1}{\sqrt{2}^k}$, we assume he just sends this integer along with $k$)
	\end{defn}
	Having defined $P$'s \textit{Truthful Strategy} under $\widehat{W}$, we show that if $Tr(A)=C$, $P$ passes $\widehat{W}$ with probability 1 by using its \textit{Truthful Strategy}. To do this we  set $n'=3$ the number of qubits on which each of the gates $g_i$ operates, and use the following claims $\forall_{0\leq i \leq T}$:

	\begin{subclaim}\label{cl:tr(A')-tr(A)}
	$\left\vert tr\left( A'_i\right)-tr\left( A_i\right)\right\vert\leq 2^n T\cdot \xi$
	\end{subclaim}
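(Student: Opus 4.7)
The plan is to bound $|tr(A'_i)-tr(A_i)|$ by first bounding the operator norm $\|A'_i-A_i\|_{op}$ and then converting this into a trace bound via the general inequality $|tr(X)|\le 2^n\|X\|_{op}$, valid for any $2^n\times 2^n$ matrix (each eigenvalue is bounded in absolute value by the operator norm). This conversion is exactly what supplies the $2^n$ factor appearing in the claimed bound.

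To control $\|A'_i-A_i\|_{op}$, I would introduce the natural hybrid sequence
\[
A^{(k)}_i=\tilde{U}_i\cdots\tilde{U}_{i-k+1}\cdot U_{i-k}\cdots U_1\cdot\ket{0^n}\bra{0^n}\cdot G_T\cdots G_{i+1}
\]
for $0\le k\le i$, interpolating between $A^{(0)}_i=A_i$ and $A^{(i)}_i=A'_i$, and write the difference as the telescoping sum $A'_i-A_i=\sum_{k=1}^{i}\bigl(A^{(k)}_i-A^{(k-1)}_i\bigr)$. Each summand is a product of operators with exactly one ``slipped'' factor of the form $(\tilde{U}_{i-k+1}-U_{i-k+1})$. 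The Solovay--Kitaev guarantee gives $\|\tilde{U}_j-U_j\|_{op}\le\|\tilde{u}_j-u_j\|_{L_2}\le\xi$, and the surrounding factors are either true unitaries $U_j$ or $G_j$, the (norm-one) rank-one projector $\ket{0^n}\bra{0^n}$, or near-unitaries $\tilde{U}_j$ of operator norm at most $1+\xi$. Submultiplicativity therefore bounds each summand by $\xi(1+\xi)^{i-k}$, and summing over the at most $T$ values of $k$ yields $\|A'_i-A_i\|_{op}\le T\xi(1+\xi)^T$.

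Combining these two steps gives $|tr(A'_i)-tr(A_i)|\le 2^n T\xi(1+\xi)^T$, which collapses to the stated bound $2^n T\xi$ once we observe that $\xi=\Theta(2^{-n^c})$ and $T=\poly{n}$ together force $(1+\xi)^T$ to be arbitrarily close to $1$; this slack is absorbed either into a harmless constant or, if desired, by strengthening the Solovay--Kitaev precision used to define $\xi$ by a $\log$-factor. In either case the essential scaling $2^n T\xi$ is preserved.

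The main obstacle I anticipate is not conceptual but one of careful bookkeeping: the $\tilde{U}_j$ are only near-unitary, so at each hybrid step one must verify that the accumulated $(1+\xi)^T$ prefactor remains benign, and that the rank-one factor $\ket{0^n}\bra{0^n}$ has been correctly accounted for in the norm estimates (in fact one could obtain a sharper bound without the $2^n$ factor by using $|tr(X)|\le\|X\|_1$ together with the rank-one nature of all matrices in sight, but the looser $2^n\|X\|_{op}$ conversion suffices and matches the statement). No tighter structural property of the protocol is needed; this is purely an error-propagation lemma showing that the prover's unavoidable approximation of each $U_j$ contributes only linearly in $T$ to the final trace.
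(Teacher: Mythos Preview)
Your proposal is correct and follows essentially the same approach as the paper: a telescoping/hybrid argument to bound $\|A'_i-A_i\|$ by $T\xi$, followed by the conversion $|tr(X)|\le 2^n\|X\|$ to obtain the $2^n$ factor. You are in fact more careful than the paper, which simply asserts $\|A'_i-A_i\|\le T\xi$ ``by a simple telescopic argument'' without tracking the $(1+\xi)^T$ prefactor from the near-unitarity of the $\tilde{U}_j$; your observation that this prefactor is negligible for the given parameters is the honest way to close that gap.
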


	\begin{subclaim}\label{cl:tr(M*h)-tr(M'*h)}
	$\left\vert tr\left( M_{i-1}\cdot g_i^{-1}\cdot u_i)\right)-tr\left( M'_{i-1}\cdot g_i^{-1}\cdot u_i)\right)\right\vert\leq 2^{2n'} \cdot 2^n T \cdot \xi$
	\end{subclaim}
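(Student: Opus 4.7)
The plan is to lift the traces of the $2^{n'}\times 2^{n'}$ matrices on the qubits of $g_i$ up to traces of the full $2^n\times 2^n$ matrices, bound the entries of the resulting error matrix via a standard telescoping argument, and close with an entry-wise triangle inequality.

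First I would apply Lemma~\ref{lm:reduced assosiativeness} together with the identity $\tr{X|_Q}=\tr{X}$. Since $M_{i-1}=A_{i-1}|_{g_i}$ and (under $P$'s truthful strategy) $M'_{i-1}=A'_{i-1}|_{g_i}$, this gives
\[
\tr{M_{i-1}\cdot g_i^{-1}\cdot u_i}=\tr{A_{i-1}\cdot G_i^{-1}\cdot U_i},\qquad \tr{M'_{i-1}\cdot g_i^{-1}\cdot u_i}=\tr{A'_{i-1}\cdot G_i^{-1}\cdot U_i},
\]
so by linearity of the trace it suffices to bound $\left|\tr{(A_{i-1}-A'_{i-1})\cdot G_i^{-1}\cdot U_i}\right|$.

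Second, I would bound $A_{i-1}-A'_{i-1}$ entry-wise via the telescoping identity
\[
A_{i-1}-A'_{i-1}=\sum_{j=1}^{i-1}U_{i-1}\cdots U_{j+1}(U_j-\tilde{U}_j)\tilde{U}_{j-1}\cdots\tilde{U}_1\cdot\ket{0^n}\bra{0^n}\cdot G_T\cdots G_i.
\]
Using $\norm{U_j-\tilde{U}_j}_{op}\leq\xi$ (the Solovay--Kitaev guarantee invoked for the $\pBQP$ prover), $\norm{\tilde{U}_j}_{op}\leq 1+\xi$, and the fact that each $U_j$ and $G_k$ is unitary, every summand has operator norm at most $\xi(1+\xi)^T$. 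Summing the at most $T$ terms gives $\norm{A_{i-1}-A'_{i-1}}_{op}\leq T\xi(1+\xi)^T$, which for the exponentially small $\xi$ chosen in the protocol is at most (say) $3T\xi$. In particular every entry of $A_{i-1}-A'_{i-1}$ is bounded in absolute value by $3T\xi$.

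Finally, since the unitary $V:=G_i^{-1}U_i$ acts as the identity on the $n-n'$ qubits outside $g_i$, its nonzero entries $V_{kj}$ are supported only on index pairs $(k,j)$ that agree on those $n-n'$ coordinates; there are at most $2^n\cdot 2^{n'}$ such pairs, each with $|V_{kj}|\leq 1$. An entry-wise triangle inequality on the trace then yields
\[
\left|\tr{(A_{i-1}-A'_{i-1})\cdot V}\right|\leq\sum_{j,k}|(A_{i-1}-A'_{i-1})_{jk}|\cdot|V_{kj}|\leq 2^{n+n'}\cdot 3T\xi,
\]
which is comfortably within the claimed bound $2^{2n'}\cdot 2^n T\xi$ (recall $n'=3$). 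The main obstacle is the careful bookkeeping of the error propagation in the telescoping step---in particular controlling the $(1+\xi)^T$ multiplicative blow-up of the approximations---but this is essentially automatic once we recall that $\xi$ was chosen to be exponentially small in $n$ while $T$ is polynomial in $n$.
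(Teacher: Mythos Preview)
Your argument is correct, but it proceeds differently from the paper. The paper stays entirely at the level of the $2^{n'}\times 2^{n'}$ reduced matrices: it writes $\tr{(M'_{i-1}-M_{i-1})g_i^{-1}u_i}$ as a sum over the $2^{2n'}$ entries, uses $|(g_i^{-1}u_i)_{j,k}|\le 1$, and then bounds each entry of $M'_{i-1}-M_{i-1}$ by $2^nT\xi$ via the chain $|(M'_{i-1}-M_{i-1})_{k,j}|\le\|M'_{i-1}-M_{i-1}\|\le 2^n\|A'_{i-1}-A_{i-1}\|\le 2^nT\xi$ (the last step being the telescoping argument of Claim~\ref{cl:tr(A')-tr(A)}). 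You instead lift everything to the full $2^n\times 2^n$ picture via Lemma~\ref{lm:reduced assosiativeness}, bound the entries of $A_{i-1}-A'_{i-1}$ directly, and then exploit the tensor-with-identity sparsity of $G_i^{-1}U_i$ to control the entrywise sum. Your route actually yields the slightly sharper count $2^{n+n'}$ in place of $2^{n+2n'}$ and avoids having to reason about how the partial trace interacts with the operator norm. One small remark: the $(1+\xi)^T$ factor in your telescoping is unnecessary, since the prover's approximations $\tilde{U}_j$ are built from Hadamard and Toffoli gates and are therefore exactly unitary; but your absorption of that factor is harmless in any case.
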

	
	\begin{subclaim}\label{cl:tr(M'*h)-tr(M'*h')}
	$\left\vert tr\left( M'_{i-1}\cdot g_i^{-1}\cdot u_i\right)-tr\left( M'_{i-1}\cdot \widehat{g_i^{-1}}\cdot \widehat{u}_i\right) \right\vert\leq  2\cdot 2^{3n'}\cdot 2^n \cdot \xi$
	\end{subclaim}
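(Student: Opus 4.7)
The plan is to write the difference as a two-step telescope,
\[
M'_{i-1}\cdot g_i^{-1}\cdot u_i - M'_{i-1}\cdot \widehat{g_i^{-1}}\cdot \widehat{u}_i
= M'_{i-1}\cdot (g_i^{-1}-\widehat{g_i^{-1}})\cdot u_i
+ M'_{i-1}\cdot \widehat{g_i^{-1}}\cdot (u_i-\widehat{u}_i),
\]
take traces of both sides, apply the triangle inequality, and bound each of the two resulting traces separately via the elementary inequality
\[
|\tr{X\cdot Y\cdot Z}|\;\leq\;2^{3n'}\cdot\|X\|_{\max}\cdot\|Y\|_{\max}\cdot\|Z\|_{\max}
\]
for $2^{n'}\times 2^{n'}$ matrices, where $\|\cdot\|_{\max}$ denotes the largest absolute value of an entry (which is immediate from $\tr{XYZ}=\sum_{a,b,c} X_{ab}Y_{bc}Z_{ca}$, a sum of $2^{3n'}$ terms).

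For the first term I will use $\|M'_{i-1}\|_{\max}\leq 2^n$ (this is precisely the condition the verifier enforces on $M'_{i-1}$ in the protocol), $\|g_i^{-1}-\widehat{g_i^{-1}}\|_{\max}\leq \xi$ by Definition \ref{def:truncated unitary}, and $\|u_i\|_{\max}\leq 1$ since $u_i$ is unitary; together these give a bound of $2^{3n'}\cdot 2^n\cdot \xi$. For the second term the key auxiliary step is to control $\|u_i-\widehat{u}_i\|_{\max}$, for which I would exploit the tensor-product structure $u_i=u_i^1\otimes u_i^2\otimes u_i^3$ and telescope a second time as
\[
u_i-\widehat{u}_i=(u_i^1-\widehat{u_i^1})\otimes u_i^2\otimes u_i^3 + \widehat{u_i^1}\otimes (u_i^2-\widehat{u_i^2})\otimes u_i^3 + \widehat{u_i^1}\otimes \widehat{u_i^2}\otimes (u_i^3-\widehat{u_i^3}),
\]
bounding each entry of $u_i-\widehat{u}_i$ by a small multiple of $\xi$ using that the single-qubit truncations are accurate to $\xi$ and all single-qubit unitary entries have modulus at most $1$. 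Combined with the crude estimate $\|\widehat{g_i^{-1}}\|_{\max}\leq 1+\xi$, the second trace is likewise bounded by a constant multiple of $2^{3n'}\cdot 2^n\cdot \xi$, and summing the two contributions gives the claimed inequality.

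The main obstacle, and the only place that calls for any real care, is nailing down the precise leading constant $2$ on the right-hand side. A blunt use of the three-term expansion above produces a factor of roughly $3$ rather than $1$ in $\|u_i-\widehat{u}_i\|_{\max}$, so getting the constant to exactly $2$ will require either a slightly tighter estimate that groups the three correction terms before taking maxima, or reading Definition \ref{def:truncated unitary} as an entry-wise truncation of the full three-qubit operator rather than as a tensor product of single-qubit truncations (which would immediately give $\|u_i-\widehat{u}_i\|_{\max}\leq \xi$ and recover the exact constant $2$). Everything else in the argument reduces to routine entry-counting once the two telescopes above are in place.
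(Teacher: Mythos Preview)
Your approach matches the paper's: both telescope to separate the $g_i^{-1}$-error from the $u_i$-error and then bound the trace by crude entry-counting. The paper organizes it slightly differently---it first bounds $\max_{j,k}\bigl|(g_i^{-1}u_i-\widehat{g_i^{-1}}\,\widehat{u}_i)_{j,k}\bigr|\leq 2\cdot 2^{n'}\xi$ via the scalar telescope $g^{-1}_{j,l}u_{l,k}-\widehat{g^{-1}}_{j,l}\widehat{u}_{l,k}=g^{-1}_{j,l}(u_{l,k}-\widehat{u}_{l,k})+\widehat{u}_{l,k}(g^{-1}_{j,l}-\widehat{g^{-1}}_{j,l})$, then multiplies by $2^{2n'}\cdot 2^n$ from $\|M'_{i-1}\|_{\max}\leq 2^n$---but this is the same computation as yours. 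On the constant question you raise: the paper takes your second reading, invoking $|u_{i_{l,k}}-\widehat{u}_{i_{l,k}}|\leq\xi$ directly for the entries of the three-qubit operator $u_i$ (i.e.\ entrywise truncation of the full $2^{n'}\times 2^{n'}$ matrix rather than a tensor of single-qubit truncations), which is exactly what produces the factor $2$.
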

		
	\begin{subclaim}\label{cl:tr(U'*0)-tr(U*0)}
	$\left\vert tr(\widehat{U}_T\cdot \widehat{U}_{T-1}\cdots \widehat{U}_1 \cdot \vert0^n\rangle \langle 0^n \vert) - tr(U_T\cdot U_{T-1}\cdots U_1 \cdot \vert0^n\rangle \langle 0^n \vert)\right\vert\leq 2^{n}T\cdot\xi$
	\end{subclaim}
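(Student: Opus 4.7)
The plan is to exploit the tensor product structure that both products of randomizing matrices inherit. Because each $u_i = u_i^1\otimes u_i^2\otimes u_i^3$ consists of single-qubit matrices and $U_i$ acts as identity outside the support of $g_i$, every $U_i$ can be written as a tensor product $\bigotimes_{j=1}^n v_{i,j}$ where each $v_{i,j}$ is either the identity or one of the $u_i^k$. By the mixed-product property of the tensor product, $U_T\cdots U_1 = w_1\otimes\cdots\otimes w_n$ where $w_j$ is the time-ordered product of exactly those single-qubit matrices $v_{i,j}$ that land on qubit $j$. The same decomposition holds for the truncated version, giving $\widehat{U}_T\cdots\widehat{U}_1=\widehat{w}_1\otimes\cdots\otimes\widehat{w}_n$. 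Since $\ket{0^n}\bra{0^n}=(\ketbra{0})^{\otimes n}$ factors in the same way, the traces collapse to products over qubits:
\[
\tr{U_T\cdots U_1\cdot\ket{0^n}\bra{0^n}}=\prod_{j=1}^n\bra{0}w_j\ket{0}, \qquad \tr{\widehat{U}_T\cdots\widehat{U}_1\cdot\ket{0^n}\bra{0^n}}=\prod_{j=1}^n\bra{0}\widehat{w}_j\ket{0}.
\]

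First I would bound the per-qubit error by a telescoping argument within each $w_j$. By the definition of truncated unitary, every entry of $\widehat{u_i^k}$ differs from the corresponding entry of $u_i^k$ by at most $\xi$, so the Frobenius (and hence operator) norm satisfies $\|\widehat{u_i^k}-u_i^k\|\le 2\xi$, while $\|\widehat{u_i^k}\|\le 1+2\xi$. Iterating the identity $AB-A'B'=(A-A')B+A'(B-B')$ across the at most $T$ factors making up $w_j$, and using that each $u_i^k$ is unitary of norm $1$, yields $\|w_j-\widehat{w}_j\|\le 2T\xi\cdot(1+2\xi)^{T-1}\le 4T\xi$ for the exponentially small $\xi$ of the protocol. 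In particular $\abs{\bra{0}w_j\ket{0}-\bra{0}\widehat{w}_j\ket{0}}\le 4T\xi$, and each $\abs{\bra{0}\widehat{w}_j\ket{0}}\le 1+4T\xi$.

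Next I would telescope across the $n$ qubits. Writing $\alpha_j=\bra{0}w_j\ket{0}$ and $\beta_j=\bra{0}\widehat{w}_j\ket{0}$, the standard estimate
\[
\abs{\prod_j\alpha_j-\prod_j\beta_j}\le\sum_j\abs{\alpha_j-\beta_j}\cdot\prod_{k<j}\abs{\beta_k}\cdot\prod_{k>j}\abs{\alpha_k},
\]
combined with $\abs{\alpha_k}\le 1$ and the per-qubit bounds above, gives a total of at most $n\cdot 4T\xi\cdot(1+4T\xi)^{n-1}$, which is $O(nT\xi)$ in the parameter regime used and therefore well within the claimed $2^n T\xi$. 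The only mild obstacle in carrying this out is tracking the fact that truncated single-qubit matrices are no longer unitary, so their operator norms can grow multiplicatively through both telescopings; because $\xi$ is exponentially small, the slack factors $(1+O(\xi))^T$ and $(1+O(T\xi))^n$ remain bounded by a constant, and the loose $2^n$ appearing in the claim easily absorbs them.
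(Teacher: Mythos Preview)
Your argument is correct, but it differs from the paper's. The paper works directly on the $n$-qubit level: it rewrites each trace as $\bra{0^n}\,\cdot\,\ket{0^n}$, bounds the difference by the operator norm $\norm{\widehat{U}_T\cdots\widehat{U}_1-U_T\cdots U_1}$, and then applies a single telescoping step over the $T$ factors together with the crude estimate $\norm{\widehat{U}_i-U_i}_{Frob}\le 2^n\xi$ (the $2^n$ coming from the Frobenius norm of the identity tensor factor). You instead push the tensor structure all the way down, factor the trace into a product of $n$ single-qubit amplitudes, and telescope twice---once inside each qubit and once across qubits. Your route is longer but yields the much sharper bound $O(nT\xi)$, whereas the paper is content with $2^nT\xi$; the paper's approach is quicker precisely because it does not exploit the per-qubit factorization and simply absorbs the slack into the exponential $2^n$.
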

	
	The Claims \ref{cl:tr(A')-tr(A)}, \ref{cl:tr(M*h)-tr(M'*h)}, \ref{cl:tr(M'*h)-tr(M'*h')}, \ref{cl:tr(U'*0)-tr(U*0)} are proven in Appendix \ref{app:C5}. \\
	
Now we are ready to wrap up. By using $\xi=\frac{\mu}{2^{2n+11}T} < \frac{\mu}{2\cdot 2^{2n+3n'}(T+2)}$, and remembering that $tr(M_i)=tr(A_i)$, $tr(M'_i)=tr(A'_i)$, $tr(M_i)=tr(M_{i-1}\cdot h_i)$, $tr(U_T\cdot U_{T-1}\cdots U_1 \cdot \vert0^n\rangle \langle 0^n \vert)=tr(M_{T-1}\cdot h_T)$ we get:
	\begin{enumerate}
	
		\item {\em P passes the 0'th round, as} $\left\vert C-tr(M'_0)\right\vert=0<\mu$.

  		 \item {\em P passes the i'th round for $1\leq i < T$, as:}
  		 \begin{align*}
  		 \left\vert tr(M'_i)-tr(M'_{i-1}\cdot \widehat{g_i^{-1}}\cdot \widehat{u}_i)\right\vert
  		 &\leq\left\vert tr(M'_i)-tr(M_i)\right\vert 
  		 + \left\vert tr(M_{i-1}\cdot g_i^{-1}\cdot u_i)-tr(M'_{i-1}\cdot g_i^{-1}\cdot u_i\right\vert \\
  		 &+ \left\vert tr(M'_{i-1}\cdot g_i^{-1}\cdot u_i)-tr(M'_{i-1}\cdot \widehat{g_i{-1}}\cdot \widehat{u}_i)\right\vert 
  		 \leq 2^n T\cdot \xi + 2^{2n'}\cdot 2^n T\cdot\xi \\
  		 &+  2\cdot 2^{3n'}\cdot 2^n \cdot \xi 
  		 \leq 2\cdot 2^{n+3n'}(T+1)\cdot\xi\leq 1024\cdot2^n(T+1) \leq \mu
  		 \end{align*}
  		 and of course, each entry of $M'_i$ is at most the sum of $2^n$ entries of $A_i$, and each entry of $A_i$ is at most $1$ in absolute value, 
  		 so all entries of $M'_i$ are at most $2^n$.

		\item {\em P passes the T'th round, as:} 
		\begin{align*}
		&\left\vert tr(\widehat{U}_T\cdot \widehat{U}_{T-1}\cdots \widehat{U}_1 \cdot \vert0^n\rangle \langle 0^n \vert)-tr(M'_{T-1}\cdot \widehat{g_T^{-1}}\cdot \widehat{u}_T)\right\vert \\
		&\leq \left\vert tr(\widehat{U}_T\cdot \widehat{U}_{T-1}\cdots \widehat{U}_1 \cdot \vert0^n\rangle \langle 0^n \vert) - tr(U_T\cdot U_{T-1}\cdots U_1 \cdot \vert0^n\rangle \langle 0^n \vert \right\vert \\
		&+ \left\vert tr(M_{T-1}\cdot g_T^{-1}\cdot u_T)-tr(M'_{T-1}\cdot g_T^{-1}\cdot u_T\right\vert 
  		+ \left\vert tr(M'_{T-1}\cdot g_T^{-1}\cdot u_T)-tr(M'_{T-1}\cdot \widehat{g_T^{-1}}\cdot \widehat{u}_T\right\vert \\
  		&\leq 2^{n}T\cdot\xi + 2^{2n'}\cdot 2^n T\cdot \xi +  2\cdot 2^{3n'}\cdot 2^n \cdot \xi \\
  		&\leq 2\cdot 2^{n+3n'}(T+1)\cdot\xi\leq 1024\cdot 2^{n}(T+1)
  		 \cdot\xi\leq \mu
		\end{align*}

	\end{enumerate}
	
	And the communication complexity follows trivially by the fact that there is a total of $poly(n)$ rounds, in which a total of $poly(n)$ values (degrees and matrix entries) are being communicated, and for each value it takes $poly(n)$ classical bits to express it's accuracy $\xi$.	
	\end{proof}

\subsection{Soundness}
\begin{claim}[Soundness for the bounded case]\label{cl:bounded soundness}
if $|tr(A)-C|\geq K= \frac{1}{10\cdot 2^n}$, $P$ passes $\widehat{W}$ with probability $\leq\frac{1}{3}$.
\end{claim}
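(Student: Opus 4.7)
My plan is to mirror the infinite-precision soundness argument (Claim \ref{cl:w soundess}) while carrying the bounded-precision errors as explicit additive slack, and to upgrade the qualitative Lemma \ref{lm:basic} to a quantitative anti-concentration statement. The resulting induction shows that the trace error $\delta_i := \tr{\Delta_i}$, with $\Delta_i := M_i - M'_i$, cannot shrink faster than a factor $\chi$ per round; set against the verifier's direct final computation (accurate to $\ll \mu$), this will force detection with probability at least $2/3$.

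For the setup, the hypothesis $|\tr{A} - C| \ge K$ together with the zeroth-round check $|C - \tr{M'_0}| \le \mu$ immediately gives $|\delta_0| \ge K - \mu$, and in particular $\Delta_0 \ne 0$. Next, for each round $1 \le i < T$, if the prover passes, then $|\tr{M'_i} - \tr{M'_{i-1}\widehat{g_i^{-1}}\widehat{u_i}}| \le \mu$. Using the identity $\tr{M_i} = \tr{M_{i-1}g_i^{-1}u_i}$ from the completeness analysis of $W$, the truncation bounds of Subclaims \ref{cl:tr(A')-tr(A)}, \ref{cl:tr(M*h)-tr(M'*h)}, \ref{cl:tr(M'*h)-tr(M'*h')}, and the verifier-imposed entry bound on $M'_i$, I would replace truncated quantities by their ideal counterparts and derive
\[
\bigl| \delta_i - \tr{\Delta_{i-1} g_i^{-1} u_i} \bigr| \;\le\; \mu + O\!\bigl(2^{n+3n'}\,\xi\bigr) \;\le\; 2\mu,
\]
the final inequality using $\xi = \mu/(2^{2n+11}T)$. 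An analogous relation holds in the final round, comparing against $\tr{\widehat{U}_T\cdots \widehat{U}_1 \ket{0^n}\bra{0^n}}$, which the verifier computes directly to within $2^n T \xi \ll \mu$ by Subclaim \ref{cl:tr(U'*0)-tr(U*0)}.

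The new ingredient is a quantitative version of Lemma \ref{lm:basic}, which I would establish as follows: for any fixed nonzero three-qubit operator $\Delta$ with entries bounded in modulus by $2^{n+1}$, and for $u = u^1 \otimes u^2 \otimes u^3$ with each $u^j \sim \widehat{U}(2)$,
\[
\Pr_u\bigl[\,|\tr{\Delta \cdot u}| < \|\Delta\|_F / \chi\, \bigr] \;\le\; \frac{1}{3(T+1)}.
\]
The proof would expand $\tr{\Delta u}$ as a trigonometric polynomial of bounded degree in the nine angles parameterizing $u^1, u^2, u^3$, and invoke anti-concentration for trigonometric polynomials whose coefficients encode the entries of $\Delta$; the factor $\chi = 60^{12}T^9$ reflects the nine angles together with the $\Theta(1/T)$ per-round error budget. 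Applying this with $\Delta = \Delta_{i-1} g_i^{-1}$, and using $\|\Delta_{i-1} g_i^{-1}\|_F = \|\Delta_{i-1}\|_F \ge |\delta_{i-1}|/\sqrt{8}$ by Cauchy--Schwarz on the three-qubit space, combined with the per-round error relation, yields the recursion $|\delta_i| \ge |\delta_{i-1}|/(\sqrt{8}\chi) - 2\mu$ on a ``good'' event of probability $\ge 1 - 1/(3(T+1))$. A union bound over the $T+1$ rounds gives that all good events occur with probability $\ge 2/3$; unrolling the recursion and using $\mu = K/(4\chi^T)$ then gives $|\delta_{T-1}| \ge 2\mu$, which contradicts the analog of the per-round relation for round $T$ on those events, so on them the prover must fail.

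\textbf{Main obstacle.} The principal difficulty is establishing the quantitative anti-concentration bound above, uniformly for the discretized measure $\widehat{U}(2)$, and robustly enough that compounding the $\xi$-truncation errors, the $O(2^{n+3n'}\xi)$ slack in each round, and the translation from $\|\Delta g_i^{-1}\|_F$ back to $|\delta_{i-1}|$ still leaves a polynomial-in-$T$ contraction factor $\chi$. The specific (and seemingly mysterious) choices of $\mu, \xi, \chi, K$ in the protocol are calibrated precisely so that these three losses compose across $T$ rounds without swamping the initial error $K$; once the anti-concentration step is in hand, the remainder is careful but essentially mechanical bookkeeping.
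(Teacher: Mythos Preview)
Your proposal is correct and follows essentially the same architecture as the paper's proof: track $|\tr{\Delta_i}|$, show via a quantitative anti-concentration lemma that it cannot shrink by more than a factor $\approx\chi$ per round except with probability $O(1/T)$, and union-bound over the $T$ rounds. The paper packages this as an explicit induction (Claim \ref{cl:soundness by induction}) bounding $\Pr[|\tr{\Delta_i}|<K/(4\chi^i)]$, but your recursive formulation $|\delta_i|\ge |\delta_{i-1}|/(\sqrt{8}\chi)-2\mu$ on a good event is equivalent. One structural difference: the paper proves anti-concentration for the \emph{continuous} measure $U(2)$ (Lemma \ref{lm:bounded}, by induction on $n'$ with elementary bounds on $\cos\theta$, $\sin\theta$, $\cot\theta$) and then transfers to $\widehat{U}(2)$ via a separate coupling claim (Claim \ref{cl:prob. dist}); you propose to prove anti-concentration directly for the discretized measure. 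Either route works, though the paper's separation keeps the anti-concentration argument cleaner since the continuous angles avoid the arithmetic of the $\xi$-grid. Your Cauchy--Schwarz factor $\sqrt{8}$ in passing from $|\tr{\Delta}|$ to $\|\Delta\|_F$ is in fact more careful than the paper's text, which asserts $\|\Delta\|_F\ge|\tr{\Delta}|$ without the constant; the slack in $\chi$ absorbs this either way.
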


Using the same notion for $\Delta_i$ as in \ref{sec: w soundness}, we will show that given $|tr(A)-C|\geq K = \frac{1}{10\cdot 2^n}$, and conditioned on $P$ passing the first $T-1$ rounds, there is only a small probability for $tr\left(\Delta_{T-1}\right)$ to be small enough for $P$ to be likely to also pass the final $T$'th round.

To do this, we introduce an approximate version of Lemma \ref{lm:basic}, which we use in a similar fashion to the way we used 
Lemma \ref{lm:basic} in proving the soundness for $W$.
\begin{lemma}
\label{lm:bounded}
Let $\Delta$ be an operator on ${n'}$ qubits, and let $u= u^1\otimes \cdots \otimes  u^{n'}$:
\[
\forall K\geq0, m\geq2:\Vert\Delta\Vert_{Frob}\geq K \Rightarrow Pr_{u^1,\cdots, u^{n'} \backsim U(2)}\left(\left\vert tr(\Delta \cdot u)\right\vert<\frac{4K}{\left(16m^3\right)^{n'}}\right)\leq \frac{5{n'}}{m}
\]
\end{lemma}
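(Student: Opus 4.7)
The plan is induction on the number of qubits $n'$, peeling off one qubit per step. Decompose $\Delta=\sum_{a,b\in\{0,1\}}\Delta_{ab}\otimes\ket{a}\bra{b}$ where each $\Delta_{ab}$ acts on $n'-1$ qubits, and set $u'=u^1\otimes\cdots\otimes u^{n'-1}$ and $c_{ab}:=tr(\Delta_{ab}\cdot u')$. Plugging the parametrization of Definition \ref{def:U(2)} into the last factor $u^{n'}$, a direct computation yields
\[
tr(\Delta\cdot u)=\cos\theta\bigl(c_{00}e^{i\varphi_1}+c_{11}e^{-i\varphi_1}\bigr)+\sin\theta\bigl(c_{10}e^{i\varphi_2}-c_{01}e^{-i\varphi_2}\bigr),
\]
exhibiting the trace as a trigonometric polynomial in the three angles of the last qubit.

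Since the Frobenius norm is additive over this block decomposition, $\sum_{a,b}\|\Delta_{ab}\|_{Frob}^2\geq K^2$, so some block $\Delta_{a^*b^*}$ satisfies $\|\Delta_{a^*b^*}\|_{Frob}\geq K/2$. Applying the induction hypothesis to this block (with $K/2$ in place of $K$) gives $|c_{a^*b^*}|\geq 2K/(16m^3)^{n'-1}$ with probability at least $1-5(n'-1)/m$ over $u^1,\dots,u^{n'-1}$. Observing that $4K/(16m^3)^{n'}$ equals exactly $\bigl(2K/(16m^3)^{n'-1}\bigr)/(8m^3)$, the inductive step reduces to a purely single-qubit anti-concentration claim: for any complex $c_{00},c_{01},c_{10},c_{11}$ with at least one $|c_{ab}|\geq c$, the probability over $(\theta,\varphi_1,\varphi_2)$ uniform in $[0,2\pi]^3$ that the displayed expression has absolute value $<c/(8m^3)$ is at most $5/m$. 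A union bound then yields overall failure probability $5n'/m$, and the same single-qubit statement serves as the base case $n'=1$.

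I expect the single-qubit lemma to be the technical core and the main obstacle. It would split into cases by which $c_{ab}$ is dominant. When $|c_{00}|\geq c$ (symmetric for $c_{11}$), I would first control the diagonal bracket $A(\varphi_1):=c_{00}e^{i\varphi_1}+c_{11}e^{-i\varphi_1}$: either $|c_{11}|\leq|c_{00}|/2$, in which case $|A|\geq c/2$ trivially, or else $|A|^2$ is a nontrivial degree-$2$ trigonometric polynomial in $\varphi_1$, and a Remez-type bound controls the measure of $\varphi_1$ for which $|A|$ falls below roughly $c/m$. A symmetric argument using $B(\varphi_2):=c_{10}e^{i\varphi_2}-c_{01}e^{-i\varphi_2}$ handles dominant off-diagonal coefficients. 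With one of $|A|,|B|$ pinned above $c/m$, the randomness in $\theta$ is then used to prevent cancellation: viewing $\cos\theta\cdot A+\sin\theta\cdot B$ with fixed complex $A,B$, the zero set in $\theta$ is discrete (essentially $\tan\theta=-A/B$), and a third Remez-type bound gives magnitude $\geq\max(|A|,|B|)/m$ outside a set of measure $O(1/m)$. Three successive losses of a factor $\sim m$ in magnitude (one per angle) compose to $1/(8m^3)$, and three failure events of probability $\sim 1/m$ compose to $5/m$. The obstacle is not conceptual but book-keeping: matching the constants precisely in each of the four dominant-coefficient cases, and ensuring the Remez-type bounds in the different angles compose cleanly without sharp dependencies between the conditioning events.
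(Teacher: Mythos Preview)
Your proposal is correct and follows essentially the same approach as the paper: induction on $n'$ via the same $2\times 2$ block decomposition, reducing to a single-qubit anti-concentration lemma applied to the dominant block, with a union bound assembling the $5n'/m$. The paper organizes the single-qubit case slightly differently---conditioning on $|\sin\theta|,|\cos\theta|$ and the diagonal bracket being bounded away from zero, then bounding a cotangent landing in a short interval via the Mean Value Theorem---but this is just a concrete instantiation of the Remez-type estimates you outline.
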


The proof of Lemma \ref{lm:bounded} can be found in Appendix \ref{app:L3}.
This Lemma \ref{lm:bounded} will be used shortly 
with $n'=3$ and $m=60T$ to bound the probability that 
$\tr{\Delta \cdot u}$ for a randomly chosen unitary $u$ is too small. 
However, before doing this, we have to take care of another 
issue which arises due to precision matters. 
In $\widehat{W}$ the unitaries are not chosen according to $U(2)$, 
the distribution assumed in Lemma \ref{lm:bounded}, but instead they 
are chosen from the distribution $\widehat{U}(2)$.  
Therefore, we also state Claim \ref{cl:prob. dist} which 
enables to quantify the similarity of these two distributions and is proven in Appendix \ref{app:C6}. 
To present Claim \ref{cl:prob. dist}, we define the two following probability distributions:
\begin{defn} We define the distribution of unitaries on $n'$ qubits, $D(n')$, to be the one which is used in protocol $W$. Namely, choosing a unitary operator $u\backsim D(n')$ means choosing $n'$ unitaries on a single qubit $u^1,u^2,\cdots, u^{n'}\backsim U(2)$, and setting $u=u^1\otimes u^2\otimes\cdots\otimes u^{n'}$.
\end{defn}

\begin{defn} We define the distribution of unitaries on $n'$ qubits, $\widehat{D}(n')$, to be the one which is used in protocol $\widehat{W}$. Namely, choosing a unitary operator $u\backsim D(n')$ means choosing $n'$ unitaries on a single qubit $u^1,u^2,\cdots, u^{n'}\backsim \widehat{U}(2)$, and setting $u=u^1\otimes u^2\otimes\cdots\otimes u^{n'}$.
\end{defn}

\begin{subclaim}\label{cl:prob. dist}
let $\Delta$ an operator on $n$ qubits:
\[
Pr_{u\backsim \widehat{D}({n'})}\Big(\left\vert tr\left(\Delta\cdot u \right)\right\vert < \delta-2^{{n'}}{n'}\cdot6\cdot\xi\cdot\Vert\Delta\Vert_{Frob} \Big)
\leq Pr_{u\backsim D({n'})}\Big(\left\vert tr\left(\Delta\cdot u \right)\right\vert < \delta \Big)+3{n'}\frac{\xi}{2\pi}
\]
\end{subclaim}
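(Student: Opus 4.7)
My plan is to reduce the claim to a Lipschitz–plus–coupling argument with two ingredients: (a) a pointwise Lipschitz bound on the function $u \mapsto tr(\Delta\cdot u)$ when the underlying angles are perturbed by at most $\xi$, and (b) a total variation bound between the continuous distribution $D(n')$ and the discrete distribution $\widehat D(n')$ obtained by rounding angles to multiples of $\xi$.

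First I would construct an explicit coupling between $D(n')$ and $\widehat D(n')$. Writing $K = \lfloor 2\pi/\xi\rfloor$, sample, independently for each qubit factor $j\in[n']$ and each of the three angles $\ell\in\{\theta,\varphi_1,\varphi_2\}$, a uniform index $i_{j,\ell}\in\{0,\dots,K-1\}$ together with a uniform residual $t_{j,\ell}\in[0,\xi)$; set $\hat\theta_{j,\ell}=i_{j,\ell}\xi$ and $\theta_{j,\ell}=\hat\theta_{j,\ell}+t_{j,\ell}$. This gives a pair $(u,\hat u)$ where $\hat u$ is distributed exactly as $\widehat D(n')$, while $u$ has each angle uniform on $[0,K\xi)$ (call this law $D'(n')$); crucially all angles agree up to $\xi$ almost surely.

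Next I would establish the Lipschitz bound. Since $|\cos\theta'-\cos\theta|\le\xi$ and $|e^{i\varphi'}-e^{i\varphi}|\le\xi$ whenever the angles differ by at most $\xi$, the triangle inequality gives that each entry of a single-qubit unitary varies by at most $2\xi$, so each entry of the tensor product $u$ varies by at most $3n'\xi$ by a routine telescoping argument (absorbing $(1+2\xi)^{n'-1}$ into the constant). Cauchy–Schwarz then yields
\[
|tr(\Delta\cdot u)-tr(\Delta\cdot \hat u)|\ \le\ \|\Delta\|_{Frob}\cdot\|u-\hat u\|_{Frob}\ \le\ \|\Delta\|_{Frob}\cdot 2^{n'}\cdot 3n'\xi\ \le\ 2^{n'}n'\cdot 6\xi\cdot\|\Delta\|_{Frob},
\]
which is exactly the additive slack $\epsilon$ subtracted from $\delta$ in the statement. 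Consequently, in the coupling, the event $\{|tr(\Delta\cdot\hat u)|<\delta-\epsilon\}$ is contained in $\{|tr(\Delta\cdot u)|<\delta\}$, and so its probability is at most $Pr_{u\backsim D'(n')}(|tr(\Delta\cdot u)|<\delta)$.

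Finally I would bound the total variation between $D'(n')$ and $D(n')$. For a single angle, comparing the uniform density $1/(2\pi)$ on $[0,2\pi]$ with $1/(K\xi)$ on $[0,K\xi)$ gives TV distance $(2\pi-K\xi)/(2\pi) < \xi/(2\pi)$; subadditivity over the $3n'$ independent angles yields $TV(D'(n'),D(n'))<3n'\cdot\xi/(2\pi)$, and so $Pr_{D'(n')}(|tr(\Delta\cdot u)|<\delta)\le Pr_{D(n')}(|tr(\Delta\cdot u)|<\delta)+3n'\xi/(2\pi)$. Chaining this with the Lipschitz step yields the claim. The only non-routine point is carefully matching the discrete law $\widehat U(2)$ to the rounded continuous law (there is an edge effect near $2\pi$ because $2\pi/\xi$ need not be integral), which is precisely what the $3n'\xi/(2\pi)$ error term absorbs; the tensor-product Lipschitz bookkeeping is the other place where constants must be tracked carefully, but well within the factor of $6$ built into the statement.
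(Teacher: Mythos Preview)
Your proposal is correct and follows essentially the same approach as the paper: couple $\widehat D(n')$ to a continuous uniform law on an interval of length a multiple of $\xi$ by rounding angles, use a Lipschitz bound on $u\mapsto tr(\Delta\cdot u)$ to obtain the additive slack $2^{n'}n'\cdot 6\xi\cdot\|\Delta\|_{Frob}$, and then compare that continuous law to $D(n')$ via a per-angle TV bound summed over the $3n'$ angles. The only cosmetic difference is that the paper extends the angle range to $[0,\kappa)$ with $\kappa=\lceil 2\pi/\xi\rceil\xi\ge 2\pi$ (so the rounding exactly reproduces $\widehat D(n')$), whereas your $K=\lfloor 2\pi/\xi\rfloor$ goes the other way and leaves a one-point edge effect; as you note, this is absorbed by the same $3n'\xi/(2\pi)$ term once $K$ is chosen consistently.
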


We can now state the main lemma which will be 
used to prove Claim \ref{cl:bounded soundness}: 
this is Claim \ref{cl:soundness by induction}, 
which shows that there is only a small probability for 
$\Delta_{T-1}$ to be small. 

\begin{subclaim}\label{cl:soundness by induction}
Let $\chi=60^{12}T^9$,
and let $4\chi^T\cdot\mu\leq K$. Suppose $\abs{\tr{A}-C} \geq K$. Then, conditioned on $P$ passing the 
first $i<T$ rounds in $\widehat{W}$: 
\[
Pr_{u_1, u_2, \cdots, u_i \backsim \widehat{D}(n')}\Big(\left\vert tr\left(\Delta_i\right)\right\vert<\frac{K}{4\chi^i} \Big)\leq i\cdot \left(\frac{1}{4T} + 3{n'}\frac{\xi}{2\pi}\right)
\]
\end{subclaim}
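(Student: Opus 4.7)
\textbf{Proof plan for Claim \ref{cl:soundness by induction}.}
The plan is to prove the claim by induction on the round index $i$, using Lemma \ref{lm:bounded} (converted from $D(n')$ to $\widehat{D}(n')$ via Claim \ref{cl:prob. dist}) to show that at each step $\tr(\Delta_i)$ can shrink by at most a factor of $\chi$. In the base case $i=0$, no unitaries have been tossed yet, so the statement is deterministic: the round $0$ passing condition gives $|C-\tr(M'_0)|\leq \mu$, and combined with the hypothesis $|\tr(A)-C|\geq K$ and $\mu\leq K/(4\chi^T)\leq 3K/4$, we get $|\tr(\Delta_0)|\geq K-\mu\geq K/4=K/(4\chi^0)$.

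For the inductive step, assume the statement at round $i$ and prove it at round $i+1$. The key identity, exactly as in the completeness analysis for $W$ (Eq.~\eqref{tr(M g u)}), is $\tr(M_{i+1})=\tr(M_i\cdot g_{i+1}^{-1}\cdot u_{i+1})$. Subtracting $\tr(M'_{i+1})$ and adding and subtracting $\tr(M'_i\cdot g_{i+1}^{-1}\cdot u_{i+1})$ and $\tr(M'_i\cdot \widehat{g_{i+1}^{-1}}\cdot \widehat{u}_{i+1})$ yields
\[
|\tr(\Delta_{i+1})|\;\geq\; |\tr(\Delta_i\cdot g_{i+1}^{-1}\cdot u_{i+1})|\;-\;\mu\;-\;2\cdot 2^{3n'}\cdot 2^n\cdot\xi,
\]
where the $\mu$ term comes from the round $i+1$ consistency check and the last term is the truncation error bounded exactly as in Claim \ref{cl:tr(M'*h)-tr(M'*h')}. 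So it suffices to lower-bound $|\tr(\Delta_i\cdot g_{i+1}^{-1}\cdot u_{i+1})|$ with high probability.

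For that, put $\Delta':=\Delta_i\cdot g_{i+1}^{-1}$; unitarity of $g_{i+1}^{-1}$ gives $\|\Delta'\|_{\mathrm{Frob}}=\|\Delta_i\|_{\mathrm{Frob}}$. The induction hypothesis gives, except with probability at most $i\cdot(1/(4T)+3n'\xi/(2\pi))$, that $|\tr(\Delta_i)|\geq K/(4\chi^i)$, and the elementary inequality $|\tr X|\leq \sqrt{2^{n'}}\,\|X\|_{\mathrm{Frob}}$ on $n'=3$ qubits then gives $\|\Delta_i\|_{\mathrm{Frob}}\geq K/(4\chi^i\sqrt{8})$. Apply Lemma \ref{lm:bounded} with $n'=3$ and $m=60T$, then pass from the ideal distribution $D(n')$ to the truncated one $\widehat{D}(n')$ via Claim \ref{cl:prob. dist}. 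This introduces an additional failure probability of $1/(4T)+3n'\xi/(2\pi)$ and subtracts an additive $2^{n'}n'\cdot 6\xi\,\|\Delta_i\|_{\mathrm{Frob}}$ from the lower bound, producing
\[
|\tr(\Delta_i\cdot g_{i+1}^{-1}\cdot u_{i+1})|\;\geq\;\|\Delta_i\|_{\mathrm{Frob}}\cdot\left(\tfrac{4}{(16(60T)^3)^3}-36\xi\right)\;\geq\;\tfrac{K}{4\chi^i\sqrt{8}}\cdot\tfrac{2}{(16(60T)^3)^3}
\]
(the $36\xi$ term being absorbed into the constant since $\xi$ is inverse-exponentially small).

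To finish, combine with the displayed inequality for $|\tr(\Delta_{i+1})|$: the lower bound above must exceed $K/(4\chi^{i+1})+\mu+2^{3n'+n+1}\xi$. Since $\mu\leq K/(4\chi^T)\leq K/(4\chi^{i+1})$ and $\xi$ is inverse-exponential in $n$, it suffices that $\chi=60^{12}T^9$ dominates $\sqrt{8}\cdot(16(60T)^3)^3/2$ times a small constant, which is a direct numerical check. The union bound across rounds produces the stated cumulative failure probability $(i+1)(1/(4T)+3n'\xi/(2\pi))$. The main obstacle is just this bookkeeping: making sure the particular value $\chi=60^{12}T^9$ was chosen with enough slack so that a single application of Lemma \ref{lm:bounded}/Claim \ref{cl:prob. dist} can only shrink $|\tr(\Delta_i)|$ by a factor $\chi$ even after absorbing $\mu$ and the truncation error, while at the same time the per-round failure probability $1/(4T)+3n'\xi/(2\pi)$ is small enough that summing over $i\leq T$ stays well below $1$.
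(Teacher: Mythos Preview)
Your proposal is correct and follows essentially the same approach as the paper's proof: induction on $i$, with the base case coming from the round-$0$ check and the inductive step split (implicitly in your write-up, explicitly as $P_1,P_2,P_3$ in the paper) into the inductive hypothesis, an application of Lemma~\ref{lm:bounded} with $n'=3$, $m=60T$ transferred to $\widehat{D}(n')$ via Claim~\ref{cl:prob. dist}, and the deterministic consequence of passing round $i{+}1$ together with the truncation bound of Claim~\ref{cl:tr(M'*h)-tr(M'*h')}. One small point where you are actually more careful than the paper: you use $|\tr X|\le \sqrt{2^{n'}}\,\|X\|_{\mathrm{Frob}}$, whereas the paper asserts $\|\Delta_i\|_{\mathrm{Frob}}\ge |\tr\Delta_i|$ directly (which is not true in general for $8\times 8$ matrices); your extra factor of $\sqrt{8}$ is harmlessly absorbed by the slack in $\chi=60^{12}T^9$ versus $(16(60T)^3)^3$.
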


The proof uses both Lemma \ref{lm:bounded} 
and the fact that the two distributions are similar, 
and can be found in Subsection \ref{sec:soundness by induction proof}.  
Now we can prove Claim \ref{cl:bounded soundness}. 

{~}

\begin{proof} {\bf Of Claim \ref{cl:bounded soundness}}. 
To pass $\widehat{W}$, $P$ must pass all rounds $i<T$, and then pass the $T$'th round.
	By Claim \ref{cl:soundness by induction}, passing all rounds $i<T$ implies:
	\[
	Pr_{u_1,u_2,\cdots,u_{T-1}\backsim\widehat{D}(n')}\Big(\left\vert tr\left( \Delta_{T-1}\right)\right\vert < \frac{K}{4\chi^{T-1}}\Big)\leq (T-1)\cdot \left(\frac{1}{4T}+3{n'}\frac{\xi}{2\pi}\right)
	\]
	So, denoting the probability that $P$ passes $\widehat{W}$ by $Pr\Big( Success\Big)$, we have:
		
	\begin{align*}\label{10}
	Pr\Big( Success\Big)
	&\leq Pr_{u_1,u_2,\cdots,u_{T-1}\backsim\widehat{D}(n')}\Big(\left\vert tr\left( \Delta_{T-1}\right)\right\vert < \frac{K}{4\chi^{T-1}}\Big)\\ 
	&+ Pr_{
	\substack{u_T \backsim\widehat{D}(n')\\
	\left\vert tr\left( \Delta_{T-1}\right)\right\vert \geq \frac{K}{4\chi^{T-1}}}}
	\bigg(\left\vert tr\left(\widehat{U}_T\cdot \widehat{U}_{T-1}\cdots \widehat{U}_1 \cdot \vert0^n\rangle \langle 0^n \vert\right)-tr\left(M'_{T-1}\cdot \widehat{g_T^{-1}}\cdot \widehat{u}_T\right)\right\vert\leq\mu\bigg)\\
	\end{align*}
	
	To evaluate the second term, we review the proof of Claim \ref{cl:w round T} to note:
	\begin{align*}
	\left\vert tr\left(\Delta_{T-1}\cdot g_T^{-1}\cdot u_T \right)\right\vert
	&=\left\vert tr\left(\left(M_{T-1}-M'_{T-1}\right)\cdot g_T^{-1}\cdot u_T\right)\right\vert
	=\Big\vert tr\left(U_T\cdot U_{T-1}\cdots U_1 \cdot \vert0^n\rangle \langle 0^n \vert - \widehat{U}_T\cdot \widehat{U}_{T-1}\cdots \widehat{U}_1 \cdot \vert0^n\rangle \langle 0^n \vert\right)\\
	&+ tr\left(\widehat{U}_T\cdot \widehat{U}_{T-1}\cdots \widehat{U}_1 \cdot \vert0^n\rangle \langle 0^n \vert\right)-tr\left(M'_{T-1}\cdot \widehat{g_T^{-1}}\cdot \widehat{u}_T\right)
	+tr\left(M'_{T-1}\cdot \widehat{g_T^{-1}}\cdot \widehat{u}_T-M'_{T-1}\cdot g_T^{-1}\cdot u_T\right)\Big\vert\\
&	\leq 2^{2n}T\cdot\xi+2\cdot2^{3n'}\cdot2^n\cdot\xi+
	\left\vert tr\left(\widehat{U}_T\cdot \widehat{U}_{T-1}\cdots \widehat{U}_1 \cdot \vert0^n\rangle \langle 0^n \vert\right)-tr\left(M'_{T-1}\cdot \widehat{g_T^{-1}}\cdot \widehat{u}_T\right)\right\vert 
	\end{align*}	
	
	Where we used a triangle inequality and Claims \ref{cl:tr(U'*0)-tr(U*0)} and \ref{cl:tr(M'*h)-tr(M'*h')} for the inequality (noting the only assumption made on $M_{i-1}$ for Claim \ref{cl:tr(M'*h)-tr(M'*h')} was its entries being smaller than $2^n$ so as to pass $i$'th round test) 
We can now bound $Pr\Big( Success\Big)$ by:
\begin{align*}
Pr\Big( Success\Big)
	&\leq (T-1)\cdot \left(\frac{1}{4T}+3{n'}\frac{\xi}{2\pi}\right) \\
	&+ Pr_{
	\substack{u_{T}\backsim\widehat{D}(n')\\
	\left\vert tr\left( \Delta_{T-1}\right)\right\vert \geq \frac{K}{4\chi^{T-1}}}}
	\bigg(\left\vert tr\left(\Delta_{T-1}\cdot g_T^{-1}\cdot u_T \right)\right\vert
	\leq 2^{2n}T\cdot\xi+2\cdot2^{3n'}\cdot2^n\cdot\xi+\mu \bigg)
\end{align*}

It holds that $\frac{K}{\chi^T} > 2^{2n}T\cdot\xi+2\cdot2^{3n'}\cdot2^n\cdot\xi+\mu + 2^{n'}n'\cdot6\cdot\xi\cdot2^{2n'}\cdot2^n$, and so:

\begin{align*}
&Pr_{
	\substack{u_{T}\backsim\widehat{D}(n')\\
	\left\vert tr\left( \Delta_{T-1}\right)\right\vert \geq \frac{K}{4\chi^{T-1}}}}
	\bigg(\left\vert tr\left(\Delta_{T-1}\cdot g_T^{-1}\cdot u_T \right)\right\vert
	\leq 2^{2n}T\cdot\xi+2\cdot2^{3n'}\cdot2^n\cdot\xi+\mu \bigg)\\
	\leq &Pr_{
	\substack{u_{T}\backsim\widehat{D}(n')\\
	\left\vert tr\left( \Delta_{T-1}\right)\right\vert \geq \frac{K}{4\chi^{T-1}}}}
	\bigg(\left\vert tr\left(\Delta_{T-1}\cdot g_T^{-1}\cdot u_T \right)\right\vert
	\leq \frac{K}{\chi^T}-2^{n'}n'\cdot6\cdot\xi\cdot2^{2n'}\cdot2^n \bigg)\\
	\leq&Pr_{
	\substack{u_{T} \backsim D(n') \\
	\left\vert tr\left(\Delta_{T-1}\right)\right\vert\geq \frac{K}{4\chi^{T-1}}
	}}
	\Big(\left\vert tr\left(\Delta_{T-1}\cdot g_{T}^{-1}\cdot u_{T} \right)\right\vert < \frac{K}{\chi^{T}}\Big)
	+3{n'}\frac{\xi}{2\pi}
	\leq \frac{1}{4T}+3{n'}\frac{\xi}{2\pi}
\end{align*}

Where the second inequality is by Claim \ref{cl:prob. dist}, and the last inequality is by using Inequality \refeq{eq:P2}
(in the proof of Claim \ref{cl:soundness by induction}) for $i=T-1$. \\

This gives us $Pr\Big( Success\Big)\leq T\cdot \left(\frac{1}{4T}+3{n'}\frac{\xi}{2\pi}\right)<\frac{1}{3}$ as required.
\end{proof}

\subsection{Proof of Claim \ref{cl:soundness by induction}}
\label{sec:soundness by induction proof}
\paragraph{Claim \ref{cl:soundness by induction}}
Let $\chi=60^{12}T^9$, and let $4\chi^T\cdot\mu\leq K$. Suppose $\abs{\tr{A}-C} \geq K$. Then, 
conditioned on $P$ passing the first $i<T$ rounds in $\widehat{W}$: 
\[
Pr_{u_1, u_2, \cdots, u_i \backsim \widehat{D}(n')}\Big(\left\vert tr\left(\Delta_i\right)\right\vert<\frac{K}{4\chi^i} \Big)\leq i\cdot \left(\frac{1}{4T} + 3{n'}\frac{\xi}{2\pi}\right)
\]

	\begin{proof}
	By finite induction on $0\leq i\leq T-1$:
	\begin{itemize}
	\item {\em Induction base:} For $i=0$, the condition on passing the $0$'th round gives us $\left\vert C- tr\left(M'_0\right)\right\vert\leq\mu$ and so we have:
	\begin{align*}
	K&\leq\left\vert tr(A)-C\right\vert=\left\vert C-tr(M_0)\right\vert\leq\left\vert C-tr(M'_0)\right\vert+\left\vert tr(M'_0)-tr(M_0)\right\vert\leq\mu+\left\vert tr(\Delta_0)\right\vert \\
	&\Rightarrow\left\vert tr(\Delta_0)\right\vert\geq K-\mu\geq \frac{K}{4}
	\end{align*}
	So we have $Pr\Big(\left\vert tr(\Delta_0)\right\vert< \frac{K}{4}\Big)=0$ as required.

	\item {\em Induction step:} Assuming correctness for $1\leq i<T-2$, we prove for $i+1$. We use the total probability equation to bound the required probability by the probability of 3 separate events:
	\begin{align*}
	&Pr_{u_1, u_2, \cdots, u_{i+1} \backsim \widehat{D}(n')}\Big(\left\vert tr\left(\Delta_{i+1}\right)\right\vert < \frac{K}{4\chi^{i+1}}\Big) \\
	&\leq \underbrace{  Pr_{u_1, u_2, \cdots, u_{i+1} \backsim \widehat{D}(n')}\Big(\left\vert tr\left(\Delta_{i}\right)\right\vert< \frac{K}{4\chi^i} \Big)   }_{P_1}
	+ \underbrace{   Pr_{
	\substack{u_1, u_2, \cdots, u_{i+1} \backsim \widehat{D}(n') \\
	\left\vert tr\left(\Delta_{i}\right)\right\vert\geq \frac{K}{4\chi^i}
	}}
	\Big(\left\vert tr\left(\Delta_{i}\cdot g_{i+1}^{-1}\cdot u_{i+1} \right)\right\vert< \frac{K}{2\chi^{i+1}}\Big)   }_{P_2}\\
	&+\underbrace{   Pr_{
	\substack{u_1, u_2, \cdots, u_{i+1} \backsim \widehat{D}(n') \\
	\left\vert tr\left(\Delta_{i}\right)\right\vert\geq\frac{K}{4\chi^i} \\
	\left\vert tr\left(\Delta_{i}\cdot g_{i+1}^{-1}\cdot u_{i+1} \right)\right\vert\geq \frac{K}{2\chi^{i+1}}
	}}
	\Big(\left\vert tr\left(\Delta_{i+1} \right)\right\vert< \frac{K}{4\chi^{i+1}}\Big)    }_{P_3}
	\end{align*}
	We upper bound each of the addends $P_1, P_2, P_3$:
	\begin{itemize}

	\item {\em $P_1$:}
	By the condition on passing all the first $i$ rounds, along with the induction assumption, we have $P_1 \leq i\cdot\left( \frac{1}{4T} + 3{n'}\frac{\xi}{2\pi}\right)$
	
	\item {\em $P_2$:} 
	We note: 
	\[
	\abs{\tr{\Delta_i}} \geq \frac{K}{4\chi^i}\Rightarrow\left\Vert \Delta_i \right\Vert_{Frob}\geq \frac{K}{4\chi^i}\Rightarrow\left\Vert \Delta_i\cdot g_{i+1}^{-1} \right\Vert_{Frob}\geq \frac{K}{4\chi^i}
	\]
	Using the fact that the Frobenius norm upper bounds the trace of a matrix for the first derivation, and the fact that operating on a matrix by a unitary on either of its sides does not affect it's Frobenius norm. We can now use Lemma \ref{lm:bounded} on $\tilde{\Delta}_i=\Delta_i\cdot g_{i+1}^{-1}$
 (with $n'=3$, $m=60T$) to achieve: 
	\begin{align*}
	\left\Vert\tilde{\Delta}_i\right\Vert_{Frob}\geq \frac{K}{4\chi^i}
	\Rightarrow \frac{5\cdot 3}{60T} 
	&\geq Pr_{u_{i+1} \backsim D(n')} \left(\left\vert tr\left(\tilde{\Delta}_i\cdot u_{i+1} \right)\right\vert< \frac{K}{\chi^{i}\cdot\left(16\left(60T\right)^3\right)^3}\right)\\
	&\geq Pr_{u_{i+1} \backsim D(n')} \left(\left\vert tr\left(\tilde{\Delta}_i\cdot u_{i+1} \right)\right\vert< \frac{K}{\chi^{i}\cdot\left(60\left(60T\right)^3\right)^3}\right)	\\
	&=Pr_{u_{i+1} \backsim D(n')} \left(\left\vert tr\left(\tilde{\Delta}_i\cdot u_{i+1} \right)\right\vert< \frac{K}{\chi^{i+1}}\right)	\\
	\end{align*}
	And so:
	\begin{equation}\label{eq:tr(delta_i)}
	\abs{\tr{\Delta_i}} \geq \frac{K}{4\chi^i} \Rightarrow Pr_{u_{i+1} \backsim D(n')} \left(\left\vert tr\left(\tilde{\Delta}_i\cdot u_{i+1} \right)\right\vert< \frac{K}{\chi^{i+1}}\right)
	\leq \frac{1}{4T}	
	\end{equation}
	
	By the fact that $\Delta_i$ has $2^{2{n'}}$ entries, all of which at most $2^n$, we have  
\[
\Vert\tilde{\Delta}_i\Vert_{Frob}=\Vert\Delta_i\Vert_{Frob}\leq 2^{2n'}\cdot2^n
\]
and so:
\begin{equation}\label{eq:frac K chi i}
2^{n'}{n'}\cdot6\cdot\xi\cdot\Vert\tilde{\Delta}_i\Vert_{Frob}+\frac{K}{2\chi^{i+1}}
\leq 2^{n'}n'\cdot6\cdot\xi\cdot2^{2n'}\cdot2^n+\frac{K}{2\chi^{i+1}}
\leq \frac{K}{\chi^{i+1}}
\end{equation}

	Using Claim \ref{cl:prob. dist}, we have:
	\begin{equation}\label{eq:claim 6.1 usage}
		P_2\leq Pr_{
	\substack{u_{i+1} \backsim D(n') \\
	\left\vert tr\left(\Delta_{i}\right)\right\vert\geq \frac{K}{4\chi^i}
	}}
	\Big(\left\vert tr\left(\Delta_{i}\cdot g_{i+1}^{-1}\cdot u_{i+1} \right)\right\vert < \frac{K}{2\chi^{i+1}} + 2^{n'}{n'}\cdot6\cdot\xi\cdot\Vert\Delta_{i}\cdot g_{i+1}^{-1}\Vert_{Frob}\Big)
	+3{n'}\frac{\xi}{2\pi}
	\end{equation}
	
	And we can now evaluate:
	
	\begin{equation}\label{eq:P2}
	P_2\leq Pr_{
	\substack{u_{i+1} \backsim D(n') \\
	\left\vert tr\left(\Delta_{i}\right)\right\vert\geq \frac{K}{4\chi^i}
	}}
	\Big(\left\vert tr\left(\Delta_{i}\cdot g_{i+1}^{-1}\cdot u_{i+1} \right)\right\vert < \frac{K}{\chi^{i+1}}\Big)
	+3{n'}\frac{\xi}{2\pi}	
	\leq \frac{1}{4T}+3{n'}\frac{\xi}{2\pi}
	\end{equation}
	
Where the first inequality follows from Inequalities \refeq{eq:claim 6.1 usage} and \refeq{eq:frac K chi i}, and the second inequality from Inequality \refeq{eq:tr(delta_i)}.
	
	\item {\em $P_3$:}
	Using the condition $\left\vert tr\left(\Delta_i\cdot h_{i+1} \right) \right\vert\geq \frac{K}{2\chi^{i+1}}$ we have:
	\begin{align*}
	\frac{K}{2\chi^{i+1}} &\leq \left\vert tr\left(\Delta_i\cdot h_{i+1} \right) \right\vert
	=\left\vert tr\left(M_i\cdot h_{i+1} \right) - tr\left(M'_i\cdot h_{i+1} \right)\right\vert
	=\left\vert tr\left(M_{i+1} \right) - tr\left(M'_i\cdot h_{i+1} \right)\right\vert \\
	&=\left\vert tr\left(M_{i+1} \right)- tr\left(M'_{i+1}\right) + tr\left(M'_{i+1} \right) - tr\left(M'_i\cdot h_{i+1} \right)\right\vert \\
	&=\left\vert tr\left(\Delta_{i+1} \right) + tr\left(M'_{i+1} \right) -tr\left(M'_i\cdot \widehat{h}_{i+1} \right)+tr\left(M'_i\cdot \widehat{h}_{i+1} \right) - tr\left(M'_i\cdot h_{i+1} \right)\right\vert \\
	&\leq \left\vert tr\left(\Delta_{i+1} \right) \right\vert
	+ \mu + 2\cdot 2^{3n'}\cdot 2^n \cdot \xi
	\end{align*}
	\begin{equation}
	\Rightarrow \left\vert tr\left(\Delta_{i+1} \right) \right\vert
	\geq \frac{K}{2\chi^{i+1}} - \mu - 2\cdot 2^{3n'}\cdot 2^n \cdot \xi \geq \frac{K}{4\chi^{i+1}}
	\Rightarrow P_3 = 0
	\end{equation}

	\end{itemize}
	Where we used a triangle inequality along with the condition on passing the $(i+1)$'th round and Claim \ref{cl:tr(M'*h)-tr(M'*h')} to achieve the inequality.

	Summing the 3 addends, we get:
	\[
	P_1 + P_2 + P_3 \leq \frac{1}{4T} +3{n'}\frac{\xi}{2\pi} +i\cdot \left(\frac{1}{4T}+3{n'}\frac{\xi}{2\pi}\right)= (i+1)\cdot \left(\frac{1}{4T}+3{n'}\frac{\xi}{2\pi}\right)
	\]
	As required, thus concluding the proof for the induction step and hence of Claim \ref{cl:soundness by induction} as well.	
	\end{itemize}

	\end{proof}

\section{Acknowledgments}
Both authors acknowledge the generous support of ERC grant number 280157.\\
We also thank Or Sattath and Thomas Vidick for helpful discussions. 

\bibliographystyle{alpha}
\bibliography{bib}

\newcommand{\etalchar}[1]{$^{#1}$}
\begin{thebibliography}{ABOEM17}

\bibitem[AALV09]{AALV09}
Dorit Aharonov, Itai Arad, Zeph Landau, and Umesh Vazirani.
\newblock The detectability lemma and quantum gap amplification.
\newblock In {\em Proceedings of the forty-first annual ACM symposium on Theory
  of computing}, pages 417--426. ACM, 2009.

\bibitem[Aar05]{Aar05}
Scott Aaronson.
\newblock Quantum computing, postselection, and probabilistic polynomial-time.
\newblock In {\em Proceedings of the Royal Society of London A: Mathematical,
  Physical and Engineering Sciences}, volume 461, pages 3473--3482. The Royal
  Society, 2005.

\bibitem[Aar14]{AarPost}
Scott Aaronson.
\newblock Postbqp postscripts: A confession of mathematical errors.
\newblock \url{http://www.scottaaronson.com/blog/?p=2072}, 2014.

\bibitem[AAV13]{AAV13}
Dorit Aharonov, Itai Arad, and Thomas Vidick.
\newblock Guest column: the quantum pcp conjecture.
\newblock {\em Acm sigact news}, 44(2):47--79, 2013.

\bibitem[ABOE10]{ABOE10}
D~Aharonov, M~Ben-Or, and E~Eban.
\newblock Interactive proofs for quantum computations.
\newblock {\em Proceedings of Innovations in Computer Science (ITCS2010), arXiv
  preprint (2008) arXiv:0810.5375}.

\bibitem[ABOEM17]{ABOEM17}
Dorit Aharonov, Michael Ben-Or, Elad Eban, and Urmila Mahadev.
\newblock Interactive proofs for quantum computations.
\newblock {\em arXiv preprint arXiv:1704.04487}, 2017.

\bibitem[ALM{\etalchar{+}}98]{ALM+98}
Sanjeev Arora, Carsten Lund, Rajeev Motwani, Madhu Sudan, and Mario Szegedy.
\newblock Proof verification and the hardness of approximation problems.
\newblock {\em Journal of the ACM (JACM)}, 45(3):501--555, 1998.

\bibitem[AS98]{AS98}
Sanjeev Arora and Shmuel Safra.
\newblock Probabilistic checking of proofs: A new characterization of np.
\newblock {\em Journal of the ACM (JACM)}, 45(1):70--122, 1998.

\bibitem[BFL91]{BFL91}
L{\'a}szl{\'o} Babai, Lance Fortnow, and Carsten Lund.
\newblock Non-deterministic exponential time has two-prover interactive
  protocols.
\newblock {\em Computational complexity}, 1(1):3--40, 1991.

\bibitem[BH13]{BH13}
Fernando~GSL Brandao and Aram~W Harrow.
\newblock Product-state approximations to quantum ground states.
\newblock In {\em Proceedings of the forty-fifth annual ACM symposium on Theory
  of computing}, pages 871--880. ACM, 2013.

\bibitem[BV97]{BV93}
Ethan Bernstein and Umesh Vazirani.
\newblock Quantum complexity theory.
\newblock {\em SIAM Journal on Computing}, 26(5):1411--1473, 1997.

\bibitem[EH15]{EH15}
Lior Eldar and Aram~W Harrow.
\newblock Local hamiltonians whose ground states are hard to approximate.
\newblock {\em arXiv preprint arXiv:1510.02082}, 2015.

\bibitem[FG02]{ZOO}
Lance Fortnow and Bill Gasarch.
\newblock Computational complexity, complexity class of the week: $pp$.
\newblock
  \url{http://blog.computationalcomplexity.org/2002/09/complexity-class-of-week-pp.html},
  2002.

\bibitem[FH13]{FH13}
Michael~H Freedman and Matthew~B Hastings.
\newblock Quantum systems on non-$ k $-hyperfinite complexes: A generalization
  of classical statistical mechanics on expander graphs.
\newblock {\em arXiv preprint arXiv:1301.1363}, 2013.

\bibitem[FK12]{FK12}
Joseph~F Fitzsimons and Elham Kashefi.
\newblock Unconditionally verifiable blind computation.
\newblock {\em arXiv preprint arXiv:1203.5217}, 2012.

\bibitem[FV15]{FV15}
Joseph Fitzsimons and Thomas Vidick.
\newblock A multiprover interactive proof system for the local hamiltonian
  problem.
\newblock In {\em Proceedings of the 2015 Conference on Innovations in
  Theoretical Computer Science}, pages 103--112. ACM, 2015.

\bibitem[GKR08]{GKR08}
Shafi Goldwasser, Yael~Tauman Kalai, and Guy~N Rothblum.
\newblock Delegating computation: interactive proofs for muggles.
\newblock In {\em Proceedings of the fortieth annual ACM symposium on Theory of
  computing}, pages 113--122. ACM, 2008.

\bibitem[Ji16a]{Ji16a}
Zhengfeng Ji.
\newblock Classical verification of quantum proofs.
\newblock In {\em Proceedings of the forty-eighth annual ACM symposium on
  Theory of Computing}, pages 885--898. ACM, 2016.

\bibitem[Ji16b]{Ji16b}
Zhengfeng Ji.
\newblock Compression of quantum multi-prover interactive proofs.
\newblock {\em arXiv preprint arXiv:1610.03133}, 2016.

\bibitem[LFKN92]{LFKN92}
Carsten Lund, Lance Fortnow, Howard Karloff, and Noam Nisan.
\newblock Algebraic methods for interactive proof systems.
\newblock {\em Journal of the ACM (JACM)}, 39(4):859--868, 1992.

\bibitem[Mei09]{Mei09}
Or~Meir.
\newblock Combinatorial pcps with efficient verifiers.
\newblock In {\em Foundations of Computer Science, 2009. FOCS'09. 50th Annual
  IEEE Symposium on}, pages 463--471. IEEE, 2009.

\bibitem[MF16]{FM16}
Tomoyuki Morimae and Joseph~F Fitzsimons.
\newblock Post hoc verification with a single prover.
\newblock {\em arXiv preprint arXiv:1603.06046}, 2016.

\bibitem[NC02]{NC02}
Michael~A Nielsen and Isaac Chuang.
\newblock Quantum computation and quantum information, 2002.

\bibitem[NV16]{NV16}
Anand Natarajan and Thomas Vidick.
\newblock Robust self-testing of many-qubit states.
\newblock {\em arXiv preprint arXiv:1610.03574}, 2016.

\bibitem[Ozo09]{Ozo09}
Maris Ozols.
\newblock How to generate a random unitary matrix, 2009.

\bibitem[RUV13]{RUV13}
Ben~W Reichardt, Falk Unger, and Umesh Vazirani.
\newblock Classical command of quantum systems.
\newblock {\em Nature}, 496(7446):456, 2013.

\bibitem[Sha92]{Sha92}
Adi Shamir.
\newblock Ip= pspace.
\newblock {\em Journal of the ACM (JACM)}, 39(4):869--877, 1992.

\bibitem[Shi02]{Shi02}
Yaoyun Shi.
\newblock Both toffoli and controlled-not need little help to do universal
  quantum computation.
\newblock {\em arXiv preprint quant-ph/0205115}, 2002.

\end{thebibliography}

\appendix
\clearpage
\section{Lemma \ref{lm:basic} proof}
\label{app:L1}
\paragraph{Lemma \ref{lm:basic}.}

Let $\Delta$ an operator on $n$ qubits and let $u=u^1\otimes\cdots\otimes u^n$ a unitary operator on $n$ qubits:
\[
\Delta\neq0 \Rightarrow Pr_{u^1,\cdots,u^n\backsim U(2)}\Big(tr(\Delta\cdot u)=0\Big)=0
\]

	\begin{proof} By induction on $n$:

		\begin{itemize}
			\item {\em \ref{lm:basic}.1 Induction base:} For $n=1$ we have:
		
			\[
			u = u^1 =  \begin{pmatrix}   
				\cos\theta\cdot e^{i\varphi_1} & \sin\theta\cdot e^{i\varphi_2} \\
				-\sin\theta\cdot e^{-i\varphi_2} & \cos\theta\cdot e^{-i\varphi_1}
			\end{pmatrix},
			\Delta = \begin{pmatrix}   
				a & c \\
				d & b
			\end{pmatrix},	
			\]
			and we can assume $a \neq 0$, as the following derivation is similar regardless of which entry of $\Delta$ is non-zero. 

			So we have:
			\begin{align*}
			&Pr_{u \backsim U(2)}\Big(tr(\Delta \cdot u)=0\Big)\\
			&=Pr_{\theta, \varphi_1, \varphi_2 \backsim [0,2\pi]}\Big(a\cdot \cos\theta \cdot e^{i\varphi_1}+d\cdot \sin\theta \cdot e^{i\varphi_2}-c\cdot \sin\theta \cdot e^{-i\varphi_2}+b\cdot \cos\theta \cdot e^{-i\varphi_1}=0\Big)\\			
			&=Pr_{\theta, \varphi_1, \varphi_2 \backsim [0,2\pi]}\Big(\cos\theta \left(a \cdot e^{i\varphi_1} + b \cdot e^{-i\varphi_1}\right) 
			+ \sin\theta \left(d \cdot e^{i\varphi_2}-c\cdot e^{-i\varphi_2}\right)=0\Big)\\		
			&\leq Pr_{\theta\backsim [0,2\pi]}\Big(\sin\theta=0\Big)+
			Pr_{\theta\backsim [0,2\pi]}\Big(\cos\theta=0\Big)+
			Pr_{\varphi_1\backsim [0,2\pi]}\Big(a \cdot e^{i\varphi_1} + b \cdot e^{-i\varphi_1}=0\Big)+
			\\ 
			&+Pr_{\substack{a \cdot e^{i\varphi_1} + b \cdot e^{-i\varphi_1} \neq0\\
			\cos\theta\neq0\\
			\sin\theta\neq0
			}}
			\Big(\left\vert 
			\cos\theta \left(a \cdot e^{i\varphi_1} + b \cdot e^{-i\varphi_1}\right)\right\vert = \left\vert \sin\theta \left(d \cdot e^{i\varphi_2}-c\cdot e^{-i\varphi_2}\right)
			\right\vert\Big)
			\end{align*}
			we calculate an upper bound on the sum by proving an upper bound on each of the probabilities separately: 
			\begin{enumerate}
				\item{$Pr_{\theta\backsim [0,2\pi]}\Big(\sin\theta=0\Big)=0$} 		
				
				\item{$Pr_{\theta\backsim [0,2\pi]}\Big(\cos\theta=0\Big)=0$} 
								
				\item{To see $Pr_{\varphi_1\backsim [0,2\pi]}\Big(a \cdot e^{i\varphi_1} + b \cdot e^{-i\varphi_1}=0\Big)=0$ we observe:}
				\begin{align*}
				Pr_{\varphi_1\backsim [0,2\pi]}&\Big(a \cdot e^{i\varphi_1} + b \cdot e^{-i\varphi_1}=0\Big)\\
				&=Pr_{\varphi_1\backsim [0,2\pi]}\Big(a \left( \cos\varphi_1 + i\cdot \sin \varphi_1\right) +b\left(\cos \varphi_1- i\cdot \sin\varphi_1\right)=0\Big)\\
				&=Pr_{\varphi_1\backsim [0,2\pi]}
				\left(\sqrt{
				\left((a+b)\cos\varphi_1\right)^2 + \left((a-b)\sin \varphi_1\right)^2
				}=0\right)
				\end{align*}
				
				we assume $b$ and $a$ have the same sign, and so:
				\begin{align*}
				&Pr_{\varphi_1\backsim [0,2\pi]}
				\left(\sqrt{
				\left((a+b)\cos\varphi_1\right)^2 + \left((a-b)\sin \varphi_1\right)^2
				}=0\right)\\
				\leq&Pr_{\varphi_1\backsim [0,2\pi]}
				\left(\sqrt{
				\left((a+b)\cos\varphi_1\right)^2
				}=0\right)
				= Pr_{\varphi_1\backsim [0,2\pi]}\Big(\cos\varphi_1=0\Big)=0
				\end{align*}
				Where the final equality comes from the fact  that $a\neq0$ and $a$ has the same sign as $b$. If $a$ and $b$ do not have the same sign - we simply bound using $\sqrt{\left((a-b)\sin\varphi_1\right)^2}$.

				\item{$Pr_{\substack{a \cdot e^{i\varphi_1} + b \cdot e^{-i\varphi_1} \neq0\\
			\cos\theta\neq0\\
			\sin\theta\neq0
			}}
			\Big(\left\vert 
			\cos\theta \left(a \cdot e^{i\varphi_1} + b \cdot e^{-i\varphi_1}\right)\right\vert = \left\vert \sin\theta \left(d \cdot e^{i\varphi_2}-c\cdot e^{-i\varphi_2}\right)
			\right\vert\Big)$}
			\begin{align*}
			=&Pr_{\substack{a \cdot e^{i\varphi_1} + b \cdot e^{-i\varphi_1} \neq0\\
			\cos\theta\neq0\\
			\sin\theta\neq0
			}}
			\Big(\left\vert 
			cot\theta\right\vert \left\vert \left(a \cdot e^{i\varphi_1} + b \cdot e^{-i\varphi_1}\right)\right\vert = \left\vert d \cdot e^{i\varphi_2}-c\cdot e^{-i\varphi_2}
			\right\vert\Big)\\
			=&Pr_{\substack{a \cdot e^{i\varphi_1} + b \cdot e^{-i\varphi_1} \neq0\\
			\cos\theta\neq0\\
			\sin\theta\neq0
			}}				
			\left(\left\vert
			cot(\theta)\right\vert = \underbrace{\left\vert\frac{d \cdot e^{i\varphi_2}-c\cdot e^{-i\varphi_2}}{a \cdot e^{i\varphi_1} + b \cdot e^{-i\varphi_1}}\right\vert}_H\right)
			=0
			\end{align*}
			As H is some constant number independent of $\theta$.
			\end{enumerate}
					
			Summing the 4 probabilities, we get the induction base for Lemma \ref{lm:basic}.	
	
			\item {\em \ref{lm:basic}.2 Induction step:} Assuming correctness for $n$, we prove for $n+1$. Choosing a basis for the $n+1$ qubits where the $(n+1)$'th qubit is the most significant qubit, we have:
			\[
			u^{n+1} = \begin{pmatrix}   
				u_1 & u_3 \\
				u_4 & u_2
			\end{pmatrix},
			u = \begin{pmatrix}   
				u_1 \cdot \begin{pmatrix} U \end{pmatrix}_{n} & u_3 \cdot \begin{pmatrix} U \end{pmatrix}_{n} \\
				u_4 \cdot \begin{pmatrix} U \end{pmatrix}_{n} & u_2 \cdot \begin{pmatrix} U \end{pmatrix}_{n} 
			\end{pmatrix},
			\Delta = \begin{pmatrix}   
				\begin{pmatrix} \Delta_1 \end{pmatrix}_{n} & \begin{pmatrix} \Delta_3 \end{pmatrix}_{n} \\
				\begin{pmatrix} \Delta_4 \end{pmatrix}_{n} & \begin{pmatrix} \Delta_2 \end{pmatrix}_{n}
			\end{pmatrix}
			\]
			Where $(U)_{n}=u^1\otimes u^2 \otimes \cdots \otimes  u^{n}$, and we denote the number of qubits each block operates on in subscript, for clarity. Now we have:

			\begin{align*}
			tr(\Delta\cdot u)
			&=tr\left(\begin{pmatrix}   
				\begin{pmatrix} \Delta_1 \end{pmatrix} & 
				\begin{pmatrix} \Delta_3 \end{pmatrix} \\
				\begin{pmatrix} \Delta_4 \end{pmatrix} & 
				\begin{pmatrix} \Delta_2 \end{pmatrix}
			\end{pmatrix}\cdot
			\begin{pmatrix}   
				u_1 \cdot \begin{pmatrix} U \end{pmatrix} & 
				u_3 \cdot \begin{pmatrix} U \end{pmatrix} \\
				u_4 \cdot \begin{pmatrix} U \end{pmatrix} & 
				u_2 \cdot \begin{pmatrix} U \end{pmatrix} 
			\end{pmatrix}\right)\\
			&=u_1 \cdot tr\begin{pmatrix} \Delta_1 \cdot U \end{pmatrix}+ u_4 \cdot tr\begin{pmatrix} \Delta_3 \cdot U \end{pmatrix} + u_3 \cdot tr\begin{pmatrix} \Delta_4 \cdot U \end{pmatrix}+ u_2 \cdot tr\begin{pmatrix} \Delta_2 \cdot U \end{pmatrix}\\
			&=tr\left(\underbrace{\begin{pmatrix}   
				tr(\Delta_1\cdot U) & tr(\Delta_3\cdot U)\\
				tr(\Delta_4\cdot U) & tr(\Delta_2\cdot U
			\end{pmatrix} \cdot}_{\Delta'}			
			\begin{pmatrix}   
				u_1 & u_3 \\
				u_4 & u_2
			\end{pmatrix}\right)
			\end{align*}
			
			\begin{equation} \label{eq:tr(delta*u)}
			\Rightarrow tr(\Delta\cdot u)=tr(\Delta'\cdot u^{n+1})
			\end{equation}

			W.L.O.G we can assume $\Delta_1\neq0$, so our induction base tells us $tr(\Delta_1 \cdot U)\neq0$, meaning $\Delta'\neq0$. Using the induction base once again on $\Delta',u^{n+1}$, we obtain the required induction step.
			
			
		\end{itemize}

	\end{proof}

\clearpage
\section{Lemma \ref{lm:reduced assosiativeness} proof}
\label{app:L2}

\paragraph{Lemma \ref{lm:reduced assosiativeness}.}
Let $U$ an operator on a hilbert space \(\mathcal{H}\), and $q$ an operator on a subsystem $Q$ of \(\mathcal{H}\), then $U|_Q\cdot q=(U\cdot(q\otimes I_{\overline{Q}}))|_Q$ 	

	\begin{proof}
		By induction on $i$, the number of qubits $Q$ works on non-trivially:
		\begin{itemize}
			\item {\em Induction base:} For $i=1$ we have (choosing a basis for \(\mathcal{H}\) where the single qubit in $Q$ is the most significant qubit):
			\[
				q = \begin{pmatrix}   
					q_1 & q_3 \\
					q_4 & q_2
					\end{pmatrix},
				U = \begin{pmatrix}   
					\begin{pmatrix} U_1 \end{pmatrix} & \begin{pmatrix} U_3 \end{pmatrix} \\
					\begin{pmatrix} U_4 \end{pmatrix} & \begin{pmatrix} U_2 \end{pmatrix}
					\end{pmatrix}		
			\]
			and so:
			\begin{align*}
			(U\cdot(q\otimes I_{\overline{Q}}))|_Q
			&=\left. \left(\begin{pmatrix}   
				\begin{pmatrix} U_1 \end{pmatrix} & \begin{pmatrix} U_3 \end{pmatrix} \\
				\begin{pmatrix} U_4 \end{pmatrix} & \begin{pmatrix} U_2 \end{pmatrix}
				\end{pmatrix}	\cdot
				\begin{pmatrix}   
				q_1\cdot \begin{pmatrix} I_{n-1} \end{pmatrix} & q_3\cdot \begin{pmatrix} I_{n-1} 			\end{pmatrix} \\
				q_4\cdot \begin{pmatrix} I_{n-1} \end{pmatrix} & q_2\cdot \begin{pmatrix} I_{n-1} 			\end{pmatrix}
				\end{pmatrix}	 
			  \right) \right|_Q\\
			&=\left. \begin{pmatrix}   
				\begin{pmatrix} q_1\cdot U_1 + q_4\cdot U_3 \end{pmatrix} & \begin{pmatrix} 				q_3\cdot U_1 + q_2\cdot U_3\end{pmatrix} \\
				\begin{pmatrix} q_1\cdot U_4 + q_4\cdot U_2 \end{pmatrix} & \begin{pmatrix} 				q_3\cdot U_4 + q_2\cdot U_2\end{pmatrix} 
				\end{pmatrix}	 
			  \right|_Q\\	
			&=\begin{pmatrix}   
				tr\begin{pmatrix} q_1\cdot U_1 + q_4\cdot U_3 \end{pmatrix} & tr\begin{pmatrix} q_3\cdot U_1 + q_2\cdot U_3\end{pmatrix} \\
				tr\begin{pmatrix} q_1\cdot U_4 + q_4\cdot U_2 \end{pmatrix} & tr\begin{pmatrix} q_3\cdot U_4 + q_2\cdot U_2\end{pmatrix} 
			  \end{pmatrix}\\
			&=\begin{pmatrix}   
				q_1\cdot tr\begin{pmatrix} U_1 \end{pmatrix} + q_4\cdot tr\begin{pmatrix} U_3 																				\end{pmatrix} 
				& q_3\cdot tr\begin{pmatrix} U_1 \end{pmatrix} + q_2\cdot tr\begin{pmatrix} U_3 																			\end{pmatrix} \\
				q_1\cdot tr\begin{pmatrix} U_4 \end{pmatrix} + q_4\cdot tr\begin{pmatrix} U_2 																				\end{pmatrix} 
				& q_3\cdot tr\begin{pmatrix} U_4 \end{pmatrix} + q_2\cdot tr\begin{pmatrix} U_2 																			\end{pmatrix} 
			  \end{pmatrix}\\
			&=\begin{pmatrix}   
			tr\begin{pmatrix} U_1 \end{pmatrix} & tr\begin{pmatrix} U_3 \end{pmatrix} \\
			tr\begin{pmatrix} U_4 \end{pmatrix} & tr\begin{pmatrix} U_2 \end{pmatrix}
			\end{pmatrix}\cdot 
			\begin{pmatrix}   
			q_1 & q_3 \\
			q_4 & q_2
			\end{pmatrix}\\
			&=U|_Q\cdot q
			\end{align*}
	
			\item {\em Induction step:} Assuming correctness for $i$, we prove for $i+1$. Choosing a basis  for \(\mathcal{H}\) where the $(i+1)$'th qubit in $Q$ is the most significant qubit, and $Q$'s other $i$ qubits are the following significant bits - we have:
			\[
			q = \begin{pmatrix}   
					\begin{pmatrix} q_1 \end{pmatrix}_i & \begin{pmatrix} q_3 \end{pmatrix}_i \\
					\begin{pmatrix} q_4 \end{pmatrix}_i & \begin{pmatrix} q_2 \end{pmatrix}_i
				\end{pmatrix}_{i+1},
			U = \begin{pmatrix}   
					\begin{pmatrix} U_1 \end{pmatrix}_{n-1} & \begin{pmatrix} U_3 \end{pmatrix}_{n-1} \\
					\begin{pmatrix} U_4 \end{pmatrix}_{n-1} & \begin{pmatrix} U_2 \end{pmatrix}_{n-1}
				\end{pmatrix}_{n}		
			\]
			And we denote the number of qubits each operator operates on in subscript, for clarity. Now we have:	
		\begin{align*}
		&(U(q\otimes I_{\overline{Q}}))|_Q =\\
		&=\left. \left(
			\begin{pmatrix}   
				\begin{pmatrix} U_1 \end{pmatrix} & \begin{pmatrix} U_3 \end{pmatrix} \\
				\begin{pmatrix} U_4 \end{pmatrix} & \begin{pmatrix} U_2 \end{pmatrix}
			\end{pmatrix}	\cdot
			\begin{pmatrix}   
				\begin{pmatrix} q_1 \end{pmatrix}\otimes  \begin{pmatrix} I_{n-i-1} \end{pmatrix} & 						\begin{pmatrix} q_3 \end{pmatrix}\otimes \begin{pmatrix} I_{n-i-1} \end{pmatrix} \\
				\begin{pmatrix} q_4 \end{pmatrix}\otimes \begin{pmatrix} I_{n-i-1} \end{pmatrix} & 							\begin{pmatrix} q_2 \end{pmatrix}\otimes \begin{pmatrix} I_{n-i-1} \end{pmatrix}
			\end{pmatrix}	 
		  \right) \right|_Q\\
		&=\left. 
			\begin{pmatrix}   
				\begin{pmatrix} U_1  (q_1 \otimes I_{n-i-1}) + U_3 (q_4 \otimes I_{n-i-1}) \end{pmatrix} & 													\begin{pmatrix} U_1 (q_3 \otimes I_{n-i-1}) + U_3 (q_2 \otimes I_{n-i-1}) \end{pmatrix} \\
				\begin{pmatrix} U_4  (q_1 \otimes I_{n-i-1}) + U_2 (q_4 \otimes I_{n-i-1}) \end{pmatrix} & 													\begin{pmatrix} U_4 (q_3 \otimes I_{n-i-1}) + U_2 (q_2 \otimes I_{n-i-1}) \end{pmatrix}
			\end{pmatrix}	 
		  \right|_Q\\
		&=\begin{pmatrix}   
			\left. \begin{pmatrix} U_1 (q_1 \otimes I_{n-i-1}) + U_3(q_4 \otimes I_{n-i-1}) \end{pmatrix}\right|_{Q \backslash1} &
			\left. \begin{pmatrix} U_1 (q_3 \otimes I_{n-i-1}) + U_3(q_2 \otimes I_{n-i-1})	\end{pmatrix} \right|_{Q \backslash1} \\
			\left. \begin{pmatrix} U_4 (q_1 \otimes I_{n-i-1}) + U_2 (q_4 \otimes I_{n-i-1}) \end{pmatrix}\right|_{Q \backslash1} &
			\left. \begin{pmatrix} U_4(q_3 \otimes I_{n-i-1}) + U_2 (q_2 \otimes I_{n-i-1}) \end{pmatrix} \right|_{Q \backslash1}
		  \end{pmatrix}\\			
		&=\begin{pmatrix}   
			\begin{pmatrix} U_1|_{Q\backslash 1} \cdot q_1 + U_3|_{Q\backslash 1} \cdot q_4 \end{pmatrix} & 
			\begin{pmatrix} U_1|_{Q\backslash 1} \cdot q_3 + U_3|_{Q\backslash 1} \cdot q_2 \end{pmatrix} \\
			\begin{pmatrix} U_4|_{Q\backslash 1} \cdot q_1 + U_2|_{Q\backslash 1} \cdot q_4 			\end{pmatrix} &
			\begin{pmatrix} U_4|_{Q\backslash 1} \cdot q_3 + U_2|_{Q\backslash 1} \cdot q_2 				\end{pmatrix} 
		  \end{pmatrix}\\
		&=\begin{pmatrix}   
			\begin{pmatrix} U_1 \end{pmatrix}|_{Q\backslash1} & 
			\begin{pmatrix} U_3 \end{pmatrix}|_{Q\backslash 1} \\
			\begin{pmatrix} U_4 \end{pmatrix}|_{Q\backslash1} & 
			\begin{pmatrix} U_2 \end{pmatrix}|_{Q\backslash 1}
		\end{pmatrix} \cdot 
		\begin{pmatrix}   
		  \begin{pmatrix} q_1 \end{pmatrix} & \begin{pmatrix} q_3 \end{pmatrix} \\
		  \begin{pmatrix} q_4 \end{pmatrix} & \begin{pmatrix} q_2 \end{pmatrix}
		\end{pmatrix}\\
		&=U|_Q\cdot q
		\end{align*}
	
		Where we used the induction step assumption for the 4'th equality. 
		
		\end{itemize}
		And so, by induction we conclude the proof of Lemma \ref{lm:reduced assosiativeness}.
	
	\end{proof}

\clearpage
\section{Proof of Theorem \ref{thm:maintheorem2} given Theorem \ref{thm:maintheorem1}}
\label{app:T1.2}
\paragraph{Theorem \ref{thm:maintheorem2}}
$\PsP\in \IP[\BPP, \PP]$. 
\begin{proof}
Let $\lang \in \PsP = \PPP = \PpBQP$, then there exists an algorithm $\alg (x)$ that runs in time polynomial in time $\abs{x}$ using $m=\poly{\abs{x}}$ queries to a $\pBQP$ oracle, such that if $x\in\lang$ $\alg (x)=1$ and otherwise $\alg (x)=0$. For $i\in\lbrace1\dots m\rbrace$ we denote the $i$'th query in $\alg (x)$ $q_i$, and its answer by the oracle $r_i$. Each such quesry $q_i$ can be thought of as a request for the truth value of a term $y\in \lang '$ for some $y$ (of size polynomial in $\abs{x}$) and a $\pBQP$ complete language $\lang '$. We denote the query for the truth value of $y \in \overline{\lang '} $ by $\overline{q_i}$. Since $\pBQP$ is closed under complement, $\overline{\lang ' }\in \pBQP$ as well. By Theorem \ref{thm:maintheorem1}, this means that there is an 
interactive protocol $W_i$ between a $\BPP$ verifier $V$ and a prover $P$ with soundness at most $\frac{1}{3}$ and completeness at least $\frac{2}{3}$ for $q_i$, and a similar protocol $\overline{W_i}$ for $\overline{q_i}$, and the completeness holds even if $P$ is restricted to be a $\pBQP$ machine. 
We now consider the following interactive protocol between a $\BPP$ verifier $V$ and a prover $P$:\\
$V$ simulates $\alg$, but whenever $\alg$ calls for a query $q_i$, $V$ perform the following:
\begin{enumerate}
\item $V$ repeats the protocol $W_i$ $\log(m)+2$ times. If all runs results with accepting, sets $r'_i=1$. Otherwise:
\item $V$ repeats the protocol $\overline{W_i}$ $\log(m)+2$ times. If all runs results with accepting, sets $r'_i=0$. Otherwise, $V$ rejects.
\end{enumerate}
$V$ regards the $r'_i$ values as though they were the correct answers for the $\pBQP$ queries, and he then accepts or rejects according to the result of the simulated $\alg$.
Analysing this protocol, we have:
\begin{itemize}
	\item{Completeness:} Let $x \in \lang$, and let $P$ a $\pBQP$ machine. Assume $P$ tries to make $W_i$ accept if and only if  $r_i=1$, and that $P$ tries to make $\overline{W_i}$ accept if and only if $r_i=0$. By the completeness of $W_i$, this means that $r_i=1\Rightarrow r'_i=1$. On the other hand, if $r_i=0$, $P$ deliberately fails his $W_i$ protocol, and then goes through the series of $\overline{W_i}$ protocols. Along with the completeness of $\overline{W_i}$ this means $r_i=0\Rightarrow r'_i=0$, and together we have that $r_i=r'_i$, and so $V$ will accept with probably 1, since $\alg(x)$ accepts.
	\item{Soundness} Let $x \notin \lang$. By the soundness of $W_i$, $\overline{W_i}$ we have:
	\begin{align*}
	\forall i\in\lbrace1\dots m\rbrace: \quad  \prob{r_i\neq r'_i}\leq& \prob{r'_i=1 \vert r_i=0} + \prob{r'_i=0\vert r_i=1}\\
						\leq& (\prob{W_i=1\vert r_i=0})^{\log{m}+2} + (\prob{\overline{W_i}=0\vert r_i=1})^{\log{m}+2} \leq \frac{2}{3^{\log{m}+2}}\\
	\end{align*}
	Now, as we know that $\alg(x)=0$, we can use the union bound to get:
	\begin{align*}
	\prob{V\ Accepts} \leq \prob{\exists i:\ r_i \neq r'_i} \leq m*\prob{r_i\neq r'_i} \leq  \frac{2m}{3^{\log{m}+2}} < \frac{1}{3} 
	\end{align*}
\end{itemize}
Having shown an interactive proof protocol for $\lang$ with soundness less than $\frac{1}{3}$ and completeness 1, even when the prover $P$ is restricted to be a $\pBQP$ machine, this shows $\PsP\in \IP[\BPP, \PP]$, and by  $\PP = \pBQP$ we have Theorem \ref{thm:maintheorem2}.
\end{proof}

\clearpage
\section{Lemma \ref{lm:bounded} proof}
\label{app:L3}

\paragraph{Lemma \ref{lm:bounded}.}

Let $\Delta$ be an operator on ${n'}$ qubits, and let $u= u^1\otimes \cdots \otimes  u^{n'}$:
\[
\forall K\geq0, m\geq2:\Vert\Delta\Vert_{Frob}\geq K \Rightarrow Pr_{u^1,\cdots, u^{n'} \backsim U(2)}\left(\left\vert tr(\Delta \cdot u)\right\vert<\frac{4K}{\left(16m^3\right)^{n'}}\right)\leq \frac{5{n'}}{m}
\]
	\begin{proof} By induction on ${n'}$:

		\begin{itemize}
			\item {\em Induction base:} For ${n'}=1$ we have:
		
			\[
			u = u^1 =  \begin{pmatrix}   
				\cos\theta\cdot e^{i\varphi_1} & \sin\theta\cdot e^{i\varphi_2} \\
				-\sin\theta\cdot e^{-i\varphi_2} & \cos\theta\cdot e^{-i\varphi_1}
			\end{pmatrix},
			\Delta = \begin{pmatrix}   
				a & c \\
				d & b
			\end{pmatrix},	
			\]
			and we can assume $\vert a\vert\ = max(\vert a\vert,\vert b\vert,\vert c\vert,\vert d\vert)\Rightarrow \vert a\vert\geq \frac{K}{2}$, as the following derivation is similar regardless of which entry of $\Delta$ is maximal in absolute value. 
			So we have:
			\begin{align*}
			&Pr_{u \backsim U(2)}\left(\left\vert tr(\Delta \cdot u)\right\vert<\frac{K}{4m^3}\right)\\
			&=Pr_{\theta, \varphi_1, \varphi_2 \backsim [0,2\pi]}\left(\left\vert a\cdot \cos\theta \cdot e^{i\varphi_1}+d\cdot \sin\theta \cdot e^{i\varphi_2}-c\cdot \sin\theta \cdot e^{-i\varphi_2}+b\cdot \cos\theta \cdot e^{-i\varphi_1}\right\vert<\frac{K}{4m^3}\right)\\			
			&=Pr_{\theta, \varphi_1, \varphi_2 \backsim [0,2\pi]}\left(\left\vert 
			\cos\theta \left(a \cdot e^{i\varphi_1} + b \cdot e^{-i\varphi_1}\right) 
			+ \sin\theta \left(d \cdot e^{i\varphi_2}-c\cdot e^{-i\varphi_2}\right)
			\right\vert<\frac{K}{4m^3}\right)\\			
			&\leq Pr_{\theta\backsim [0,2\pi]}\left(\left\vert \sin\theta\right\vert\leq\frac{1}{m}\right)+
			Pr_{\theta\backsim [0,2\pi]}\left(\left\vert \cos\theta\right\vert\leq\frac{1}{m^2}\right)+
			Pr_{\varphi_1\backsim [0,2\pi]}\left(\left\vert a \cdot e^{i\varphi_1} + b \cdot e^{-i\varphi_1}\right\vert\leq\frac{K}{2m}\right)+
			\\ 
			&+Pr_{\substack{\left\vert \cos\theta\right\vert>\frac{1}{m^2}\\
			\left\vert \sin\theta\right\vert>\frac{1}{m}\\
			\left\vert a \cdot e^{i\varphi_1} + b \cdot e^{-i\varphi_1} \right\vert>\frac{K}{2m}}}
			\left(\left\vert\left\vert 
			\cos\theta \left(a \cdot e^{i\varphi_1} + b \cdot e^{-i\varphi_1}\right)\right\vert- \left\vert \sin\theta \left(d \cdot e^{i\varphi_2}-c\cdot e^{-i\varphi_2}\right)
			\right\vert\right\vert<\frac{K}{4m^3}\right)
			\end{align*}
			we calculate an upper bound on the sum by proving an upper bound on each of the probabilities separately: 
			\begin{itemize}
				\item{$Pr_{\theta\backsim [0,2\pi]}\Big(\left\vert \sin\theta\right\vert\leq\frac{1}{m}\Big)\leq\frac{1}{m}$:} 
				We will prove for $\cos\theta$ instead (and the probabilities are equal, as the functions only differ by a cyclic offset in the section):
				\paragraph{Lemma \ref{lm:bounded}.1:}
				$Pr_{\theta\backsim [0,2\pi]}\Big(\left\vert \cos\theta\right\vert\leq\frac{1}{m}\Big)\leq\frac{1}{m}$ 
				\begin{proof}
				Due to symmetry around $\pi$, we have:
				\[
				Pr_{\theta\backsim [0,2\pi]}\left(\left\vert \cos\theta\right\vert\leq\frac{1}{m}\right)=Pr_{\theta\backsim [\pi,2\pi]}\left(\left\vert \cos\theta\right\vert\leq\frac{1}{m}\right)
				\]
				We denote $\theta_{1}=\cos^{-1}\left(\frac{1}{m}\right),\ \theta_{2}=\cos^{-1}\left(-\frac{1}{m}\right),\ \theta_{1,2}\in [\pi,2\pi]$ and as the cosine function is monotonically increasing in $[\pi,2\pi]$ we have:
				\[
				Pr_{\theta\backsim [\pi,2\pi]}\left(\left\vert \cos\theta\right\vert\leq\frac{1}{m}\right)=\abs{\frac{\theta_1-\theta_2}{\pi}}
				\]				
				
				Remembering the lemma assumes $m\geq2$, we have:
				\[
				\theta\in[\pi,2\pi],\left\vert \cos\theta\right\vert\leq\frac{1}{m}\Rightarrow\theta\in[1\frac{1}{3}\pi,1\frac{2}{3}\pi]
				\]
				since the derivative for $\cos\theta$ in this section is at least $\frac{\sqrt{3}}{2}$, we can use The Mean Value Theorem on $\theta_{1,2}$ to achieve $\abs{\theta_1-\theta_2}\leq\frac{\frac{1}{m}-(-\frac{1}{m})}{\frac{\sqrt{3}}{2}}=\frac{4}{\sqrt{3}m}$, and so:
				\[
				Pr_{\theta\backsim [0,2\pi]}\left(\left\vert \cos\theta\right\vert\leq\frac{1}{m}\right)\leq\frac{4}{\sqrt{3}m\cdot\pi}\leq\frac{0.74}{m}\leq\frac{1}{m}
				\]
				\end{proof}
				
				\item{$Pr_{\theta\backsim [0,2\pi]}\Big(\left\vert \cos\theta\right\vert\leq\frac{1}{m^2}\Big)\leq\frac{1}{m}$:} 
				This follows trivially by a variable change to Lemma \ref{lm:bounded}.1.				
								
				\item{To see $Pr_{\varphi_1\backsim [0,2\pi]}\Big(\left\vert a \cdot e^{i\varphi_1} + b \cdot e^{-i\varphi_1}\right\vert\leq\frac{K}{2m}\Big)\leq\frac{1}{m}$ we observe:}
				\begin{align*}
				Pr_{\varphi_1\backsim [0,2\pi]}&\left(\left\vert a \cdot e^{i\varphi_1} + b \cdot e^{-i\varphi_1}\right\vert\leq\frac{K}{2m}\right)\\
				&=Pr_{\varphi_1\backsim [0,2\pi]}\left(\left\vert 
				a \left( \cos\varphi_1 + i\cdot \sin \varphi_1\right) +b\left(\cos \varphi_1- i\cdot \sin\varphi_1\right)\right\vert\leq\frac{K}{2m}\right)\\
				&=Pr_{\varphi_1\backsim [0,2\pi]}
				\left(\sqrt{
				\left((a+b)\cos\varphi_1\right)^2 + \left((a-b)\sin \varphi_1\right)^2
				}\leq\frac{K}{2m}\right)
				\end{align*}
				W.L.O.G we assume $a, b$ have the same sign (otherwise, we bound using the sine) and so:
				\begin{align*}
				&Pr_{\varphi_1\backsim [0,2\pi]}
				\left(\sqrt{
				\left((a+b)\cos\varphi_1\right)^2 + \left((a-b)\sin \varphi_1\right]^2
				}\leq\frac{K}{2m}\right)\\
				\leq&Pr_{\varphi_1\backsim [0,2\pi]}
				\left(\sqrt{
				\left[(a+0)\cos\varphi_1\right]^2 + \left[(a-a)\sin \varphi_1\right)^2
				}\leq\frac{K}{2m}\right)\\
				=&Pr_{\varphi_1\backsim [0,2\pi]}
				\left(\vert a\vert \cdot \vert \cos\varphi_1\vert \leq\frac{K}{2m}\right)
				\leq Pr_{\varphi_1\backsim [0,2\pi]}
				\left(\frac{K}{2}\cdot \vert \cos\varphi_1\vert \leq\frac{K}{2m}\right)\\
				=&Pr_{\varphi_1\backsim [0,2\pi]}
				\left(\vert \cos\varphi_1\vert \leq\frac{1}{m}\right)
				\end{align*}
				And we conclude by using Lemma \ref{lm:bounded}.1 again.
				
				\item{$Pr_{\substack{ \left\vert a \cdot e^{i\varphi_1} + b \cdot e^{-i\varphi_1} \right\vert>\frac{K}{2m}\\
				\left\vert \cos\theta\right\vert>\frac{1}{m^2}\\ 
				\left\vert \sin\theta\right\vert>\frac{1}{m}}}				
			\left(\left\vert 
			\cos\theta \left(a \cdot e^{i\varphi_1} + b \cdot e^{-i\varphi_1}\right)\right\vert- \left\vert \sin\theta \left(d \cdot e^{i\varphi_2}-c\cdot e^{-i\varphi_2}\right)
			\right\vert<\frac{K}{4m^3}\right)$}	
			\begin{align*}
			=&Pr_{\substack{ \left\vert a \cdot e^{i\varphi_1} + b \cdot e^{-i\varphi_1} \right\vert>\frac{K}{2m}\\
				\left\vert \cos\theta\right\vert>\frac{1}{m^2}\\ 
				\left\vert \sin\theta\right\vert>\frac{1}{m}}}				
			\left(\left\vert \left\vert
			cot(\theta)\right\vert\left\vert a \cdot e^{i\varphi_1} + b \cdot e^{-i\varphi_1}\right\vert- \left\vert d \cdot e^{i\varphi_2}-c\cdot e^{-i\varphi_2}
			\right\vert\right\vert<\frac{K}{4m^3\vert \sin\theta\vert}\right)\\
			\leq&Pr_{\substack{ \left\vert a \cdot e^{i\varphi_1} + b \cdot e^{-i\varphi_1} \right\vert>\frac{K}{2m}\\
				\left\vert \cos\theta\right\vert>\frac{1}{m^2}\\ 
				\left\vert \sin\theta\right\vert>\frac{1}{m}}}				
			\left(\left\vert \left\vert
			cot(\theta)\right\vert - \underbrace{\left\vert\frac{d \cdot e^{i\varphi_2}-c\cdot e^{-i\varphi_2}}{a \cdot e^{i\varphi_1} + b \cdot e^{-i\varphi_1}}\right\vert}_H\right\vert <\frac{K}{4m^2\left\vert a \cdot e^{i\varphi_1} + b \cdot e^{-i\varphi_1} \right\vert}\right)\\
			\leq&Pr_{\substack{ \left\vert a \cdot e^{i\varphi_1} + b \cdot e^{-i\varphi_1} \right\vert>\frac{K}{2m}\\
				\left\vert \cos\theta\right\vert>\frac{1}{m^2}\\ 
				\left\vert \sin\theta\right\vert>\frac{1}{m}}}	
				\left(\left\vert \left\vert
			cot(\theta)\right\vert - H\right\vert <\frac{1}{2m}\right)
			=Pr_{\substack{ \left\vert a \cdot e^{i\varphi_1} + b \cdot e^{-i\varphi_1} \right\vert>\frac{K}{2m}\\
				\left\vert \cos\theta\right\vert>\frac{1}{m^2}\\ 
				\left\vert \sin\theta\right\vert>\frac{1}{m}}}	
			\left(-\frac{1}{2m}< \left\vert	cot(\theta)\right\vert - H <\frac{1}{2m}\right)\\
			=&Pr_{\substack{ \left\vert a \cdot e^{i\varphi_1} + b \cdot e^{-i\varphi_1} \right\vert>\frac{K}{2m}\\
				\left\vert \cos\theta\right\vert>\frac{1}{m^2}\\ 
				\left\vert \sin\theta\right\vert>\frac{1}{m}}}	
			\left(H-\frac{1}{2m}< \left\vert cot(\theta)\right\vert <H+\frac{1}{2m}\right)		
			\end{align*}
			Again, due to the period and symmetry of the cotangent function, it suffices to evaluate this probability over the section $[0,\frac{\pi}{2}]$. We denote by $[x_1, x_2]\subset[0,\frac{\pi}{2}]$
 the section where $\vert \cos\theta\vert>\frac{1}{m^2},\vert \sin\theta\vert>\frac{1}{m}$, and remembering $m\geq2$, have $[x_1, x_2]\supset[\frac{\pi}{6},\frac{5\pi}{12}]$. As in the proof of Lemma \ref{lm:bounded}.1, we denote $\theta_{1}=cot^{-1}\left(\max\left(H-\frac{1}{2m}, 0\right)\right),\ \theta_{2}=cot^{-1}\left(H+\frac{1}{2m}\right),\ \theta_{1,2}\in [0,\frac{\pi}{2}]$ and as the cotangent function is monotonically decreasing in $[0,\frac{\pi}{2}]$ have:
				\[
				Pr_{\theta\backsim [x_1,x_2]}\left(H-\frac{1}{2m}< \left\vert cot(\theta)\right\vert <H+\frac{1}{2m}\right)=\frac{\theta_1-\theta_2}{x_1-x_2}
				\]
				and by the Mean Value Theorem and due to the fact the cotangent function's derivative is (in absolute value) at least 1, we can use The Mean Value Theorem on $\theta_{1,2}$ to achieve $\theta_1-\theta_2\leq\frac{H+\frac{1}{2m}-\left(H-\frac{1}{2m}\right)}{1}=\frac{1}{m}$

and so we get that the probability is less than $\frac{\theta_1-\theta_2}{\frac{5\pi}{12}-\frac{\pi}{6}}\leq\frac{4}{m\cdot\pi}<\frac{2}{m}$.
			\end{itemize}
					
			Summing the 4 probabilities, we get the induction base for Lemma \ref{lm:bounded}.	
	
			\item {\em Induction step:} Assuming correctness for ${n'}$, we prove for a matrix $\Delta$ on ${n'}+1$ qubits, 
assuming $\|\Delta\|\ge K$.
Choosing a basis for the ${n'}+1$ qubits where the $({n'}+1)$'th qubit 
is the most significant qubit, we have:
			\[
			u^{{n'}+1} = \begin{pmatrix}   
				u_1 & u_3 \\
				u_4 & u_2
			\end{pmatrix},
			u = \begin{pmatrix}   
				u_1 \cdot \begin{pmatrix} U \end{pmatrix}_{{n'}} & u_3 \cdot \begin{pmatrix} U \end{pmatrix}_{{n'}} \\
				u_4 \cdot \begin{pmatrix} U \end{pmatrix}_{{n'}} & u_2 \cdot \begin{pmatrix} U \end{pmatrix}_{{n'}} 
			\end{pmatrix},
			\Delta = \begin{pmatrix}   
				\begin{pmatrix} \Delta_1 \end{pmatrix}_{{n'}} & \begin{pmatrix} \Delta_3 \end{pmatrix}_{{n'}} \\
				\begin{pmatrix} \Delta_4 \end{pmatrix}_{{n'}} & \begin{pmatrix} \Delta_2 \end{pmatrix}_{{n'}}
			\end{pmatrix}
			\]
			Where $(U)_{{n'}}=u^1\otimes u^2 \otimes \cdots \otimes  u^{{n'}}$, and we denote the number of qubits each block operates on in subscript, for clarity. We now use the same notation as in the induction step 
in the proof of Lemma 
\ref{lm:basic}, 
and denote

\[
\Delta'=
\begin{pmatrix}   
				tr(\Delta_1\cdot U) & tr(\Delta_3\cdot U)\\
				tr(\Delta_4\cdot U) & tr(\Delta_2\cdot U)
			\end{pmatrix}
\]

			Again we can assume W.L.O.G that
 $\Vert\Delta_1\Vert_{Frob}\geq\frac{K}{2}$, and so by our 
induction step assumption applied on the $n'$ qubit matrix $\Delta_1$, we have:
			\begin{equation} \label{eq:62}
Pr_{u^1,\cdots, u^{n'} \backsim U(2)}\left(\Vert \Delta'\Vert_{Frob}<\frac{4K}{2\cdot \left(16m^3\right)^{n'}}\right)\leq Pr_{u^1,\cdots, u^{n'} \backsim U(2)}\left(\left\vert tr(\Delta_1 \cdot U)\right\vert<\frac{4K}{2\cdot \left(16m^3\right)^{n'}}\right)\leq \frac{5{n'}}{m}\end{equation}
where the first inequality is due to the fact that 
$|tr(\Delta_1\cdot U)|\le \Vert\Delta'\Vert_{Frob}$. By the induction base, 
applied to the one qubit matrix $\Delta'$, we also have 
			that if  $\Vert\Delta'\Vert_{Frob}\geq\frac{4K}{2\cdot \left(16m^3\right)^{n'}}$, 

			\begin{equation}\label{eq:induction base}
						Pr_{u^{{n'}+1}\backsim U(2)}\left(\left\vert tr(\Delta' \cdot u^{{n'}+1})\right\vert<\frac{4\cdot 4K}{16 m^3 \cdot 2\cdot \left(16m^3\right)^{n'}}\right)	
			\leq \frac{5}{m}. 
			\end{equation}
			
We now recall that as in 
the proof of Lemma \ref{lm:basic}, 
 Equation \refeq{eq:tr(delta*u)} we have
			\begin{equation}\label{eq:tr(delta*u) bounded}
			tr(\Delta\cdot u)=tr(\Delta'\cdot u^{{n'}+1})
			\end{equation}

			thus we can conclude by:
			\begin{align*}
			&Pr_{u^1,\cdots, u^{{n'}+1} \backsim U(2)}\left(\left\vert tr(\Delta \cdot u)\right\vert<\frac{4K}{\left(16m^3\right)^{{n'}+1}}\right)\\
			\leq& Pr_{u^1,\cdots, u^{n'} \backsim U(2)}\left(\Vert \Delta'\Vert_{Frob}<\frac{4K}{2\cdot \left(16m^3\right)^{n'}}\right) + 
			Pr_{\substack{u^{{n'}+1}\backsim U(2)\\ 
			\Vert \Delta'\Vert_{Frob}\geq\frac{4K}{2\cdot \left(16m^3\right)^{n'}}}}\left(\left\vert tr(\Delta' \cdot u^{{n'}+1})\right\vert<\frac{4K}{\left(16m^3\right)^{{n'}+1}}\right)\\
			\leq&\frac{5{n'}}{m}+
			Pr_{\substack{u^{{n'}+1}\backsim U(2)\\ 
			\Vert \Delta'\Vert_{Frob}\geq\frac{4K}{4\cdot \left(16m^3\right)^{n'}}}}\left(\left\vert tr(\Delta' \cdot u^{{n'}+1})\right\vert<\frac{4K}{8 \cdot m^3 \left(16m^3\right)^{n'}}\right)\leq \frac{5}{m}+\frac{5{n'}}{m}=\frac{5({n'}+1)}{m}
			\end{align*}
			When the first inequality follows from the total probability equation along with Equation \refeq{eq:tr(delta*u) bounded}, the second inequality follows from Inequality
\refeq{eq:62} along with the simple observation
that the probability to be smaller than a value - increases with that value, and the third inequality follows from Inequality \refeq{eq:induction base}.
		\end{itemize}

	\end{proof}

\clearpage
\section{Claim \ref{cl:bounded completeness} proof details}
\label{app:C5}

	Herein are the proofs of Claims \ref{cl:tr(A')-tr(A)}, \ref{cl:tr(M*h)-tr(M'*h)}, \ref{cl:tr(M'*h)-tr(M'*h')}, \ref{cl:tr(U'*0)-tr(U*0)}:
	
	\paragraph{Claim \ref{cl:tr(A')-tr(A)}}
	$\left\vert tr\left( A'_i\right)-tr\left( A_i\right)\right\vert\leq 2^n T\cdot \xi$
	\begin{proof}
	We have: 	
\[	
	\forall_{1\leq i\leq T}\left\Vert\tilde{u}_i-u_i\right\Vert_{L_2}\leq\xi\Rightarrow 
	\Vert A'_i-A_i \Vert \leq T\cdot\xi
\]
	Where the right inequality follows by a simple telescopic argument.\\
	
	Clearly, the difference between the traces of the two operators 
is bounded by the number of entries on their diagonal times their $L_2$ distance. The claim follows by the fact $A_i,A'_i$ have $2^n$ entries on their diagonal.
	
	\end{proof}

	\paragraph{Claim \ref{cl:tr(M*h)-tr(M'*h)}}
	$\left\vert tr\left( M_{i-1}\cdot g_i^{-1}\cdot u_i\right)-tr\left( M'_{i-1}\cdot g_i^{-1}\cdot u_i\right)\right\vert\leq 2^{2n'}\cdot 2^n T\cdot \xi$

	\begin{proof}
	We have:
	\begin{align*}
	\left\vert tr\left(M'_{i-1}\cdot g_i^{-1}\cdot u_i\right)-tr\left(M_{i-1}\cdot g_i^{-1}\cdot u_i\right)\right\vert 
	&=\left\vert tr\left(\left(M'_{i-1}-M_{i-1}\right)g_i^{-1}\cdot u_i\right)\right\vert
	=\left\vert \sum_{1\leq j,k \leq 2^{n'}}\left(M'_{i-1}-M_{i-1}\right)_{k,j}\cdot \left(g_i^{-1}\cdot u_i\right)_{j, k}\right\vert \\
	&\leq \sum_{1\leq j,k \leq 2^{n'}}\left\vert \left(M'_{i-1}-M_{i-1}\right)_{k, j}\right\vert
	\leq 2^{2n'}\cdot 2^n T\cdot \xi
	\end{align*}
	Where we used $\forall_{1\leq j,k \leq 2^{n'}}:\abs{ \left(g_i^{-1}\cdot u_i\right)_{j, k}} \leq 1$ for the first inequality, and a similar argument to that of Claim \ref{cl:tr(A')-tr(A)} for each entry of $\left(M'_{i-1}-M_{i-1}\right)$ in the second inequality (namely that $\abs{\left(M'_{i-1}-M_{i-1}\right)_{k,j}} \le \norm{M'_{i-1}-M_{i-1}} \le 2^n \norm{A'_{i-1}-A_{i-1}}$).

	\end{proof}
	
	\paragraph{Claim \ref{cl:tr(M'*h)-tr(M'*h')}}
	$\abs{tr\left( M'_{i-1}\cdot g^{-1}_i\cdot u_i\right)-tr\left( M'_{i-1}\cdot \widehat{g^{-1}}_i\cdot\widehat{u}_i\right)} \leq  2\cdot 2^{3n'}\cdot 2^n \cdot \xi$

	\begin{proof}
	We first note:
	\begin{align*}
	\max_{0\leq j,k\leq 2^{n'}}\abs{ \left( g^{-1}_i\cdot u_i - \widehat{g^{-1}}_i\cdot\widehat{u}_i\right)_{j,k}}
	&=\max_{0\leq j,k \leq 2^{n'}}\abs{\sum_{1\leq l\leq 2^{n'}}\left( g^{-1}_{i_{j,l}}\cdot u_{i_{l,k}} - \widehat{g^{-1}}_{i_{j,l}}\cdot\widehat{u}_{i_{l,k}}\right)}\\	
	&\leq 2^{n'}\cdot \max_{0\leq j,k,l \leq 2^{n'}}\abs{g^{-1}_{i_{j,k}}\cdot u_{i_{l,k}} - \widehat{g^{-1}}_{i_{j,k}}\cdot\widehat{u}_{i_{l,k}}}\\
	&\leq 2^{n'}\cdot \max_{0\leq j,k,l \leq 2^{n'}}\abs{g^{-1}_{i_{j,k}}\cdot u_{i_{l,k}} -g^{-1}_{i_{j,k}}\cdot \widehat{u}_{i_{l,k}} + g^{-1}_{i_{j,k}}\cdot \widehat{u}_{i_{l,k}} - \widehat{g^{-1}}_{i_{j,k}}\cdot\widehat{u}_{i_{l,k}}}\\
	&= 2^{n'}\cdot \max_{0\leq j,k,l \leq 2^{n'}}\abs{g^{-1}_{i_{j,k}}\left(u_{i_{l,k}} -\widehat{u}_{i_{l,k}}\right) + \widehat{u}_{i_{l,k}}\left(g^{-1}_{i_{j,k}} - \widehat{g^{-1}}_{i_{j,k}}\right)}\\
	&\leq 2^{n'}\cdot \left(\max_{0\leq j,k,l \leq 2^{n'}}\abs{ g^{-1}_{i_{j,k}}\left(u_{i_{l,k}} -\widehat{u}_{i_{l,k}} \right)} + \max_{0\leq j,k,l \leq 2^{n'}}\abs{ \widehat{u}_{i_{l,k}}\left(g^{-1}_{i_{j,k}} - \widehat{g^{-1}}_{i_{j,k}}\right)}\right)\\
	&\leq 2^{n'}\cdot \left(\max_{0\leq k,l \leq 2^{n'}}\abs{u_{i_{l,k}} -\widehat{u}_{i_{l,k}}} + \max_{0\leq j,k \leq 2^{n'}}\abs{g^{-1}_{i_{j,k}} - \widehat{g^{-1}}_{i_{j,k}}}\right)\\
	&\leq 2\cdot 2^{n'}\xi
	\end{align*}
	Where the fourth inequality follows from the fact that $u, g$ are unitaries. And so:
	\begin{align*}
	\left\vert tr\left( M'_{i-1}\cdot g^{-1}_i\cdot u_i\right)-tr\left( M'_{i-1}\cdot \widehat{g^{-1}}_i\cdot\widehat{u}_i\right) \right\vert
	&=\left\vert tr\left( M'_{i-1}\left( g^{-1}_i\cdot u_i - \widehat{g^{-1}}_i\cdot\widehat{u}_i\right)\right) \right\vert\\
	&=\left\vert \sum_{1\leq j,k \leq 2^{n'}}{M'_{i-1}}_{k,j}\cdot \left( g^{-1}_i\cdot u_i - \widehat{g^{-1}}_i\cdot\widehat{u}_i\right)_{j, k}\right\vert \\	
	&\leq 2^{2n'}\cdot 2^n \cdot \max_{0\leq j,k\leq 2^{n'}}\left\vert \left( g^{-1}_i\cdot u_i-\widehat{g^{-1}}_i\cdot\widehat{u}_i\right)_{j,k}\right\vert
	\leq 2\cdot 2^{3n'}\cdot 2^n \cdot \xi
	\end{align*}
	Where we used $\displaystyle\max_{0\leq j,k\leq 2^{n'}}\left\vert M_{i-1_{j,k}}\right\vert \leq 2^n$ (as explained in the protocol description) for the first inequality.
	\end{proof}
	
	\paragraph{Claim \ref{cl:tr(U'*0)-tr(U*0)}}
	$\left\vert tr(\widehat{U}_T\cdot \widehat{U}_{T-1}\cdots \widehat{U}_1 \cdot \vert0^n\rangle \langle 0^n \vert) - tr(U_T\cdot U_{T-1}\cdots U_1 \cdot \vert0^n\rangle \langle 0^n \vert)\right\vert\leq 2^{n}T\cdot\xi$
	
	\begin{proof}
	\begin{align*}
	\left\vert tr(\widehat{U}_T\cdots \widehat{U}_1 \cdot \vert0^n\rangle \langle 0^n \vert) - tr(U_T\cdots U_1 \cdot \vert0^n\rangle \langle 0^n \vert)\right\vert
	&=\left\vert\langle 0^n \vert \widehat{U}_T\cdots \widehat{U}_1- U_T\cdots U_1 \vert0^n\rangle\right\vert \\
	&\leq \left\Vert \widehat{U}_T\cdots \widehat{U}_1- U_T\cdots U_1 \right\Vert_{L_2}
	\leq T \displaystyle\max_{0\leq i\leq T}\left\Vert \widehat{U}_i - U_i\right\Vert_{Frob}
	\leq 2^{n}T\cdot\xi
	\end{align*}
Where we used a telescopic argument together with the fact that the fact the operator norm is bounded from above by the Frobenius norm for the second inequality.
	\end{proof}

\clearpage
	\section{Claim \ref{cl:bounded soundness} proof details}
	\label{app:C6}
Herein is the proof of Claim \ref{cl:prob. dist}.

\paragraph{Claim \ref{cl:prob. dist}}
let $\Delta$ an operator on $n'$ qubits:
\[
Pr_{u\backsim \widehat{D}(n')}\Big(\left\vert tr\left(\Delta\cdot u \right)\right\vert < \delta-2^{n'}n'\cdot6\cdot\xi\cdot\Vert\Delta\Vert_{Frob} \Big)
\leq Pr_{u\backsim D(n')}\Big(\left\vert tr\left(\Delta\cdot u \right)\right\vert < \delta \Big)+3n'\frac{\xi}{2\pi}
\]

\begin{proof}
To prove the claim, we remember that choosing a unitary $u\backsim U(2)$ is equivalent to choosing $\theta,\varphi_1,\varphi_2\backsim [0,2\pi)$ by defining:
\[
	u=\begin{pmatrix}   
	\cos\theta\cdot e^{i\varphi_1} & \sin\theta\cdot e^{i\varphi_2} \\
	-\sin\theta\cdot e^{-i\varphi_2} & \cos\theta\cdot e^{-i\varphi_1}
	\end{pmatrix}
\]
And $u\backsim U(2)\Leftrightarrow \theta,\varphi_1,\varphi_2\backsim [0,2\pi)$.\\

Choosing a unitary $u\backsim D(n')$ is then equivalent to choosing 
$\theta^1,\varphi^1_1,\varphi^1_2,\cdots,\theta^{n'},\varphi^{n'}_1,\varphi^{n'}_2\backsim [0,2\pi)$ for $u^1,\cdots,u^{n'}$, with $u=u^1\otimes\cdots\otimes u^{n'}$.\\

We also note that for any $\kappa\geq2\pi$, and any event $F$:
\[
Pr_{x\backsim[0,\kappa)}\Big(F\Big)
\leq Pr_{x\backsim[0,\kappa)}\Big(x\in[2\pi,\kappa)\Big)
+Pr_{\substack{
x \backsim [0,\kappa)\\
x \notin [2\pi,\kappa)
}}\Big(F\Big)=
\frac{\kappa-2\pi}{\kappa}+Pr_{x\backsim[0,2\pi)}\Big(F\Big)
\]
By applying the previous argument $3{n'}$ times we get:
\[
Pr_{\theta^1,\varphi^1_1,\varphi^1_2,\cdots,\theta^{n'},\varphi^{n'}_1,\varphi^{n'}_2\backsim[0,\kappa)}\Big(F\Big)
\leq 3{n'}\frac{\kappa-2\pi}{\kappa}+Pr_{\theta^1,\varphi^1_1,\varphi^1_2,\cdots,\theta^{n'},\varphi^{n'}_1,\varphi^{n'}_2\backsim[0,2\pi)}\Big(F\Big)
=3{n'}\frac{\kappa-2\pi}{\kappa} + Pr_{u\backsim D({n'})}\Big(F\Big)
\]
Setting $\displaystyle\kappa=\min_{\substack{i\in \mathbb{N}\\ i\cdot\xi\geq2\pi}}\left(i\cdot\xi\right)$, we have $\kappa-2\pi<\xi$ and so Claim \ref{cl:prob. dist} can be proven by showing:
\begin{equation}\label{temp3}
Pr_{u\backsim \widehat{D}({n'})}\Big(\left\vert tr\left(\Delta\cdot u \right)\right\vert < \delta-2^{{n'}}{n'}\cdot6\cdot\xi\cdot\Vert\Delta\Vert_{Frob} \Big)
\leq Pr_{\theta^1,\varphi^1_1,\varphi^1_2,\cdots,\theta^{n'},\varphi^{n'}_1,\varphi^{n'}_2\backsim[0,\kappa)}\Big(\left\vert tr\left(\Delta\cdot u \right)\right\vert < \delta \Big)
\end{equation}

Now, let us denote the rounding down of any value $v$ to within accuracy $\xi$ by $\ddot{v}$ (formally, $\ddot{v} = \displaystyle \max_{\substack{i\in \mathbb{N}\\ i\cdot\xi\leq v}}i\cdot\xi$). Assuming we generate a unitary $u$ by choosing $\theta^1,\varphi^1_1,\varphi^1_2,\cdots,\theta^{n'},\varphi^{n'}_1,\varphi^{n'}_2\backsim[0,\kappa)$, let us abuse notation and use $\ddot{u}$ to denote the unitary which is defined by $\ddot{\theta}^1,\ddot{\varphi}^1_1,\ddot{\varphi}^1_2,\cdots,\ddot{\theta}^{n'},\ddot{\varphi}^{n'}_1,\ddot{\varphi}^{n'}_2$. It is 
straightforward to see that $\ddot{u}\backsim\widehat{D}({n'})$, so (\ref{temp3}) will follow from proving:
\begin{equation}\label{temp4}
\vert tr(\Delta\cdot u)\vert=\delta\Rightarrow\vert tr(\Delta\cdot\ddot{u})\vert\geq\delta - 2^{{n'}}{n'}\cdot6\cdot\xi\cdot\Vert\Delta\Vert_{Frob}
\end{equation}
We note:
\[
\delta
=\vert tr(\Delta\cdot u)\vert
=\vert tr(\Delta(u-\ddot{u}))+tr(\Delta\cdot\ddot{u})\vert
\leq\vert tr(\Delta(u-\ddot{u}))\vert+\vert tr(\Delta\cdot\ddot{u})\vert
\]
\begin{equation}\label{temp5}
\Rightarrow\vert tr(\Delta\cdot\ddot{u})\vert
\geq\delta-\vert tr(\Delta(u-\ddot{u}))\vert
\end{equation}

Also, similarly to Claim \ref{cl:tr(A')-tr(A)}
\begin{equation}\label{temp6}
\vert tr(\Delta(u-\ddot{u}))\vert
\leq 2^{n'}\Vert\Delta(u-\ddot{u})\Vert 
\leq 2^{n'}\Vert\Delta\Vert\Vert u-\ddot{u}\Vert
\leq 2^{n'}\Vert\Delta\Vert_{Frob}\cdot {n'} \cdot \max_{1\leq i\leq {n'}} \Vert u^i-\ddot{u}^i\Vert
\end{equation}
And:
\begin{equation}\label{temp7}
\max_{1\leq i\leq {n'}} \Vert u^i-\ddot{u}^i\Vert
\leq\max_{1\leq i\leq {n'}} \Vert u^i-\ddot{u}^i\Vert_{Frob}
\leq\max_{0\leq\theta,\varphi<\kappa} 2\cdot\vert \cos\theta\cdot e^{i\varphi}-\cos\ddot{\theta}\cdot e^{i\ddot{\varphi}} \vert
\end{equation}

Ultimately, we have:
\[
\max_{0\leq\theta,\varphi<\kappa} \vert \cos\theta\cdot e^{i\varphi}-\cos\ddot{\theta}\cdot e^{i\ddot{\varphi}} \vert
=\max_{0\leq\theta,\varphi<\kappa} \sqrt{\left(\cos\theta \cos\varphi-\cos\ddot{\theta}\cos\ddot{\varphi}\right)^2+\left(\cos\theta \sin\varphi-\cos\ddot{\theta}\sin\ddot{\varphi}\right)^2} \\
\]
\begin{equation}\label{temp8}
\Rightarrow \max_{0\leq\theta,\varphi<\kappa} \vert \cos\theta\cdot e^{i\varphi}-\cos\ddot{\theta}\cdot e^{i\ddot{\varphi}} \vert
\leq \max_{0\leq\theta<\kappa} 2\sqrt{2}\vert\cos\theta-\cos\ddot{\theta}\vert
\leq \max_{0\leq\theta<\kappa} 2\sqrt{2}\vert\theta-\ddot{\theta}\vert
\leq 2\sqrt{2}\xi
\end{equation}

We conclude the proof by noting that Inequality \refeq{temp4} follows from the Inequalities \refeq{temp5}, \refeq{temp6}, \refeq{temp7}, \refeq{temp8}. 
\end{proof}

\end{document}